\renewcommand*\env@matrix[1][*\c@MaxMatrixCols c]{%
  \hskip -\arraycolsep
  \let\@ifnextchar\new@ifnextchar
  \array{#1}}
\newcommand{\be}{\begin{eqnarray}}
\newcommand{\ee}{\end{eqnarray}}
\newcommand{\ben}{\begin{enumerate}}
\newcommand{\een}{\end{enumerate}}
\newcommand{\beq}{\begin{equation}}
\newcommand{\eeq}{\end{equation}}
\newcommand{\beqa}{\begin{eqnarray*}}
\newcommand{\eeqa}{\end{eqnarray*}}
\newcommand{\bit}{\begin{itemize}}
\newcommand{\eit}{\end{itemize}}
\newcommand{\bt}{\begin{tabular}{c}}
\newcommand{\btt}{\begin{tabular}}
\newcommand{\et}{\end{tabular}}
\newtheorem{definition}{Definition}
\newtheorem{corollary}{Corollary}
\newtheorem{lemma}{Lemma}
\newtheorem{theorem}{Theorem}
\newtheorem{assumption}{Assumption}
\newtheorem{problem}{Problem}
\DeclareMathOperator*{\argmin}{arg\,min}
\DeclareMathOperator*{\argmax}{arg\,max}
\newcommand{\squishlist}{
   \begin{list}{$\bullet$}
    { \setlength{\itemsep}{0pt}      \setlength{\parsep}{0pt}
      \setlength{\topsep}{3pt}       \setlength{\partopsep}{0pt}
      \setlength{\listparindent}{-2pt}
      \setlength{\itemindent}{-5pt}
      \setlength{\leftmargin}{1em} \setlength{\labelwidth}{0em}
      \setlength{\labelsep}{0.5em} } }
\newcommand{\squishend}{
    \end{list}  }
\def\ps@IEEEtitlepagestyle{%
  \def\@oddfoot{\mycopyrightnotice}%
  \def\@evenfoot{}%
}
\def\mycopyrightnotice{%
  {\footnotesize The copyright is held by author/owner(s).}%
  \gdef\mycopyrightnotice{}
}
\begin{document}

\title{An Efficient and Fair Multi-Resource Allocation Mechanism for Heterogeneous Servers}

\author{
\IEEEauthorblockN{Jalal Khamse-Ashari\IEEEauthorrefmark{1}, Ioannis Lambadaris\IEEEauthorrefmark{1}, George Kesidis\IEEEauthorrefmark{2}, Bhuvan Urgaonkar\IEEEauthorrefmark{2} and Yiqiang Zhao\IEEEauthorrefmark{3}\\}
\IEEEauthorblockA{\IEEEauthorrefmark{1}Dept. of Systems and Computer Engineering, Carleton University, Ottawa, Canada\\
\IEEEauthorrefmark{2}School of EECS, Pennsylvania State University, State College, PA, USA\\
\IEEEauthorrefmark{3}School of Math and Statistics, Carleton University, Ottawa, Canada\\
Emails: \IEEEauthorrefmark{1}\{jalalkhamseashari,ioannis\}@sce.carleton.ca,~\IEEEauthorrefmark{2}\{gik2,buu1\}@psu.edu
~\IEEEauthorrefmark{3}zhao@math.carleton.ca}}

%

%
%
\maketitle

\begin{abstract}
Efficient and fair allocation of multiple types of resources is a crucial objective in a cloud/distributed computing cluster.
Users may have diverse resource needs. Furthermore, diversity in server properties/capabilities may mean that only a subset of servers may be usable by a given user. In platforms with such heterogeneity, we identify important limitations in existing multi-resource fair allocation mechanisms, notably Dominant Resource Fairness (DRF) and its follow-up work.
To overcome such limitations,
we propose a new \emph{server-based approach}; each server allocates resources by maximizing a per-server \emph{utility function}.
We propose a specific class of utility functions which, when appropriately parameterized,
adjusts the trade-off between efficiency and fairness, and captures a variety of fairness measures
(such as our recently proposed Per-Server Dominant Share Fairness).
We establish conditions for the proposed mechanism to satisfy certain properties that are generally deemed desirable, e.g.,
envy-freeness, sharing incentive, bottleneck fairness, and Pareto optimality.
To implement our resource allocation mechanism, we develop an iterative algorithm which is shown to be globally convergent.
Finally, we show how the proposed mechanism could be implemented in a distributed fashion.
We carry out extensive trace-driven simulations to show the enhanced performance of our proposed mechanism over the existing ones.
\end{abstract}


\section{Introduction}

Cloud computing has become increasingly popular as it provides a cost-effective alternative to proprietary high performance computing
systems. As the workloads to data-centers housing cloud computing platforms are intensively growing,
developing an efficient and fair allocation mechanism which guarantees quality-of-service for different workloads has become increasingly important.
Efficient and fair resource allocation in such a shared computing system is particularly challenging because of
(a) the presence of multiple types of resources, (b) diversity in the workloads' needs for these resources, (c) heterogeneity in the resource capacities of servers, and (d) placement constraints on which servers may be used by a workload.
In the following four paragraphs we briefly elaborate on each of these complexities.


The \emph{multi-resource needs} of cloud workloads imply that conventional single-resource oriented notions of fairness are inadequate \cite{DRF}.
Dominant Resource Fairness (DRF) is the first allocation mechanism which describes a notion of fairness for allocating multiple types of resources for a single server system. 
Using DRF users receive a \emph{fair share} of their \emph{dominant resource} \cite{DRF}.
Of all the resources requested by the user (for every unit of work called a \emph{task}), its dominant resource is the one with the highest demand when demands are expressed as fractions of the overall resource capacities. DRF is shown to achieve several properties that are
commonly considered desirable from a multi-resource \emph{fair} allocation mechanism.

\emph{Heterogeneity of workloads' resource demands} is another complexity which results in a trade-off between efficiency and fairness.
Specifically, heterogeneity of users' demands may preclude some resources from being fully utilized. Hence, the DRF allocation may result in a poor resource utilization even when there is only one server~\cite{Chiang12,Grandl,bonald2014}. To address this issue, \cite{Chiang12} proposed to allocate resources by applying the so-called $\alpha$-proportional fairness (instead of max-min fairness~\cite{DN}) on dominant shares. The proposed mechanism, when appropriately parameterized, 
adjusts the trade-off between efficiency and fairness.
However, it is applicable only to a \emph{single} server/resource-pool.


In the case of \emph{multiple heterogeneous servers}, there are several studies investigating/extending DRF allocation when there is \emph{no placement constraint}
\cite{CDRF,DRFH15,HUG}.
In all of these works, fairness is defined in terms of a \emph{global metric}, a scalar parameter defined in terms of different resources across all servers. E.g., \cite{DRFH15} presents an extension to DRF where the dominant resource for each user is identified as if all resources were concatenated at one server, and subsequently the resources are allocated by applying max-min fairness on the dominant shares.
Since such a global metric may not perfectly capture the impact of server heterogeneity, such approaches may lead to an inefficient resource utilization (see Section~\ref{sec:main:challenge} for further discussions and Section~\ref{sec:bg:psdsf} for an illustrative example).
Moreover, such mechanisms may not be readily implementable in a distributed fashion~\cite{zhu2015}, as each server needs information on the available resources over all servers. Such information may not be available at each server, especially in a cloud computing environment where the resource capacities (and even activity of servers) might be churning.

There are limited works in the literature investigating multi-resource fair allocation in the presence of \emph{user placement constraints} \cite{UDRF, TSF}.
In this case, it is yet unclear how to globally identify the dominant resource as well as the dominant share for different users, as each one may have access only to a subset of servers. Work in \cite{TSF} presents an extension to DRF identifying the user share by ignoring placement constraints and applying a similar approach as in an unconstrained setting. We show that this approach may not achieve fairness in the specific case that one of the resources serves as a bottleneck (see Section \ref{sec:main:challenge}).

In \cite{PSDSF} we proposed a multi-resource \emph{fair} allocation mechanism, called PS-DSF, which is applicable to heterogeneous servers in the presence of placement constraints.
The intuition behind PS-DSF is to capture the impact of server heterogeneity by measuring the total allocated resources to each user explicitly from the perspective of each server. Specifically, PS-DSF identifies a virtual dominant share (VDS) for each user {\em with respect to each server} (as opposed to a single system-wide dominant share in DRF).
The VDS for user $n$ with respect to server $i$ is defined as the ratio of $x_n$ - the total number of tasks allocated to user $n$ -
over the number of tasks executable by user $n$ when monopolizing server $i$.
Then the resources at each server are allocated by applying max-min fairness on VDS (see Section~\ref{sec:bg:psdsf} for a detailed discussion).
This approach is amenable to a distributed implementation. It results in an enhanced performance over the existing mechanisms,
and satisfies certain properties essential for fair allocation of resources \cite{PSDSF}.


%

\subsection{Contributions}
In this paper, we build upon and generalize our proposed PS-DSF allocation mechanism \cite{PSDSF}
to \emph{capture the trade-off between efficiency and fairness}. We concisely summarize our contributions.
\squishlist
\item
    We propose a new \emph{server-based formulation} (which includes PS-DSF as a special case) to allocate resources while capturing server heterogeneity. The new formulation can be viewed as a \emph{concave game} among different servers,
    where each server allocates resources by maximizing a \emph{per-server} utility function (Section~\ref{sec:main:frm}).
\item
    We study a specific class of utility functions which results in an extension of $\alpha$-proportional fairness on VDS.
    We show how the resulting allocation, which we call $\alpha$PF-VDS, captures the trade-off between efficiency and fairness by adjusting the  parameter $\alpha$. We show that $\alpha$PF-VDS satisfies bottleneck fairness, envy-freeness and sharing incentive properties (as defined in Section~\ref{sec:bg:drf}) for $\alpha\ge1$, and Pareto optimality for $\alpha=1$ (Section~\ref{sec:main:Def} and \ref{sec:main:properties}).
\item
    We develop a (centralized) convergent algorithm to implement our proposed mechanism.
    Towards this, we introduce an equivalent formulation 
    for which we derive an iterative solution (Section~\ref{sec:reformulate} and \ref{sec:algo:ctr}).
\item
    We propose a simple heuristic to develop a distributed implementation for our resource allocation mechanism (Section~\ref{sec:algo:dtr}).
\item We carry-out extensive simulations, driven by real-world traces, to show the enhanced performance of our proposed mechanism (Section~\ref{sec:eval}).
\squishend

\subsection{Related Work}
{\bf Resource allocation with a game-theoretic approach.}
There are several works in the literature which study the resource allocation problem in a cloud computing environment
with a game-theoretic approach~\cite{bredin2000, jalaparti2010, wei2010, xu2014, zhu2016, zhu2015EoF}.
Among these, \cite{bredin2000, jalaparti2010, wei2010} are limited to a \emph{single-resource} setting,
while \cite{xu2014, zhu2016, zhu2015EoF} consider a \emph{multi-resource} environment.
In these studies the multi-resource allocation problem is formulated as a \emph{game}, where players are different servers.
Since they choose DRF as the underlying notion of fairness, they will have the same limitations as DRF for heterogeneous servers
(see Section~\ref{sec:main:challenge} for a discussion of such limitations).
Moreover, some of them need to solve an extensive form game with a huge strategy set space, e.g., \cite{xu2014}, which may not be implementable in a distributed fashion.

{\bf Scheduling in the presence of placement constraints.}
There are some recent works investigating max-min fair allocation/scheduling for one type of resource while respecting placement constraints \cite{Ghodsi13,midrr,CM4FQ, Ashari17, Ashari2017j}. These single-resource schedulers could be useful in a multi-resource setting when one of the resources is dominantly requested by all users. Otherwise, they might result in a poor resource utilization \cite{DRF,Ghodsi13}.

\section{Model and background work}\label{sec:bg}
Consider a set $\mathcal{K}$ of ${K}$ heterogeneous servers/resource-pools\footnote{A resource-pool may consist of a group of \emph{homogeneous servers}.} each containing $M$ types of resources. We denote by $c_{i,r}\ge0$, the capacity (i.e., amount) of resource $r$ $(1,2,\cdots,M)$ on server $i$.
We make the reasonable assumption that all resources on each server are arbitrarily divisible among the users running on it.
Let $\mathcal{N}$ denote the set of $N$ active users.
Let $\phi_n>0$ denote the weight associated with user $n$. The weights reflect the priority of users with respect to each other.
Let ${\bf d}_n=[d_{n,r}]$ denote the per task \emph{demand vector} for user $n\in{\mathcal N}$,
i.e., the amount of each resource required for executing one task for user $n$.
Let $x_{n,i}\in\mathbb{R}^+$ denote the number of tasks that are allocated to user $n$ from server $i$.
Assuming linearly proportionate resource-needs\footnote{The assumption of linearly proportionate resource needs is admittedly an idealization. We are following convention set by DRF and used by follow up works.}, $x_{n,i}{\bf d}_n=[x_{n,i}d_{n,r}]$ gives the amounts of different resources demanded by user $n$ from server $i$.


Due to heterogeneity of users and servers, each user may be restricted to get service \emph{only from a subset of servers}.
For example, users may not run tasks on servers which lack some required resources.
Furthermore, each user may have some special hardware/software requirements
(e.g., public IP address, a particular kernel version, GPU, etc.)
which further restrict the set of servers that the user's tasks may run on.
Let $\mathcal{N}_i\neq\emptyset$ denote the set of \emph{eligible users} for server $i$.
The placement constraints imply that $x_{n,i}=0,~n\notin\mathcal{N}_i,~\forall i$.

For instance, consider the example in Fig.~\ref{fig:example1}, where three types of resources, CPU, RAM, and network bandwidth are available over two servers in the amounts of ${\bf c}_1=$[12 cores,~4GB,~75Mb/s] and ${\bf c}_2=$[8 cores,~16GB,~0Mb/s], where no communication bandwidth is available over the second server; four users with their corresponding demand vectors are also shown in the figure. In this example, the first two users require network bandwidth for execution of their tasks, so they are not \emph{eligible} to run tasks on the second server. However, the last two users 
may run tasks on both servers.

\begin{figure}
    \centering
\includegraphics[width=3 in]{./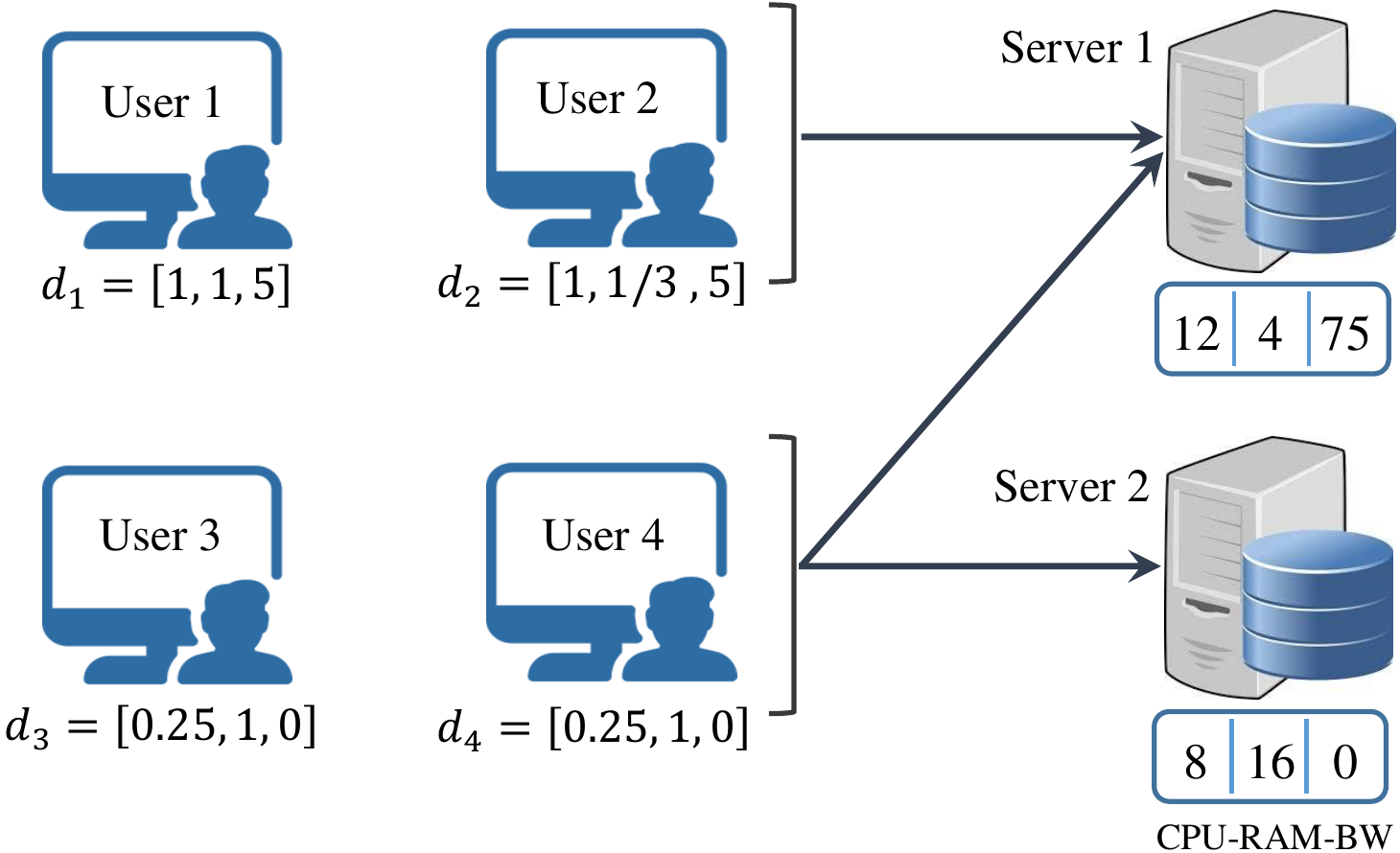}
\centering
\caption{\footnotesize A heterogeneous multi-resource system with two servers and four equally weighted users.}
\label{fig:example1}
\end{figure}

\subsection{Dominant resource fairness}\label{sec:bg:drf}
Multi-resource fair allocation was originally studied in \cite{DRF} under the assumption that all resources are aggregated at one resource-pool. Specifically, let $c_r$ denote the total capacity of resource $r$.
Let ${\bf a}_n=[a_{n,r}]$ denote the amounts of different resources allocated to user $n$ under some allocation mechanism. The utilization of user $n$ of its allocated resources, $U_n({\bf a}_n)$, is defined as the number of tasks,
$x_n$, which could be executed using ${\bf a}_n$, that is:
\be\label{utility}
U_n({\bf a}_n) \triangleq x_n = \min_r\frac{a_{n,r}}{d_{n,r}}.
\ee
In \cite{DRF} the following properties are deemed desirable for a multi-resource allocation mechanism.
\squishlist
\item \emph{Sharing incentive:}
    Each user is able to run more tasks compared to a \emph{uniform allocation} where each user $n$ is allocated a $\phi_n/\sum_m\phi_m$ fraction of each resource.
\item \emph{Envy freeness:} A user should not prefer the allocation vector of another user when adjusted according to their weights, i.e.,
    it should hold that $U_n({\bf a}_n)\ge U_n(\frac{\phi_n}{\phi_m}{\bf a}_m)$ for all $n,m$.
\item \emph{Bottleneck fairness:} If there is one resource which is \emph{dominantly requested by every user},
    then the allocation satisfies \emph{max-min fairness} for that resource.
\item \emph{Pareto optimality:} It should not be possible to increase the number of tasks $x_n$ for any user $n$, without decreasing $x_m$ for some other user(s).
\item \emph{Strategy proofness:} Users should not be able to increase their utilization by erroneously declaring their resource demands.
\squishend


The reader is referred to \cite{DRF} or \cite{DRF12} for further details.
Sharing incentive provides some sort of performance isolation,
as it guarantees a minimum utilization for each user
irrespective of the demands of the other users.
Envy freeness embodies the notion of fairness.
Bottleneck fairness describes a necessary condition which applies to a specific case that
one resource is dominantly requested by every user, so that a \emph{single-resource} notion of fairness is applicable.
These three properties are essential to achieve fairness.
So, we refer to them as \emph{essential fairness-related} properties.
Pareto optimality is a benchmark for maximizing system utilization.
Finally, strategy proofness prevents users from gaming the allocation mechanism.
In our view these properties are applicable mainly for private settings. In public settings, users pay explicit costs for their usage or allocations and the provider's goal is to maximize its profits subject to allocation guarantees for users. Even for private clouds, strategy proofness would only be necessary in settings where users act selfishly. In many private settings, users are cooperative and here strategy proofness is not needed. In view of this, we will not consider strategy proofness.

DRF is the first multi-resource allocation mechanism satisfying all the above properties.
Specifically, for every user $n$, the \emph{Dominant Resource} (DR) is defined as \cite{DRF}:
\be\label{DR}
\rho(n):=\argmax_rd_{n,r}/c_r,
\ee
that is, the resource whose greatest portion is required for execution of one task for user $n$.
The fraction of the DR that is allocated to user $n$ is defined as its \emph{dominant share}:
\be\label{DS}
s_n:=\frac{a_{n,\rho(n)}}{c_{\rho(n)}}.
\ee

Without loss of generality, we may restrict ourselves to non-wasteful allocations, i.e., ${\bf a}_n=x_n{\bf d}_n,~\forall n$.
Hence, an allocation $\{x_n\}$ is feasible when:
\begin{eqnarray}
&&\sum_nx_nd_{n,r}\le c_r,~\forall r.
\end{eqnarray}
\begin{definition}\label{DRF_Def}
An allocation $\{x_n\}$ satisfies \emph{DRF}, if it is feasible and the weighted dominant share for each user, $s_n/\phi_n$ cannot be increased while
maintaining feasibility without decreasing $s_m$ for some user $m$ with $s_m/\phi_m\le s_n/\phi_n$ \cite{DRF}.
\end{definition}

DRF is a restatement of \emph{max-min fairness} in terms of \emph{dominant shares}.
What make it appealing are the desirable properties which are satisfied under this allocation mechanism.

\subsection{Existing challenges with heterogeneous servers and placement constraints}\label{sec:main:challenge}
In case of heterogeneous servers (whether there are any placement constraints or not),
a natural approach to extend DRF is to identify a system-wide dominant resource for each user, \emph{as if} all resources were
concatenated within a \emph{single virtual server}. Specifically, let $c_r:=\sum_ic_{i,r}$ denote the total capacity of resource $r$ within such a virtual server. Then, one may identify the dominant resource for each user $n$ according to \eqref{DR}. Furthermore, the \emph{global dominant share} for user $n$ is given by:
\be
s_n=x_n\max_r\frac{d_{n,r}}{c_{r}},\label{dominant_share}
\ee
where $x_n$ is the total number of tasks that are allocated to user $n$ from different servers, that is $x_n:=\sum_ix_{n,i}$.
As in Definition~\ref{DRF_Def}, one may find an allocation $\{x_{n,i}\}$ which satisfies max-min fairness in terms of the global dominant shares~\cite{DRFH15}. Such as allocation mechanism, referred to as DRFH, is shown to achieve \emph{Pareto optimality} and \emph{envy freeness}. However, it \emph{fails} to provide \emph{sharing incentive}~\cite{DRFH15}.
We believe that the definition of bottleneck fairness employed by DRFH (with respect to a single virtual server that aggregates all resources) is also controversial. Specifically, if all users have the same dominant resource (with respect to the above mentioned virtual server), then DRFH satisfies max-min fairness with respect to such a resource~\cite{DRFH15}. 
In case of heterogeneous servers with placement constraints, however, one may consider other conditions under which a resource serves as a \emph{bottleneck}.


\begin{definition}\label{BF_def}
A resource $\rho$ is said to be a \emph{bottleneck} if for every server $i$:
\be
\frac{d_{n,\rho}}{c_{i,\rho}}\ge\frac{d_{n,r}}{c_{i,r}},~\forall r,~n\in\mathcal{N}_i.\label{BF_criteria}
\ee
If there exists a bottleneck resource, then the allocation should satisfy max-min fairness with respect to that resource.
\end{definition}

Unfortunately, DRFH does not satisfy bottleneck fairness in the sense of Definition~\ref{BF_def}.
To appreciate this shortcoming of the DRFH mechanism, consider the example in Fig.~\ref{fig:example1},
where the second resource (RAM) is dominantly requested by eligible users at each server.
According to Definition~\ref{BF_def}, RAM is identified as the bottleneck resource in this example.
To allocate the RAM resources in a fair manner, each user should be allocated $x_1=x_{1,1}=2$, $x_2=x_{2,1}=6$, $x_3=x_{3,2}=8$ and $x_4=x_{4,2}=8$ tasks, respectively (This allocation results from our proposed PS-DSF allocation mechanism~\cite{PSDSF}). On the other hand, the DRFH mechanism would instead identify network bandwidth as the dominant resource for the first two users and RAM as the dominant resource for the last two users. To achieve max-min fairness in terms of dominant shares, the DRFH mechanism allocates  $x_1=x_2=3$ and $x_3=x_4=8$ tasks to each user. Under such an allocation, the RAM resources are not allocated in a fair manner to the first two users.

Yet another extension of DRF, which applies to heterogeneous servers in the presence of placement constraints, is TSF~\cite{TSF}.
As in~\cite{TSF}, we let $\gamma_{n,i}$ denote the number of tasks that user $n$ may execute when monopolizing server $i$ (i.e., when $n$ is the only user). Let $\gamma_n:=\sum_i\gamma_{n,i}$ be defined as the number of tasks executable for user $n$
when monopolizing all servers as \emph{if} there were no placement constraints.
An allocation is said to satisfy Task Share Fairness (TSF), when $x_n/\gamma_n$ satisfies max-min fairness~\cite{TSF}.
When there is only one server, then $x_n/\gamma_n$ results in the dominant share for each user $n$.
In such case, TSF reduces to DRF. In case of heterogeneous servers with placement constraints,
TSF is shown to satisfy Pareto optimality, envy freeness and sharing incentive properties~\cite{TSF}.
However, we show by example that this mechanism may not 
satisfy bottleneck fairness (neither in the sense of Definition~\ref{BF_def},
nor in the conventional sense based on considering a single virtual server introduced above~\cite{PSDSF}).

For instance, consider again the example in Fig.~\ref{fig:example1},
where the second resource is identified as a bottleneck according to Definition~\ref{BF_def}.
The number of tasks that each user may run in the whole cluster is $\gamma_1=4$, $\gamma_2=12$,  and $\gamma_3=\gamma_4=4+16=20$ tasks, respectively.
Hence, each user is allocated $x_1=x_{1,1}=5/3$, $x_2=x_{2,1}=5$, $x_3=x_{3,1}+x_{3,2}=8+1/3=25/3$ and $x_4=x_{4,1}+x_{4,2}=8+1/3=25/3$ tasks, according to the TSF mechanism, which differs from the fair allocation in this example.


\subsection{Per-server dominant share fairness (PS-DSF)}\label{sec:bg:psdsf}
In this subsection, we describe PS-DSF which we introduced in \cite{PSDSF}.
PS-DSF is an extension to DRF which is applicable for heterogeneous servers in the presence of placement constraints.
The core idea of this mechanism is to introduce a ``\emph{virtual dominant share}" for every user,
with respect to each server.
Towards this, we first identify the dominant resource for every user $n$ with respect to each server $i$,
\be
\rho(n,i):=\argmax_r\frac{d_{n,r}}{c_{i,r}}.
\ee
Let $\gamma_{n,i}$ denote the number of tasks which could be executed by user $n\in\mathcal{N}_i$
when monopolizes server $i$,
\be
\gamma_{n,i}:=\min_r\frac{c_{i,r}}{d_{n,r}}=\frac{c_{i,\rho(n,i)}}{d_{n,\rho(n,i)}},~n\in\mathcal{N}_i.
\ee
It is assumed that $\gamma_{n,i}>0$ for all $n\in\mathcal{N}_i$. We set $\gamma_{n,i}=0$ if $n\notin\mathcal{N}_i$.
\begin{definition}\label{Def_VDS}
The Virtual Dominant Share (VDS) for user $n$ with respect to server $i$, $s_{n,i}$, is defined as:
\be
s_{n,i} := \frac{x_n}{\gamma_{n,i}} = \frac{x_nd_{n,\rho(n,i)}}{c_{i,\rho(n,i)}},\label{VDS}
\ee
where $x_n=\sum_jx_{n,j}$ is the \emph{total number of tasks} that are allocated to user $n$
(whether or not these tasks are actually allocated using server $i$).
\end{definition}
We have the following conditions on an allocation, ${\bf x}:=\{x_{n,i}\in\mathbb{R}^+\mid n\in\mathcal{N},~i\in\mathcal{K}\}$, to be feasible:
\be
&&\sum_{n\in\mathcal{N}_i}x_{n,i}d_{n,r}\le c_{i,r},~\forall i,r.\qquad\quad\label{FC1_1}\\
&& x_{n,i}=0,~n\notin\mathcal{N}_i,~\forall i. \label{FC1_2}
\ee
\begin{definition}\label{PS_DSF_Def}
An allocation ${\bf x}$ satisfies PS-DSF, if it is feasible and the allocated tasks to each user, $x_n$ cannot be increased (while maintaining feasibility) without decreasing $x_{m,i}$ for some user $m$ and server $i$ with $s_{m,i}/\phi_m\le s_{n,i}/\phi_n$.
\end{definition}

Intuitively, $s_{n,i}$ gives the normalized share of the dominant resource for user $n$ with respect to server $i$ which should be allocated to it as if $x_n$ tasks were allocated resources solely from server $i$ (see the right hand side of \eqref{VDS}). The reader may note that $s_{n,i}$ could be possibly greater than 1, as some tasks might be allocated to user $n$ from other servers. According to PS-DSF, the available resources at each server $i$ are allocated by applying (weighted) max-min fairness on $\{s_{n,i}\}$.
It can be seen that PS-DSF reduces to DRF when there is only one server.

\begin{figure}
    \centering
\includegraphics[width=0.99\columnwidth]{./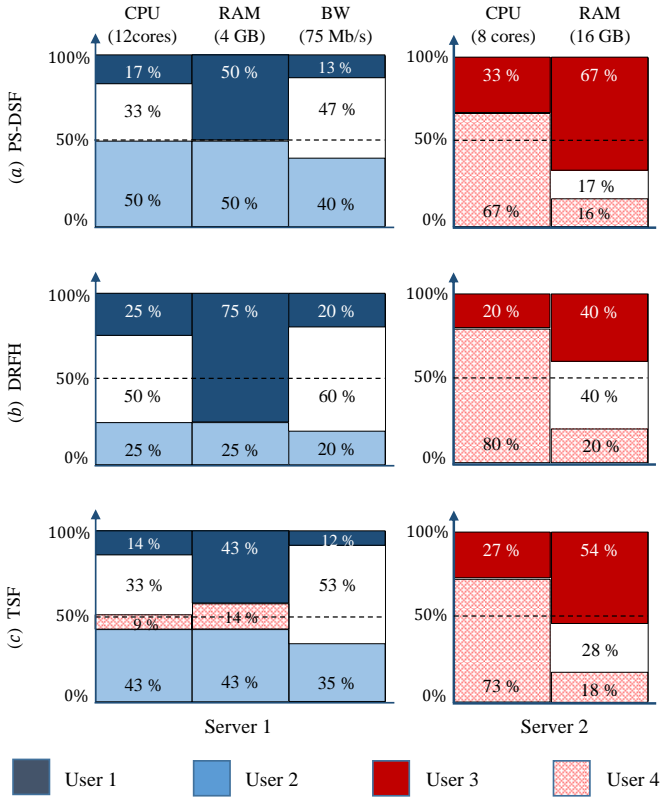}
\centering
\caption{\footnotesize Comparing the PS-DSF allocation with the DRFH and TSF allocations.
The PS-DSF allocation mechanism is more efficient in utilizing different resources.}
\label{fig:example2}
\end{figure}

To gain more intuition, consider again the example in Fig.~\ref{fig:example1}, but this time let $d_4=[1,0.5,0]$.
In this case, each user may run $\gamma_{1,1}=4$, $\gamma_{2,1}=12$, $\gamma_{3,1}=4$, $\gamma_{4,1}=8$ tasks when monopolizing server 1. The third and the fourth users each may run $\gamma_{3,2}=16$ and $\gamma_{4,2}=8$ tasks when monopolizing server 2.
In order to satisfy PS-DSF, each user should be allocated $x_1=x_{1,1}=2$, $x_2=x_{2,1}=6$, $x_3=x_{3,2}=32/3$ and $x_4=x_{4,2}=16/3$ tasks, respectively. Therefore, the VDS (c.f. Definition~\ref{Def_VDS}) for each user with respect to the first server is $s_{1,1}=s_{2,1}=0.5$, $s_{3,1}=8/3$ and $s_{4,1}=2/3$.
Also, the VDS for user 3 and 4 with respect to the second server is $s_{3,2}=s_{4,2}=2/3$.
The reader can verify that for each server $i$ the allocated tasks to any
user may not be increased without decreasing the allocated tasks to another user with a less or equal VDS.
The resulting PS-DSF allocation is shown in Fig.~\ref{fig:example2}.
The DRFH and TSF allocations for this example are also illustrated in Fig.~\ref{fig:example2}.
It can be seen that the PS-DSF allocation mechanism is more efficient in utilizing different resources compared to the DRFH and TSF mechanisms.

\begin{table}[ht]
\footnotesize
\centering
\captionsetup{name=Table }
\caption{\footnotesize Properties of different allocation mechanisms in case of heterogeneous servers with placement constraints: sharing~incentive (SI), envy~freeness (EF), Pareto optimality (PO), and bottleneck fairness (BF).}
\label{table1}
\begin{tabular}{|c||c|c|c|}
\hline
Property      &      ~~DRFH         &     ~~TSF~~        &       PS-DSF\\
\hline
\hline
SI            &                     &  $\checkmark$  &      $\checkmark$\\
\hline
EF            &   $\checkmark$      &  $\checkmark$  &      $\checkmark$\\
\hline
PO            &   $\checkmark$      &  $\checkmark$  &      \\
\hline
BF            &                     &                &      $\checkmark$\\
\hline
\end{tabular}
\end{table}


The reader may note that PS-DSF does not satisfy Pareto optimality in general.
It is worth noting that Pareto optimality may not also be satisfied in other works, e.g., \cite{zhu2016, zhu2015EoF},
which aim at developing a distributed implementation for DRFH. PS-DSF not only is amenable to distributed implementation (as we show in~\cite{PSDSF}),
but also may lead to more efficient utilization of resources compared to the DRFH and TSF mechanisms~\cite{PSDSF} (as also can be observed in Fig.~\ref{fig:example2}, or in the trace-driven simulations in Section~\ref{sec:eval}).
The intuitive reason for this is that each of the DRFH and TSF allocation mechanisms allocates resources based on a \emph{global metric}.
Since a global metric throws away information about the actual distribution of resources across servers, approaches based on it may not perfectly capture the impact of server heterogeneity, and therefore may lead to an inefficient resource utilization in heterogeneous settings.



In summary, PS-DSF has been shown to satisfy the essential fairness-related properties, i.e., envy-freeness, sharing incentive and bottleneck fairness, has been observed to offer highly efficient utilization of resources,
and is amenable to distributed implementation~\cite{PSDSF}. 


\section{A server-based approach for multi-resource allocation}\label{sec:main}
As already discussed, 
in most of the existing multi-resource allocation mechanisms,
fairness is defined in terms of a global metric,
a scalar parameter defined for each user in terms of different resources across all servers.
Such mechanisms may not succeed in satisfying all the essential fairness-related properties (c.f. Section~\ref{sec:main:challenge}), may not readily be implementable in a distributed fashion, and may lead to inefficient resource utilization.
In this section, we propose a new formulation for multi-resource allocation problem which is based on a \emph{per-server metric} (as opposed to a global metric) for different users, so that server heterogeneity is captured. The proposed allocation mechanism is built upon our proposed PS-DSF allocation mechanism~\cite{PSDSF}, which was briefly described in the previous section. It generalizes PS-DSF in order to address the trade-off between efficiency and fairness. Furthermore, it inherits all the properties that are satisfied by PS-DSF.

\subsection{Problem formulation}\label{sec:main:frm}
As defined in Section~\ref{sec:bg:psdsf}, the VDS is a per-server metric
which gives a measure of the allocated resources to each user from the perspective of each server.
According to PS-DSF, the available resources at each server are allocated by applying \emph{max-min fairness}
on VDS.  In order to address the trade-off between efficiency and fairness,
we may choose to allocate resources at each server by applying the so-called \emph{$\alpha$-proportional fairness} \cite{Walrand} on VDS.
To this end, we propose a general formulation, where each server $i$ strives to maximize a ``\emph{per-server utility function}".
Specifically, each server $i$ tries to find an allocation ${\bf x}_i :=[x_{n,i}]$ which solves the following problem\footnote{The utility of each server, as defined by \eqref{P1_1}, depends on its-own allocation/action, ${\bf x}_i$, as well as actions taken by other servers, ${\bf x}_{-i}:=\{{\bf x}_j\mid j\neq i\}$. This is the standard notation used in the context of game theory.}.
\begin{problem}\label{P1} For every server $i$:
\vspace{-1mm}
\be
&& \max_{{\bf x}_i} U_i({\bf x}_i,{\bf x}_{-i}):=\sum_{n\in\mathcal{N}_i}\phi_ng_i\left(\frac{x_n}{\phi_n\gamma_{n,i}}\right)\label{P1_1}\quad\\
&& \text{Subject to:}~\sum_{n\in\mathcal{N}_i}x_{n,i}d_{n,r}\le c_{i,r},~\forall r,\label{P1_2}\\
&& \text{~~~~~~~~~~~~~~}x_{n,i}\ge0,~\forall n\in\mathcal{N}_i,\label{P1_3}\\
&& \text{~~~~~~~~~~~~~~}x_{n,i} = 0,~\forall n\notin\mathcal{N}_i,\label{P1_4}
\ee
where $x_n=\sum_jx_{n,j}$, and $g_i(\cdot)$, as can be also seen in \cite{Walrand}, is a scalar function, which is twice-differentiable, strictly concave, and increasing.
\end{problem}

\begin{definition}
An allocation ${\bf x}$ is said to be \emph{feasible} if it satisfies the feasibility conditions in \eqref{P1_2}-\eqref{P1_4} for all servers.
\end{definition}

In Section \ref{sec:main:Def}, we will present specific choices for $g_i(\cdot)$,
which capture the \emph{trade-off between efficiency and fairness},
and span a variety of allocations,
including the so-called \emph{proportional fair allocation}, and the PS-DSF allocation.

Solving Problem~\ref{P1} concurrently over different servers is a game,
where each server strives to maximize its-own utility.
In fact, Problem~\ref{P1} describes a \emph{concave game} whose players are different servers.
It is well-known that a Nash Equilibrium\footnote{A feasible allocation $({\bf x}^*_i,{\bf x}^*_{-i})$ is a Nash Equilibrium if no unilateral deviation in action by any single server/player is profitable for that server. That is, $\forall i,~\forall \text{ (feasible) } {\bf x}_i:~ U_i({\bf x}^*_i,{\bf x}^*_{-i})\ge U_i({\bf x}_i,{\bf x}^*_{-i})$}
(NE) always exists for such a concave game~\cite{Rosen}.
Further discussions on the structure of the solution set (Nash equilibriums), and 
some conditions governing uniqueness of the solution,
will be described in Section~\ref{sec:reformulate}.

\subsection{$\alpha$-proportional fairness on virtual dominant shares}\label{sec:main:Def}
At the optimal solution(s) to Problem~\ref{P1},
 \emph{not all} capacity constraints may be active.
In fact, there exist trade-offs between efficiency and fairness,
which depend on the specific choice of $g_i(\cdot)$.
To capture the trade-off between efficiency and fairness,
one may choose $g_i(\cdot)$ from the class of $\alpha$-fair utility functions~\cite{Walrand}.
Specifically, we choose $g_i(z)$ such that $g'_i(z)=z^{-\alpha}$, for some fixed parameter $\alpha$.
For this class of utility functions, the optimal solution to Problem~\ref{P1} satisfies an extension of $\alpha$-proportional fairness in terms of virtual dominant shares, which we call ``\emph{$\alpha$-Proportional Fairness on VDS}", or in short $\alpha$PF-VDS.

\begin{definition}
A feasible allocation, ${\bf x}$, satisfies $\alpha$PF-VDS, if for every feasible allocation ${\bf y}$, and for every server $i$:
\be
\sum_{n\in\mathcal{N}_i}\frac{(y_{n,i}-x_{n,i})/\gamma_{n,i}}{\tilde{s}_{n,i}^\alpha}\le0\label{fair_criteria},
\ee
where $\tilde{s}_{n,i}:=s_{n,i}/\phi_n=x_n/\gamma_{n,i}\phi_n$ is the weighted VDS for user $n$ with respect to server $i$.
\end{definition}

\begin{theorem}\label{Th_basic}
Let $g_i(z)$ be from the class of $\alpha$-fair
utility functions with $g'_i(z) = z^{-\alpha}$, $\alpha > 0$.
A feasible allocation, ${\bf x}$, is a solution to Problem~\ref{P1} if and only if it satisfies $\alpha$PF-VDS.
\end{theorem}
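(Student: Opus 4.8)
The plan is to recognize Problem~\ref{P1} as a concave game and to characterize its solutions server by server via the standard first-order (variational-inequality) optimality condition for concave maximization, which will turn out to coincide \emph{verbatim} with~\eqref{fair_criteria}. First I would fix a server $i$ and freeze the other servers' actions ${\bf x}_{-i}$. Regarded as a function of ${\bf x}_i=[x_{n,i}]_{n\in\mathcal{N}_i}$, the utility $U_i({\bf x}_i,{\bf x}_{-i})=\sum_{n\in\mathcal{N}_i}\phi_n g_i\big(\tfrac{x_n}{\phi_n\gamma_{n,i}}\big)$ is separable: its $n$-th summand depends on ${\bf x}_i$ only through the single coordinate $x_{n,i}$, since $x_n=x_{n,i}+\sum_{j\neq i}x_{n,j}$ and ${\bf x}_{-i}$ is fixed. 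Each summand is a (strictly) concave, increasing function of $x_{n,i}$ because $g_i$ is, and the feasible set cut out by~\eqref{P1_2}--\eqref{P1_4} is a nonempty compact convex polytope. Hence server $i$'s subproblem is the maximization of a concave differentiable function over a convex set.

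For such a problem, ${\bf x}_i$ is optimal if and only if $\sum_{n\in\mathcal{N}_i}\frac{\partial U_i}{\partial x_{n,i}}(y_{n,i}-x_{n,i})\le 0$ for every ${\bf y}_i$ feasible for server $i$. I would then compute $\frac{\partial U_i}{\partial x_{n,i}}=\phi_n g_i'(\tilde{s}_{n,i})\cdot\frac{1}{\phi_n\gamma_{n,i}}=\frac{g_i'(\tilde{s}_{n,i})}{\gamma_{n,i}}$ and substitute $g_i'(z)=z^{-\alpha}$ to get $\frac{\partial U_i}{\partial x_{n,i}}=\frac{1}{\gamma_{n,i}\,\tilde{s}_{n,i}^{\alpha}}$; inserting this into the first-order condition reproduces~\eqref{fair_criteria} for server $i$. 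Finally I would match quantifiers: a global allocation ${\bf y}$ is feasible exactly when each restriction ${\bf y}_i$ obeys~\eqref{P1_2}--\eqref{P1_4}, and conversely any such ${\bf y}_i$ extends to a globally feasible allocation by keeping ${\bf x}_j$ on all $j\neq i$; since $\gamma_{n,i}$ is a constant and $\tilde{s}_{n,i}$ depends only on ${\bf x}$ (not on ${\bf y}$), ``\eqref{fair_criteria} holds at server $i$ for all feasible ${\bf y}$'' is equivalent to ``${\bf x}_i$ solves server $i$'s subproblem given ${\bf x}_{-i}$''. Conjoining over all $i\in\mathcal{K}$ gives precisely: ${\bf x}$ is a solution to Problem~\ref{P1} iff it satisfies $\alpha$PF-VDS.

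The delicate point --- and the step I expect to need the most care --- is the singularity of $g_i'(z)=z^{-\alpha}$ as $z\to 0^+$, i.e.\ the case $x_n=0$ for some user $n$. For $\alpha\ge1$ this is vacuous: $g_i(z)\to-\infty$, so any allocation of finite utility (in particular any maximizer) has $x_n>0$ for all $n$ --- a fully positive feasible allocation exists since $\mathcal{N}_i\neq\emptyset$ and $\gamma_{n,i}>0$, e.g.\ by allocating a small equal amount to each eligible user on each server --- placing us in the relative interior where the gradient is finite and the argument above applies unchanged. For $0<\alpha<1$, $g_i$ extends continuously to $z=0$ but with infinite one-sided derivative; here I would state the first-order condition through the one-sided directional derivatives / normal cone of the concave objective, under the convention that the $n$-th term of~\eqref{fair_criteria} equals $+\infty$ when $\tilde{s}_{n,i}=0$ (which is consistent, since it forces $x_{n,i}>0$ at any optimum for which that is feasible), or simply invoke the standing assumption that the allocations under consideration have all $x_n>0$. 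Modulo this bookkeeping, the proof is the elementary KKT/variational characterization of concave maximization over a polytope.
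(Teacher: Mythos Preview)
Your proof is correct and follows essentially the same approach as the paper: compute $\partial U_i/\partial x_{n,i}=1/(\gamma_{n,i}\,\tilde{s}_{n,i}^{\alpha})$ and identify~\eqref{fair_criteria} with the first-order (variational) optimality condition for server $i$'s concave subproblem, with concavity of $U_i$ in ${\bf x}_i$ providing the converse. You are in fact more careful than the paper in two respects---handling the singularity at $x_n=0$ and matching the global quantifier ``for every feasible ${\bf y}$'' to the per-server feasibility of ${\bf y}_i$---both of which the paper's proof passes over silently.
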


The proof is given in the appendix.
The following theorem, again proven in the appendix, describes how $\alpha$PF-VDS is related to other notions of fairness.

\begin{theorem}\label{Th_psdsf}
The $\alpha$PF-VDS allocation is weighted proportionally fair\footnote{An allocation, ${\bf x}$,
is weighted proportionally fair if it is feasible, and if for any other feasible allocation, ${\bf y}$,
the weighted summation of proportional changes is not positive, i.e., $\sum_n\phi_n\frac{{y}_n-x_n}{x_n}\le0$ (see \cite{kelly1998rate}).} for $\alpha=1$, and approaches a PS-DSF allocation as $\alpha\rightarrow\infty$.
\end{theorem}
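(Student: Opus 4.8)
The plan is to prove the two assertions separately, in each case starting from the variational characterization of Theorem~\ref{Th_basic}: a feasible ${\bf x}$ is an $\alpha$PF-VDS allocation if and only if \eqref{fair_criteria} holds for every feasible ${\bf y}$ and every server $i\in\mathcal{K}$. The case $\alpha=1$ is short. Here $g_i=\log$, so any solution of Problem~\ref{P1} assigns $x_n>0$ to every user $n$ that is eligible at some server (otherwise the utility $U_i$ of such a server would be $-\infty$; we take every active user to be eligible somewhere), and all ratios below are well defined. Since $1/\tilde{s}_{n,i}=\gamma_{n,i}\phi_n/x_n$ when $\alpha=1$,
\[
\frac{(y_{n,i}-x_{n,i})/\gamma_{n,i}}{\tilde{s}_{n,i}}=\phi_n\,\frac{y_{n,i}-x_{n,i}}{x_n},
\]
so \eqref{fair_criteria} reduces to $\sum_{n\in\mathcal{N}_i}\phi_n(y_{n,i}-x_{n,i})/x_n\le0$ for each server $i$. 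Summing over $i\in\mathcal{K}$, replacing each inner sum by a sum over all of $\mathcal{N}$ (the added terms vanish since $x_{n,i}=y_{n,i}=0$ for $n\notin\mathcal{N}_i$ by \eqref{P1_4}), and interchanging the order of summation yields $\sum_{n\in\mathcal{N}}\phi_n(y_n-x_n)/x_n\le0$; as ${\bf y}$ is arbitrary, ${\bf x}$ is weighted proportionally fair.

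For $\alpha\to\infty$ I read ``approaches a PS-DSF allocation'' as: every limit point of a family of $\alpha$PF-VDS allocations, as $\alpha\to\infty$, is a PS-DSF allocation (the natural analogue of $\alpha$-fairness tending to max-min fairness). The feasible set \eqref{P1_2}--\eqref{P1_4} is a compact polytope, so for a family $\{{\bf x}^{(\alpha)}\}$ I pick $\alpha_k\to\infty$ with ${\bf x}^{(\alpha_k)}\to{\bf x}^*$, and then $\tilde{s}^{(\alpha_k)}_{n,i}=x^{(\alpha_k)}_n/(\gamma_{n,i}\phi_n)\to\tilde{s}^*_{n,i}$. The crux is a \emph{per-server extremal property} of ${\bf x}^*$: for every feasible ${\bf z}$ and server $i$, if some user is strictly increased at $i$ (some $z_{n,i}>x^*_{n,i}$) then some user $m$ is strictly decreased at $i$ with $\tilde{s}^*_{m,i}\le\max\{\tilde{s}^*_{n,i}:z_{n,i}>x^*_{n,i}\}$. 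To obtain it, apply \eqref{fair_criteria} at ${\bf x}^{(\alpha_k)}$ with test ${\bf z}$; writing $a_n:=(z_{n,i}-x^{(\alpha_k)}_{n,i})/\gamma_{n,i}$ and splitting into $a_n>0$ and $a_n<0$, the largest $\tilde{s}$ carries the smallest weight $\tilde{s}^{-\alpha_k}$ on the positive side and the smallest $\tilde{s}$ the largest weight on the negative side, so \eqref{fair_criteria} gives
\[
\Bigl(\frac{\min_{n\,:\,a_n<0}\tilde{s}^{(\alpha_k)}_{n,i}}{\max_{n\,:\,a_n>0}\tilde{s}^{(\alpha_k)}_{n,i}}\Bigr)^{\alpha_k}\ \le\ \frac{\sum_{n\,:\,a_n<0}(-a_n)}{\sum_{n\,:\,a_n>0}a_n}\,.
\]
The right side stays bounded (the polytope is bounded, $\gamma_{n,i}>0$, and the denominator is bounded below for large $k$ because $a_n\to(z_{n,i}-x^*_{n,i})/\gamma_{n,i}>0$ on the increased set), while the base on the left converges; since $\alpha_k\to\infty$ that base cannot exceed $1$, and passing to a further subsequence on which the finitely many sign patterns are constant yields the stated property.

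It then remains to deduce PS-DSF. Suppose, for contradiction, ${\bf x}^*$ violated Definition~\ref{PS_DSF_Def}: there would be a user $n_0$ and a feasible ${\bf y}$ with $y_{n_0}>x^*_{n_0}$ and $y_{m,i}\ge x^*_{m,i}$ whenever $\tilde{s}^*_{m,i}\le\tilde{s}^*_{n_0,i}$. Pick a server $i_0$ with $y_{n_0,i_0}>x^*_{n_0,i_0}$ (hence $n_0\in\mathcal{N}_{i_0}$) and set ${\bf y}'$ equal to ${\bf y}$ except $y'_{n,i_0}:=\min(y_{n,i_0},x^*_{n,i_0})$ for $n\ne n_0$; this only lowers allocations at $i_0$, so ${\bf y}'$ is feasible, $y'_{n_0}=y_{n_0}>x^*_{n_0}$, and $n_0$ is the unique user increased at $i_0$. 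The per-server extremal property at $i_0$ applied to ${\bf y}'$ then produces an $m\ne n_0$ with $y'_{m,i_0}<x^*_{m,i_0}$ and $\tilde{s}^*_{m,i_0}\le\tilde{s}^*_{n_0,i_0}$, whence $y_{m,i_0}<x^*_{m,i_0}$, contradicting the assumed property of ${\bf y}$; so ${\bf x}^*$ satisfies PS-DSF. The step I expect to require the most care is establishing the extremal property in precisely this \emph{strict}-decrease form: at finite $\alpha_k$, \eqref{fair_criteria} controls only the sign of $z_{n,i}-x^{(\alpha_k)}_{n,i}$, which for users with $z_{n,i}=x^*_{n,i}$ (or $x^*_{n,i}=0$) need not match the sign of $z_{n,i}-x^*_{n,i}$ in the limit, so the limiting argument by itself yields only $z_{m,i}\le x^*_{m,i}$ and a comparison against a possibly enlarged index set. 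Closing this gap (for instance by a further infinitesimal perturbation of the test allocation that strictly lowers only users $m$ with $\tilde{s}^*_{m,i}>\tilde{s}^*_{n_0,i}$, together with the observation that ${\bf x}^*$ is per-server efficient, since no user at a server can be raised using pure slack without violating \eqref{fair_criteria} for large $\alpha_k$) is the technical heart of the proof.
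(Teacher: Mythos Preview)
Your proposal follows essentially the same route as the paper. For $\alpha=1$ the argument is identical: substitute $\tilde{s}_{n,i}=x_n/(\phi_n\gamma_{n,i})$ into \eqref{fair_criteria} and sum over servers. For $\alpha\to\infty$ the paper likewise extracts a convergent subsequence by compactness, applies \eqref{fair_criteria} at finite $\alpha_l$, and compares powers of ratios $\tilde s_{n,i}/\tilde s_{m,i}$ to conclude that any increase at server $i$ forces a decrease of some user with no larger weighted VDS there; the only cosmetic difference is that the paper singles out one user $n$ and divides through by its term, whereas you split into positive and negative parts and bound via the extremal ratio.

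The technical concern you flag at the end---that the signs of $z_{m,i}-x_{m,i}^{(\alpha_k)}$ need not agree with those of $z_{m,i}-x^*_{m,i}$, so the limiting statement is not immediately a \emph{strict} decrease at a user with $\tilde s^*_{m,i}\le\tilde s^*_{n,i}$---is real, and the paper's proof simply does not address it: it writes $h_{m,i}$ in terms of $x_{m,i}$ without distinguishing $x_{m,i}(\alpha_l)$ from $x_{m,i}(\infty)$ and then reads off the conclusion at $\infty$. So on this point you are being more careful than the paper rather than diverging from it; your suggested fix (perturb the test allocation and use per-server efficiency of the limit) is a reasonable way to close the gap both arguments share.
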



Consider again the example in Fig.~\ref{fig:example1}, but let $d_4=[1,0.5,0]$.
In this example, $\alpha$PF-VDS results in the same allocation at server $1$, for every $\alpha>0$.
In other words, the $\alpha$PF-VDS allocation coincides with the PS-DSF allocation at server 1, for every $\alpha>0$.
The reason is that RAM is dominantly requested by both of users 1 and 2 which are allocated resources using server~1 (see Corollary~\ref{Coro_universal}).
The $\alpha$PF-VDS allocation for server 2 is depicted in Fig.~\ref{fig:example3}, for $\alpha=1$ (proportional fair allocation), $\alpha=3$, and $\alpha=\infty$ (PS-DSF allocation). It can be observed that the proportional fair allocation is more efficient in utilizing different resources. However, eligible users for this server tend to get the same portion of their respective dominant resources as $\alpha\rightarrow\infty$.
This shows how $\alpha$-PF-VDS captures the tradeoff between efficiency and fairness.

\begin{figure}
    \centering
\includegraphics[width=0.98\columnwidth]{./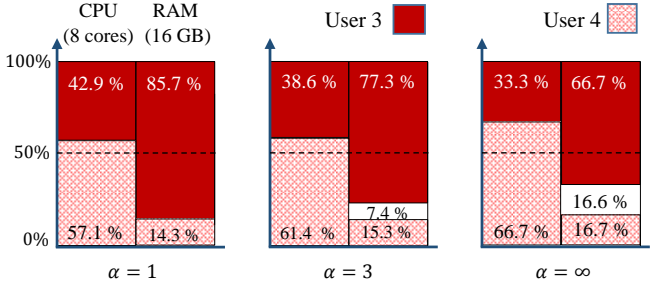}
\centering
\caption{\footnotesize
An illustration of how the $\alpha$PF-VDS allocation mechanism
can be parameterized (via $\alpha$) to capture the tradeoff between efficiency and fairness.}
\label{fig:example3}
\vspace{-2mm}
\end{figure}

\subsection{The properties of the $\alpha$PF-VDS allocation mechanism}\label{sec:main:properties}
In this section, we investigate different properties which are satisfied by the $\alpha$PF-VDS mechanism.
In case of heterogeneous servers with placement constraints,
we need to extend the notion of sharing incentive property.
The notion of bottleneck fairness has been extended by Definition~\ref{BF_def}.
Other properties, Pareto optimality and envy freeness follow the same definitions as described in Section~\ref{sec:bg}.
To generalize the sharing incentive property, consider a \emph{uniform allocation}, where a fraction $\phi_n/\sum_m\phi_m$ of the available resources over each server (whether this server is eligible or not) is allocated to each user $n$.
An allocation is said to satisfy \emph{sharing incentive}, when each user is able to run more tasks compared to such a uniform allocation.

\begin{theorem}\label{Th_Properties}
The $\alpha$PF-VDS allocation mechanism satisfies envy-freeness and sharing incentive properties for every $\alpha\ge1$.
It also satisfies Pareto optimality for $\alpha=1$.
\end{theorem}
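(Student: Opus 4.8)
The plan is to establish each of the three properties separately, in each case exploiting the variational characterization of Theorem~\ref{Th_basic} (the $\alpha$PF-VDS inequality \eqref{fair_criteria}) together with the relation to proportional fairness from Theorem~\ref{Th_psdsf}.

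\textbf{Envy-freeness.} First I would fix two users $n,m$ and show $U_n({\bf a}_n)\ge U_n(\tfrac{\phi_n}{\phi_m}{\bf a}_m)$, i.e. $x_n\ge\tfrac{\phi_n}{\phi_m}x_m$. The idea is to pick a server $i$ that actually allocates resources to user $m$, and use the first-order condition at the optimum of Problem~\ref{P1} on that server. Since $g_i'(z)=z^{-\alpha}$ and $g_i$ is strictly concave and increasing, the KKT conditions say that for each $n\in\mathcal{N}_i$ with $x_{n,i}>0$, the ``marginal reward per unit of dominant resource'' $\tilde{s}_{n,i}^{-\alpha}/\gamma_{n,i}\cdot(\text{price adjustment})$ is equalized; more concretely, active users on server $i$ share a common Lagrange-multiplier bound, so the weighted VDS $\tilde s_{n,i}$ is ``balanced'' across users served there. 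One then argues that if user $n$ envied user $m$, we could perturb the allocation at server $i$ — shifting a little resource toward $n$ away from $m$ — and strictly increase $U_i$, contradicting optimality. The case $\alpha\ge1$ is where the algebra works out: the exponent $-\alpha$ makes the inequality $\tilde s_{n,i}\le\tilde s_{m,i}$ (for the relevant pair) translate into $x_n/\phi_n\ge x_m/\phi_m$ after accounting for the $1/\gamma_{n,i}$ factors; for $\alpha<1$ the direction of monotonicity in the perturbation argument can fail, which is why envy-freeness is only claimed for $\alpha\ge1$. I expect the careful bookkeeping here — relating the per-server balance condition to the \emph{global} task counts $x_n=\sum_j x_{n,j}$ — to be the main obstacle, since a user may be served on several servers.

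\textbf{Sharing incentive.} Here I would compare against the uniform allocation $\hat{\bf x}$ in which user $n$ gets a $\phi_n/\sum_m\phi_m$ fraction of every resource on every server, yielding $\hat x_n=\sum_i \tfrac{\phi_n}{\sum_m\phi_m}\gamma_{n,i}$ tasks (note $\hat{\bf x}$ is feasible and respects placement constraints because $\gamma_{n,i}=0$ for $i\notin\mathcal{N}_i$... for $n\notin\mathcal{N}_i$). The goal is $x_n\ge\hat x_n$ for all $n$. The natural route is to plug ${\bf y}=\hat{\bf x}$ into \eqref{fair_criteria}, summing over all servers $i$, and also write the analogous inequality with the roles reversed is not available — instead I would use that $\hat{\bf x}$ allocates to each user exactly the ``fair'' fraction, so $\hat x_n/\gamma_{n,i} = \phi_n/\sum_m\phi_m$ independent of $i$. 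Substituting, \eqref{fair_criteria} becomes, after summing over $i$, a statement of the form $\sum_i\sum_{n\in\mathcal N_i}(\text{const}\cdot\phi_n - x_{n,i})/(\gamma_{n,i}\tilde s_{n,i}^\alpha)\le 0$, and one manipulates this — using $\sum_i x_{n,i}=x_n$ and $\tilde s_{n,i}^{-\alpha}=(x_n/\phi_n\gamma_{n,i})^{-\alpha}$ — to conclude $x_n\ge\hat x_n$. Again the monotonicity needed to close the chain of inequalities requires $\alpha\ge1$; I would flag the step where one passes from the aggregated inequality to the per-user bound as the place convexity/the sign of $\alpha-1$ is used, possibly invoking a Jensen- or rearrangement-type argument.

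\textbf{Pareto optimality for $\alpha=1$.} This should be the cleanest part: by Theorem~\ref{Th_psdsf}, the $\alpha=1$ allocation is weighted proportionally fair, i.e. $\sum_n\phi_n(y_n-x_n)/x_n\le0$ for every feasible ${\bf y}$. If the allocation were not Pareto optimal there would be a feasible ${\bf y}$ with $y_n\ge x_n$ for all $n$ and $y_m>x_m$ for some $m$; then every term $\phi_n(y_n-x_n)/x_n\ge0$ and the $m$-th term is strictly positive, so the sum is strictly positive — contradiction. I would just need to check the edge case $x_n=0$, handled by noting such a user gets zero tasks under any weighted-proportionally-fair allocation only if no feasible allocation gives them positive utility, or by the usual convention of dropping zero-allocation users. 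So the overall structure is: two perturbation/variational arguments off of \eqref{fair_criteria} for envy-freeness and sharing incentive (with $\alpha\ge1$ entering through a monotonicity/convexity inequality that is the crux), and a short contradiction argument from proportional fairness for Pareto optimality.
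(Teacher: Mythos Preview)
Your Pareto-optimality argument for $\alpha=1$ is fine and matches the paper's reasoning. The other two parts, however, each contain a genuine gap.

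\textbf{Envy-freeness.} You restate the target as ``$x_n\ge\frac{\phi_n}{\phi_m}x_m$'', but that is not what envy-freeness says. Since ${\bf a}_m=x_m{\bf d}_m$ carries user $m$'s demand vector, $U_n\!\big(\tfrac{\phi_n}{\phi_m}{\bf a}_m\big)=\tfrac{\phi_n}{\phi_m}x_m\min_r\tfrac{d_{m,r}}{d_{n,r}}$, so envy of $m$ by $n$ means $\tfrac{x_n d_{n,r}}{\phi_n}<\tfrac{x_m d_{m,r}}{\phi_m}$ for \emph{every} resource $r$. Your perturbation sketch (``shift resource from $m$ to $n$ at a server where $m$ is served'') cannot close without this per-resource structure: it is precisely the interplay between the Lagrange prices $\sum_r\lambda_{j,r}d_{n,r}$ and $\sum_r\lambda_{j,r}d_{m,r}$ that lets one compare $\tilde s_{n,j}$ with $\tilde s_{m,j}$. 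The paper fixes $j$ with $x_{m,j}>0$, combines the KKT equality for $m$ with the KKT inequality for $n$, and obtains $\tilde s_{n,j}g_j'(\tilde s_{n,j})<\tilde s_{m,j}g_j'(\tilde s_{m,j})$; for $g_j'(z)=z^{-\alpha}$ and $\alpha\ge1$ this forces $\tilde s_{n,j}>\tilde s_{m,j}$ (or $1<1$ when $\alpha=1$), contradicting the envy inequality at $r=\rho(n,j)$. With the wrong target $x_n\ge\tfrac{\phi_n}{\phi_m}x_m$ you would be trying to prove something that is in fact false in general.

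\textbf{Sharing incentive.} Plugging the uniform allocation $\hat{\bf x}$ into \eqref{fair_criteria} and summing over servers yields only the aggregate bound $\sum_n\phi_n(\hat x_n-x_n)/x_n\le0$ (for $\alpha=1$ this is literally the proportional-fairness inequality), and an aggregate inequality of this type does \emph{not} imply $x_n\ge\hat x_n$ for each $n$: some users could be over-served and others under-served while the sum stays nonpositive. No Jensen- or rearrangement-type step will recover a per-user bound from this, because the variational inequality mixes all users. The paper avoids this by working from the KKT conditions to obtain, for \emph{each} user $n$ and \emph{each} server $i$ separately, the inequality $\tilde s_{n,i}^{-1}\le\sum_{m}\phi_m\,x_{m,i}/x_m$ (this is where $\alpha\ge1$ enters, via multiplying by $\tilde s_{n_i^*,i}^{\,\alpha-1}$ with $n_i^*=\arg\min_n\tilde s_{n,i}$). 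Summing that per-user inequality over $i$ for a fixed $n$ then gives $x_n\ge\frac{\phi_n}{\sum_m\phi_m}\sum_i\gamma_{n,i}$ directly. The key missing idea in your outline is obtaining an inequality indexed by the individual user before summing over servers.
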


In case that all resources are integrated at one server,
\cite{Chiang12} has proposed an allocation mechanism which applies
$\alpha$-proportional Fairness on Dominant Shares (FDS).
It can be observed that $\alpha$PF-VDS reduces to
$\alpha$-proportional FDS when there exists only one server.
In \cite{Chiang12} it is shown that sharing incentive and envy freeness properties are not necessarily satisfied under $\alpha$-proportional FDS when $\alpha<1$. Hence, $\alpha$PF-VDS may also violate sharing incentive and/or envy freeness properties for $\alpha<1$.

The last property that we consider in this section is bottleneck fairness.
According to Definition~\ref{BF_def}, a resource is considered as a \emph{bottleneck in the whole system} if it is dominantly requested by eligible users at each server.
For the resulting allocation of Problem~\ref{P1}, we may show bottleneck fairness in a more general \emph{sense}.
Specifically, assume that there exists one resource $\rho(i)$ at each server $i$ for which the inequality in $\eqref{BF_criteria}$ is satisfied when substituting $\rho$ with $\rho(i)$. We refer to $\rho(i)$ as the \emph{bottleneck resource at server $i$}.

\begin{theorem}\label{Th_Universal_Allocation}
Assume there exists a bottleneck resource at each server.
An allocation, ${\bf x}$, is a solution to Problem~\ref{P1} if and only if it satisfies PS-DSF.
\end{theorem}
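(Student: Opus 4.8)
The plan is to route through Theorem~\ref{Th_basic}: since a feasible allocation solves Problem~\ref{P1} if and only if it satisfies $\alpha$PF-VDS, it suffices to prove that, when a bottleneck resource $\rho(i)$ exists at every server, $\alpha$PF-VDS and PS-DSF describe exactly the same allocations (and, as a byproduct, $\alpha$PF-VDS ceases to depend on $\alpha$). The first step is to simplify feasibility. Because $\rho(i)$ attains the maximum in \eqref{BF_criteria} for every $n\in\mathcal{N}_i$, we have $\gamma_{n,i}=c_{i,\rho(i)}/d_{n,\rho(i)}$, so the capacity constraint \eqref{P1_2} for $r=\rho(i)$ is exactly $\sum_{n\in\mathcal{N}_i}x_{n,i}/\gamma_{n,i}\le1$; and multiplying \eqref{BF_criteria} by $x_{n,i}\ge0$ and summing shows this single inequality implies \eqref{P1_2} for every other $r$. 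Hence feasibility reduces to $x_{n,i}\ge0$, the placement constraints, and $\sum_{n\in\mathcal{N}_i}x_{n,i}/\gamma_{n,i}\le1$ for each $i$; in particular the feasible set is a product over servers of scaled simplices.

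Next I would characterize $\alpha$PF-VDS on this reduced feasible set. With ${\bf x}$ fixed, \eqref{fair_criteria} depends on ${\bf y}$ only through ${\bf y}_i$, which can be any point of the simplex $S_i:=\{y_{n,i}\ge0:\sum_{n\in\mathcal{N}_i}y_{n,i}/\gamma_{n,i}\le1\}$; so $\alpha$PF-VDS holds iff, for every $i$, ${\bf x}_i$ maximizes the linear functional ${\bf y}_i\mapsto\sum_{n\in\mathcal{N}_i}(y_{n,i}/\gamma_{n,i})\,\tilde{s}_{n,i}^{-\alpha}$ over $S_i$. At such an allocation every $x_m>0$ (otherwise a coefficient is $+\infty$ and the functional is unbounded on $S_i$), so all coefficients are finite and strictly decreasing in $\tilde{s}_{n,i}$. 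A positive linear functional is maximized over a simplex precisely at the points that saturate the budget and are supported on the coordinates of largest coefficient; thus, independently of $\alpha>0$, ${\bf x}$ satisfies $\alpha$PF-VDS iff for every server $i$: (a) $\sum_{n\in\mathcal{N}_i}x_{n,i}/\gamma_{n,i}=1$, and (b) $x_{n,i}>0$ implies $\tilde{s}_{n,i}=\min_{m\in\mathcal{N}_i}\tilde{s}_{m,i}$.

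It then remains to show (a)\,$\wedge$\,(b) $\Leftrightarrow$ PS-DSF. For the forward implication: to increase $x_n=\sum_i x_{n,i}$ one must increase some $x_{n,i}$ with $n\in\mathcal{N}_i$; by (a) the budget at $i$ is saturated, so feasibility forces a decrease of some $x_{m,i}>0$, and by (b) this $m$ satisfies $\tilde{s}_{m,i}=\min_{m'}\tilde{s}_{m',i}\le\tilde{s}_{n,i}$ -- exactly the kind of decrease the PS-DSF definition permits -- so PS-DSF holds. For the converse, if (a) fails at some $i$ one increases any $x_{n,i}$, $n\in\mathcal{N}_i$, raising $x_n$ while decreasing nothing, violating PS-DSF; if (b) fails, choose $n\in\mathcal{N}_i$ with $x_{n,i}>0$ and $k\in\mathcal{N}_i$ attaining $\min_m\tilde{s}_{m,i}<\tilde{s}_{n,i}$, and apply the budget-preserving shift $x_{k,i}\mapsto x_{k,i}+\epsilon\gamma_{k,i}$, $x_{n,i}\mapsto x_{n,i}-\epsilon\gamma_{n,i}$ for small $\epsilon>0$, which raises $x_k$ while the only decreased component $x_{n,i}$ has $\tilde{s}_{n,i}>\tilde{s}_{k,i}$, again violating PS-DSF. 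Combining this equivalence with the characterization above and Theorem~\ref{Th_basic} yields the theorem.

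I expect the main obstacle to be the bookkeeping in the last step: one must check that each constructed perturbation remains feasible (which relies on the feasibility reduction of the first step) and that the set of task counts it decreases contains only users with \emph{strictly} larger weighted VDS than the user whose total is raised, while also disposing of the degenerate cases $x_{n,i}=0$ and $x_m=0$ so that the exponents $\tilde{s}_{n,i}^{-\alpha}$ are well defined. Conceptually, the key realization is that the bottleneck hypothesis turns each server's feasible region into a simplex, which collapses the entire $\alpha$-family of allocations onto the single max-min (PS-DSF) allocation.
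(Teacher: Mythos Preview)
Your argument is internally sound, but it proves less than Theorem~\ref{Th_Universal_Allocation} claims. By routing through Theorem~\ref{Th_basic} you silently restrict $g_i$ to the $\alpha$-fair family $g'_i(z)=z^{-\alpha}$, whereas Problem~\ref{P1} --- and hence Theorem~\ref{Th_Universal_Allocation} --- is stated for \emph{any} twice-differentiable, strictly concave, increasing $g_i$. This generality is not cosmetic: the remark following the theorem stresses that the conclusion applies to fairness notions ``including'' (not limited to) the $\alpha$PF-VDS variants, and Theorem~\ref{Th_non_divisible_servers}, whose hypothesis names an arbitrary such $g_i$, is proved by declaring that ``the same arguments apply'' as in Theorem~\ref{Th_Universal_Allocation}. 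So as written your proof leaves a gap in scope.

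The repair is minor and preserves your structure. Instead of Theorem~\ref{Th_basic}, invoke the first-order optimality condition for the concave per-server problem directly: a feasible ${\bf x}$ is a Nash equilibrium iff, for every $i$ and every feasible ${\bf y}_i$, $\sum_{n\in\mathcal{N}_i}(y_{n,i}-x_{n,i})\,g'_i(\tilde s_{n,i})/\gamma_{n,i}\le 0$. Your simplex reduction is unchanged, and since $g'_i$ is strictly positive (because $g_i$ is increasing) and strictly decreasing (because $g_i$ is strictly concave), your ``support on the largest-coefficient coordinates'' step again yields (a) budget saturation and (b) $x_{n,i}>0\Rightarrow\tilde s_{n,i}=\min_{m\in\mathcal{N}_i}\tilde s_{m,i}$, now for every admissible $g_i$. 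The paper reaches the same two conditions via the KKT system with a single multiplier $\lambda_i$ for the collapsed constraint and the inverse $h=(g'_i)^{-1}$; your linear-functional-over-a-simplex view is a cleaner packaging of that same computation.
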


The proof appears in the appendix.
Under the conditions in Theorem~\ref{Th_Universal_Allocation}, different notions of fairness (including different variants of $\alpha$PF-VDS for $g_i(z)=z^{-\alpha},~\alpha>0$) coincide with PS-DSF. Hence, the resulting allocation of Problem~\ref{P1} satisfies max-min fairness with respect to the bottleneck resource at each server~\cite{PSDSF}. The following corollary follows directly from the proof of Theorem~\ref{Th_Universal_Allocation}.


\begin{corollary}\label{Coro_universal}
If there exists a bottleneck resource at server $i$,
then the resulting allocation of Problem~\ref{P1} satisfies max-min fairness with respect to the bottleneck resource at this server.
\end{corollary}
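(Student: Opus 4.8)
The plan is to localize, to the single server $i$, the argument behind Theorem~\ref{Th_Universal_Allocation}. First I would unpack the hypothesis: if $\rho(i)$ is a bottleneck resource at server $i$ in the sense of \eqref{BF_criteria}, then $\rho(n,i)=\rho(i)$ for every $n\in\mathcal{N}_i$, so $\gamma_{n,i}=c_{i,\rho(i)}/d_{n,\rho(i)}$ and the weighted VDS is $\tilde s_{n,i}=x_n d_{n,\rho(i)}/(\phi_n c_{i,\rho(i)})$. The structural point is that among the capacity constraints \eqref{P1_2} at server $i$, the one for $\rho(i)$ dominates all others: dividing $\sum_n x_{n,i}d_{n,r}\le c_{i,r}$ by $c_{i,r}$ and using \eqref{BF_criteria} in the form $d_{n,r}/c_{i,r}\le d_{n,\rho(i)}/c_{i,\rho(i)}$, one gets $\sum_n x_{n,i}d_{n,r}/c_{i,r}\le\sum_n x_{n,i}d_{n,\rho(i)}/c_{i,\rho(i)}$, so $\sum_n x_{n,i}/\gamma_{n,i}\le1$ already implies feasibility for every resource $r$. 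Hence, for fixed actions ${\bf x}_{-i}$ of the other servers, server $i$'s feasible set in Problem~\ref{P1} collapses to $\{x_{n,i}\ge0:\ \sum_{n\in\mathcal{N}_i}x_{n,i}/\gamma_{n,i}\le1,\ x_{n,i}=0\ \forall n\notin\mathcal{N}_i\}$.

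Next I would substitute $u_{n,i}:=x_{n,i}/\gamma_{n,i}$ and $\beta_n:=\sum_{j\neq i}x_{n,j}/\gamma_{n,i}\ge0$, the latter being a constant from server $i$'s viewpoint. Server $i$'s subproblem \eqref{P1_1}--\eqref{P1_4} becomes: maximize $\sum_{n\in\mathcal{N}_i}\phi_n g_i\big((u_{n,i}+\beta_n)/\phi_n\big)$ over $u_{n,i}\ge0$ with $\sum_{n\in\mathcal{N}_i}u_{n,i}\le1$. Since $g_i$ is increasing, every maximizer saturates this constraint; since $g_i$ is strictly concave, $g_i'$ is strictly decreasing, and the KKT conditions of this single-constraint program are: there is $\lambda>0$ with $g_i'\big((u_{n,i}+\beta_n)/\phi_n\big)\le\lambda$ for all $n\in\mathcal{N}_i$, with equality whenever $u_{n,i}>0$ (and $\lambda>0$ because $g_i'>0$ rules out an inactive constraint). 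Setting $\theta:=(g_i')^{-1}(\lambda)$, this is precisely the water-filling characterization: $\tilde s_{n,i}=(u_{n,i}+\beta_n)/\phi_n=\theta$ for every user with $x_{n,i}>0$, and $\tilde s_{n,i}=\beta_n/\phi_n\ge\theta$ for every eligible user with $x_{n,i}=0$.

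Finally I would read off the conclusion. The share of the bottleneck resource $\rho(i)$ that server $i$ allocates to user $n$ is $x_{n,i}d_{n,\rho(i)}/c_{i,\rho(i)}=u_{n,i}$, and $s_{n,i}$ depends on $x_{n,i}$ only through $u_{n,i}$ (plus the frozen additive offset $\beta_n\phi_n$). The characterization above says exactly that one cannot increase $x_{n,i}$ -- equivalently $u_{n,i}$, equivalently $s_{n,i}$ -- for any user $n$ without decreasing $x_{m,i}$, hence $s_{m,i}$, for some user $m$ with $\tilde s_{m,i}\le\tilde s_{n,i}$; i.e., the allocation of $\rho(i)$ at server $i$ is weighted max-min fair, in the PS-DSF sense of Definition~\ref{PS_DSF_Def} restricted to server $i$ (the offsets $\beta_n$ encoding the tasks $n$ receives elsewhere). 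This is the very same chain of equalities that appears inside the proof of Theorem~\ref{Th_Universal_Allocation}, now invoked at server $i$ only, which is why the corollary follows directly from that proof.

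\textbf{Main obstacle.} The crux is the reduction in the first paragraph: verifying that \eqref{BF_criteria} makes the $\rho(i)$-capacity constraint subsume every other capacity constraint at server $i$, so that server $i$'s subproblem is genuinely a one-dimensional water-filling problem in the $u_{n,i}$; once this is in place, the rest is the standard strictly-concave water-filling argument plus a change of notation. A secondary subtlety to flag explicitly is that ``max-min fairness with respect to the bottleneck resource at server $i$'' must be read in the PS-DSF sense -- balancing the weighted VDS $\tilde s_{n,i}$, which folds in the tasks $x_n$ gets from other servers through $\beta_n$ -- rather than as a naive single-server max-min split of $\rho(i)$ among the $x_{n,i}$ alone.
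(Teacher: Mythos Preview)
Your proposal is correct and follows essentially the same route as the paper: the corollary is stated as an immediate byproduct of the proof of Theorem~\ref{Th_Universal_Allocation}, and you have reproduced exactly that localization --- collapsing server $i$'s feasible set to $\sum_{n\in\mathcal{N}_i}x_{n,i}/\gamma_{n,i}\le1$ via \eqref{BF_criteria}, then reading off the water-filling/KKT structure $\tilde s_{n,i}=(g_i')^{-1}(\lambda_i)$ for active users and $\tilde s_{m,i}\ge(g_i')^{-1}(\lambda_i)$ otherwise. Your explicit remark that ``max-min fairness with respect to the bottleneck resource'' is to be read in the PS-DSF sense (balancing $\tilde s_{n,i}$, which carries the offsets $\beta_n$ from other servers) is a useful clarification that the paper leaves implicit.
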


\subsection{Extensions}\label{sec:main:ext}
\noindent{\bf Non-divisible servers:}
Problem~\ref{P1} is formulated based on the assumption that the available resources over each server are arbitrarily divisible. For a data-center comprising of a plurality of servers, it is sometimes of practical interest to assume that servers
may not be divided to finer partitions~\cite{Ghodsi13}, so that each server may only be time-shared by different users. In this case, let $x_{n,i}$ denote the average number of tasks that are executed by server $i$ for user $n$ per unit of time. Accordingly, $x_{n,i}/\gamma_{n,i}$ gives the percentage of time unit that server $i$ is allocated to user $n$. Hence, we have the following condition on an allocation, ${\bf x}$, to be feasible~\cite{PSDSF}:
\be
\sum_{n\in\mathcal{N}_i}\frac{x_{n,i}}{\gamma_{n,i}}\le 1,~\forall i.
\ee
\begin{theorem}\label{Th_non_divisible_servers}
Let $g_i(z)$ be a continuously differentiable, strictly concave and increasing function. An allocation, ${\bf x}$,
satisfies PS-DSF if and only if it is a solution to the following problem.
\begin{problem}\label{P2} For every server $i$:
\be
&& \max_{{\bf x}_i }\sum_{n\in\mathcal{N}_i}\phi_ng_i\left(\frac{x_n}{\phi_n\gamma_{n,i}}\right)\label{P2_1}\qquad\qquad\\
&& \text{Subject to:}~\sum_{n\in\mathcal{N}_i}\frac{x_{n,i}}{\gamma_{n,i}}\le 1,\label{P2_2}\\
&& \text{~~~~~~~~~~~~~~}x_{n,i}\ge0,~\forall n\in\mathcal{N}_i,\label{P2_3}\\
&& \text{~~~~~~~~~~~~~~}x_{n,i}=0,~\forall n\notin\mathcal{N}_i.\label{P2_4}
\ee
\end{problem}
\end{theorem}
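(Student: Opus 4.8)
The plan is to prove the two directions of the equivalence separately, exploiting the fact that in the non-divisible (time-shared) model each server's feasible region is a simplex rather than a polytope cut out by $M$ resource inequalities. First I would note that the constraint $\sum_{n\in\mathcal{N}_i} x_{n,i}/\gamma_{n,i}\le 1$ is equivalent to $\sum_{n\in\mathcal{N}_i} \tilde{s}_{n,i}^{\,\rm local}\le 1$ after the natural change of variables $t_{n,i}:=x_{n,i}/\gamma_{n,i}$ (the ``time fraction'' server $i$ devotes to user $n$), so each Problem~\ref{P2} is a smooth concave maximization over a scaled probability simplex. I would then write the KKT conditions: introducing a multiplier $\lambda_i\ge0$ for \eqref{P2_2} and $\mu_{n,i}\ge0$ for \eqref{P2_3}, stationarity gives $g_i'\!\left(x_n/(\phi_n\gamma_{n,i})\right) = \lambda_i - \mu_{n,i}\gamma_{n,i}$ for each $n\in\mathcal{N}_i$, together with complementary slackness. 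Because $g_i$ is strictly concave and increasing, $g_i'$ is strictly decreasing and positive; hence $\lambda_i>0$, the simplex constraint is tight at optimum, and for the users actually served ($x_{n,i}>0$, hence $\mu_{n,i}=0$) the quantity $g_i'(\tilde{s}_{n,i})$ is \emph{equalized} across them, while for unserved users $g_i'(\tilde{s}_{n,i})\le\lambda_i$, i.e. their weighted VDS is at least as large. This is exactly the max-min structure: at each server the served users all sit at a common ``marginal-utility level,'' and any user left out already has a weakly larger $\tilde{s}_{n,i}$.

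For the direction ``solution of Problem~\ref{P2} $\Rightarrow$ PS-DSF,'' I would argue by contradiction using Definition~\ref{PS_DSF_Def}: suppose some user's total $x_n$ could be increased while keeping feasibility, at the expense of decreasing $x_{m,i}$ only for users with $\tilde{s}_{m,i}\le\tilde{s}_{n,i}$. Translating this perturbation into the $U_i$ objectives and using the equalized/dominated marginal-utility characterization above, the first-order change in the total utility $\sum_i U_i$ along that perturbation would be strictly positive (the user whose share grows has marginal utility at least that of every user whose share shrinks, with strict inequality somewhere because $x_n$ strictly increases), contradicting optimality of each $U_i$ — more carefully, I would run this argument server-by-server, since the perturbation touches each server's simplex independently and each $U_i$ must be individually maximal at a solution of Problem~\ref{P2}. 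For the converse, ``PS-DSF $\Rightarrow$ solution of Problem~\ref{P2},'' I would show that a PS-DSF allocation satisfies the KKT conditions of each Problem~\ref{P2}: the max-min property over $\{\tilde{s}_{n,i}\}$ on the simplex forces the served users to share a common value of $\tilde{s}_{n,i}$ (or a common ``water level'' in the weighted sense) and the excluded users to have larger $\tilde{s}$, which is precisely what is needed to exhibit a valid $\lambda_i$ and multipliers $\mu_{n,i}$; concavity of $g_i$ then makes these KKT conditions sufficient for global optimality of the concave program. I can lean on Theorem~\ref{Th_basic} and Theorem~\ref{Th_psdsf} for the analogous bookkeeping in the divisible case and adapt it, the only change being the replacement of the $M$ resource constraints by the single simplex constraint.

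The main obstacle I anticipate is handling the coupling through $x_n=\sum_j x_{n,j}$: the objective $U_i$ depends on the \emph{total} allocation $x_n$, not just on server $i$'s own contribution $x_{n,i}$, so ``a solution to Problem~\ref{P2}'' must be read as a Nash equilibrium of the concave game across servers (as in Problem~\ref{P1}), and I must make sure the per-server KKT analysis is performed holding $x_{-i}$ fixed, then reconciled globally. The delicate point is that a server-by-server perturbation argument for PS-DSF is legitimate only if decreasing $x_{m,i}$ at server $i$ does not force compensating changes elsewhere that would violate another server's constraint — but here the feasible regions decouple completely (server $i$'s simplex involves only $\{x_{n,i}\}_n$), so this worry dissolves, and the remaining work is just the careful translation between the max-min ordering in Definition~\ref{PS_DSF_Def} and the equalized-marginal-utility form of the KKT conditions. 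I would also need a brief remark that $g_i$ need only be continuously differentiable (not twice differentiable) here, since the simplex constraint makes the second-order analysis of Problem~\ref{P1} unnecessary — strict concavity plus $C^1$ suffices for the KKT characterization to be both necessary and sufficient.
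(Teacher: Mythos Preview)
Your approach is essentially the paper's: write the KKT conditions for each server's simplex-constrained problem, observe that strict monotonicity of $g_i'$ forces $\lambda_i>0$ (so the simplex is tight) and that served users share a common value $g_i'(\tilde s_{n,i})=\lambda_i$ while unserved users have $g_i'(\tilde s_{n,i})\le\lambda_i$, then translate this equalized-marginal-utility structure to and from Definition~\ref{PS_DSF_Def}. The paper does exactly this, noting that the resulting KKT system is identical to \eqref{KKT3_1}--\eqref{KKT3_3} in the proof of Theorem~\ref{Th_Universal_Allocation}, so that proof carries over verbatim. Your remark that the $M$ resource constraints collapse to a single simplex constraint is precisely the reduction the paper uses.

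One genuine slip to fix: in your forward direction you state the contradiction hypothesis as ``$x_n$ can be increased at the expense of decreasing $x_{m,i}$ only for users with $\tilde s_{m,i}\le\tilde s_{n,i}$.'' That is not the negation of PS-DSF; the negation is that $x_n$ can be increased while decreasing only users with $\tilde s_{m,i}>\tilde s_{n,i}$. With your stated hypothesis, the users who shrink have \emph{larger} marginal utility (since $g_i'$ is decreasing), so the first-order change in $U_i$ need not be positive and the contradiction does not close. If you flip the inequality to the correct negation, the argument repairs; alternatively, bypass the perturbation entirely and argue directly from the KKT structure you already derived (as the paper does): tightness of the simplex means any increase at server $i$ forces a decrease of some $x_{n,i}>0$, and every such $n$ has the minimal $\tilde s_{n,i}=h(\lambda_i)$, which is exactly the PS-DSF conclusion.
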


Theorem~\ref{Th_non_divisible_servers} implies that there exists an \emph{ideal} allocation for non-divisible servers, which results from Problem~\ref{P2} for any arbitrary $g_i(z)$, provided that $g_i(z)$ is continuously differentiable, strictly concave, and increasing. In fact, different notions of fairness (including different variants of $\alpha$PF-VDS, PS-DSF ($\alpha\rightarrow\infty$) and proportional fairness ($\alpha=1$))
coincide in this case. The major advantage of the PS-DSF allocation mechanism (or $\alpha$PF-VDS in general) is giving a simple per-server criterion to find such an ideal allocation~\cite{PSDSF}.

\noindent{\bf Servers with heterogeneous objectives:}
In Section~\ref{sec:main:Def}, we chose $g_i(z)$ from the class of $\alpha$-fair utility functions
with $g'_i(z)=z^{-\alpha}$, for some $\alpha$ that is fixed for all servers.
In general, different objectives may be followed for resource allocation at different servers.
For instance, in a cloud computing environment, some servers may be owned by a group of users. For such proprietary servers, there might be more emphasis on \emph{fair} allocation of the resources among proprietors (eligible users). On the other hand, there might be some public servers
for which there might be more emphasis on \emph{efficient} allocation of the resources.

To address this issue, one may choose $g_i(z)$ from the class of $\alpha$-fair utility functions, but (possibly) with different $\alpha_i$ for different servers. In this case, we may extend all the results shown in Theorem~\ref{Th_basic}-\ref{Th_Properties}. In particular, we may extend the definition of $\alpha$PF-VDS, so that the resulting allocation of Problem~\ref{P1} satisfies the inequality in \eqref{fair_criteria}, where we consider different $\alpha_i$ for different servers. Also, it can be observed that the resulting allocation of Problem~\ref{P1} satisfies PS-DSF with respect to each server $i$ as $\alpha_i\rightarrow\infty$ (see proof of Theorem~\ref{Th_psdsf} in the appendix). Furthermore, careful inspection of the proof of Theorem~\ref{Th_Properties} indicates that the resulting allocation of Problem~\ref{P1} in such a heterogeneous setting satisfies sharing incentive and envy freeness properties, provided that $\alpha_i\ge1, ~\forall i$.

Finally, we introduce a broader class of utility functions which includes the class of $\alpha$-fair utility functions.
Specifically, consider $g_i(z)$ such that:
\be
g'_i(z)=\left(\frac{A_i}{z}\right)^{\alpha_i}+\frac{B_i}{z},\label{Extended_utility}
\ee
where $A_i,B_i\ge0$, and $\alpha_i>0$.
For this class of utility functions, we can show uniqueness of the solution to Problem~\ref{P1} (in terms of $\{x_n\}$), when $B_i>0$ (see Section~\ref{sec:reformulate:structure}).
Again, the trade-off between efficiency and fairness can be adjusted by $\alpha_i$.
Furthermore, we may extend Theorem~\ref{Th_Properties} to show envy freeness and sharing incentive properties for this class of utility functions. 

\begin{corollary}\label{Corollary_EF_SI}
The resulting allocation of Problem~\ref{P1} satisfies sharing incentive and envy-freeness properties,
provided that $g_i(z)$ is from the class of functions in \eqref{Extended_utility}, and $\alpha_i\ge1$.
\end{corollary}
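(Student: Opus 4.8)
The plan is to show that every $g_i$ whose derivative is of the form \eqref{Extended_utility} with $\alpha_i\ge1$ meets exactly the hypotheses on which the proof of Theorem~\ref{Th_Properties} rests, so that that proof carries over essentially verbatim, server by server. First I would check admissibility for Problem~\ref{P1}: with $A_i,B_i\ge0$ not both zero and $\alpha_i>0$, \eqref{Extended_utility} gives $g'_i>0$ (so $g_i$ is increasing), $g'_i$ differentiable (so $g_i$ is twice differentiable), and $g''_i(z)=-\alpha_iA_i^{\alpha_i}z^{-\alpha_i-1}-B_iz^{-2}<0$ (so $g_i$ is strictly concave); moreover $g'_i(0^+)=+\infty$, which forces $x_n>0$ at any solution for every user with a nonempty eligible set. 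I would then record the optimality conditions for Problem~\ref{P1}: at a solution, for each server $i$ there are multipliers $\lambda_{i,r}\ge0$ on the capacity constraints \eqref{P1_2} with, writing $\tilde{s}_{n,i}=x_n/(\phi_n\gamma_{n,i})$,
\[
\frac{g'_i(\tilde{s}_{n,i})}{\gamma_{n,i}}\le\sum_r\lambda_{i,r}d_{n,r}\quad(n\in\mathcal{N}_i),\qquad\text{with equality whenever }x_{n,i}>0,
\]
together with complementary slackness; equivalently the variational characterization (the extension of Theorem~\ref{Th_basic} to any strictly concave increasing $g_i$) holds, namely $\sum_{n\in\mathcal{N}_i}g'_i(\tilde{s}_{n,i})\,(y_{n,i}-x_{n,i})/\gamma_{n,i}\le0$ for every feasible ${\bf y}$ and every server $i$.

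Next I would isolate the one property of the $\alpha$-fair family that the proof of Theorem~\ref{Th_Properties} actually uses beyond positivity, continuity and strict decrease of $g'_i$: that the map $z\mapsto z\,g'_i(z)$ is non-increasing on $(0,\infty)$ (equivalently $g'_i(z)+z\,g''_i(z)\le0$). For $g'_i(z)=z^{-\alpha}$ this is ``$z^{1-\alpha}$ non-increasing'', i.e.\ $\alpha\ge1$ --- exactly the range for which Theorem~\ref{Th_Properties} claims envy-freeness and sharing incentive. For the extended family, $z\,g'_i(z)=A_i^{\alpha_i}z^{1-\alpha_i}+B_i$ is a constant plus a power of $z$ with exponent $1-\alpha_i\le0$, hence non-increasing precisely when $\alpha_i\ge1$. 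So $\alpha_i\ge1$ in the corollary supplies exactly the structural input that $\alpha\ge1$ supplies in Theorem~\ref{Th_Properties}.

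I would then replay the two arguments with $g'_i$ generic. For envy-freeness, fix users $n,m$; the claim is immediate unless $x_m>0$, in which case pick a server $j$ with $x_{m,j}>0$ (hence $m\in\mathcal{N}_j$), and combine the equality case of the optimality condition at $j$ with the elementary inequalities $d_{m,r}\ge\theta_{mn}d_{n,r}$ and $\gamma_{n,j}\ge\theta_{mn}\gamma_{m,j}$, where $\theta_{mn}:=\min_s d_{m,s}/d_{n,s}$. A short case split on the ordering of the two ``per-task prices'' --- one branch closed by $z\mapsto z\,g'_i(z)$ being non-increasing, the other by $g'_i$ being strictly decreasing --- then yields $x_n/\phi_n\ge\theta_{mn}\,x_m/\phi_m$, i.e.\ $U_n({\bf a}_n)\ge U_n(\tfrac{\phi_n}{\phi_m}{\bf a}_m)$; pairs with restricted eligibility are handled as in Theorem~\ref{Th_Properties}. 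For sharing incentive I would run the corresponding argument of Theorem~\ref{Th_Properties} --- comparing, for each user, its total number of allocated tasks against that under the uniform allocation --- which again hinges on the monotonicity of $z\mapsto z\,g'_i(z)$ and is where $\alpha_i\ge1$ enters. Pareto optimality is not claimed: just as for $\alpha$PF-VDS, only the $B_i/z$ (i.e.\ $\alpha=1$) term is consistent with it, and it fails once $\alpha_i>1$.

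The hard part is the second step: one has to go through the appendix proof of Theorem~\ref{Th_Properties} line by line and verify that the power form $z^{-\alpha}$ is exploited only through the sign of $g'_i(z)+z\,g''_i(z)$ --- not through, say, the homogeneity of $z^{-\alpha}$ or a closed form for $(g'_i)^{-1}$ --- so that every inequality there survives replacing $g'_i$ by the generic derivative in \eqref{Extended_utility}. Once that is granted, the degenerate regimes are trivial: $A_i=0$ is the proportional-fair ($\alpha=1$) case already covered, $B_i=0$ is the pure $\alpha_i$-fair case, and the ``not both zero'' proviso is needed only so that $g_i$ remains strictly concave and increasing as Problem~\ref{P1} requires.
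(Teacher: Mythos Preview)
Your proposal is correct and takes essentially the same approach as the paper: the key observation in both is that $z\mapsto z\,g'_i(z)$ is non-increasing for the extended family precisely when $\alpha_i\ge1$ (the paper phrases this equivalently as $q_i(z):=1/(z\,g'_i(z))$ being increasing), and this is exactly the structural property of $g'_i$ needed to replay the Theorem~\ref{Th_Properties} arguments. The paper's written proof is in fact terser than yours --- it spells out only the sharing-incentive chain explicitly and leaves envy-freeness to the reader via the same monotonicity observation applied to the envy-freeness portion of the Theorem~\ref{Th_Properties} proof.
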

\begin{corollary}\label{Corollary_PSDSF}
Let $g_i(z)$ be from the class of utility functions specified by \eqref{Extended_utility}.
The resulting allocation of Problem~\ref{P1} satisfies PS-DSF with respect to server $i$ as $\alpha_i\rightarrow\infty$,
provided that $\tilde{s}_{n,i}/A_i\le1,~\forall n$.
\end{corollary}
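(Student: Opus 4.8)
The plan is to reduce Corollary~\ref{Corollary_PSDSF} to Theorem~\ref{Th_psdsf} (already proven in the appendix) by showing that, for the extended utility \eqref{Extended_utility} with $A_i>0$, the per-server gradient has the \emph{same asymptotic behaviour} as the one induced by a pure $\alpha_i$-fair utility, as long as $\tilde{s}_{n,i}\le A_i$. First I would record the first-order characterization of the solutions to Problem~\ref{P1} for a general twice-differentiable, strictly concave, increasing $g_i$: since each server $i$'s subproblem is a convex program with linear constraints, a feasible $\mathbf{x}$ is a solution if and only if, for every server $i$ and every feasible $\mathbf{y}$,
\[
\sum_{n\in\mathcal{N}_i}\frac{g'_i(\tilde{s}_{n,i})}{\gamma_{n,i}}\,(y_{n,i}-x_{n,i})\le 0 ,
\]
which is exactly \eqref{fair_criteria} with the weight $\tilde{s}_{n,i}^{-\alpha}$ replaced by $g'_i(\tilde{s}_{n,i})$ (this is the same KKT / variational-inequality argument used to prove Theorem~\ref{Th_basic}). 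As in that proof, the fact that $g'_i(z)\to\infty$ as $z\downarrow 0$ forces $x_n>0$ for every $n$ that is eligible for some server, so all the $\tilde{s}_{n,i}$ appearing are strictly positive.

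The key algebraic observation is the factorization
\[
g'_i(z)=\Bigl(\frac{A_i}{z}\Bigr)^{\alpha_i}\Bigl[\,1+\frac{B_i}{A_i}\Bigl(\frac{z}{A_i}\Bigr)^{\alpha_i-1}\,\Bigr].
\]
Under the hypothesis $\tilde{s}_{n,i}/A_i\le 1$, the bracketed factor evaluated at $z=\tilde{s}_{n,i}$ lies in $[\,1,\,1+B_i/A_i\,]$, i.e.\ it is bounded and bounded away from zero uniformly in $\alpha_i$. Consequently, within a fixed server $i$ the ratio $g'_i(\tilde{s}_{n,i})/g'_i(\tilde{s}_{m,i})$ has, as $\alpha_i\to\infty$, the same limiting behaviour as the pure $\alpha_i$-fair ratio $(\tilde{s}_{m,i}/\tilde{s}_{n,i})^{\alpha_i}$ (the common factor $A_i^{\alpha_i}$ cancels): it diverges when $\tilde{s}_{n,i}<\tilde{s}_{m,i}$ and stays bounded away from $0$ and $\infty$ when $\tilde{s}_{n,i}=\tilde{s}_{m,i}$. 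Thus the weight in the variational inequality concentrates, exactly as in the proof of Theorem~\ref{Th_psdsf}, on the eligible users with the \emph{smallest} weighted VDS at server $i$, which is precisely the max-min / PS-DSF condition of Definition~\ref{PS_DSF_Def}.

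To make this rigorous I would take a sequence $\alpha_i^{(k)}\to\infty$ (with the remaining parameters fixed) and corresponding solutions $\mathbf{x}^{(k)}$ of Problem~\ref{P1}; the feasible set is compact, so some subsequence converges to a feasible $\mathbf{x}^*$. Normalizing the variational inequality of $\mathbf{x}^{(k)}$ at server $i$ by its largest gradient weight (attained at an argmin of $\tilde{s}_{n,i}^{(k)}$, since $g'_i$ is decreasing) and passing to the limit via the factorization above, I would show, by the same ``water-filling'' iteration as in the proof of Theorem~\ref{Th_psdsf}, that $\mathbf{x}^*$ admits no feasible perturbation that increases some $x_n$ while only decreasing $x_{m,i}$ for users with $\tilde{s}_{m,i}^*\le\tilde{s}_{n,i}^*$; hence $\mathbf{x}^*$ satisfies PS-DSF with respect to server $i$, and since every convergent subsequence has such a limit and PS-DSF is essentially unique in the relevant VDS variables, the whole sequence converges. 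I expect the main obstacle to be the bookkeeping in this limit: the gradient weights depend on the moving point $\mathbf{x}^{(k)}$; the hypothesis $\tilde{s}_{n,i}/A_i\le 1$ is naturally a statement about the limiting (PS-DSF) allocation and must be transferred to the tail of the sequence; and the boundary case $\tilde{s}_{n,i}=A_i$, where the dominating $(A_i/z)^{\alpha_i}$ factor degenerates to $1$, has to be handled separately by observing that it can occur only when all eligible users at server $i$ already share the same VDS, so the PS-DSF condition holds there trivially.
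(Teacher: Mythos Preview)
Your proposal is correct and follows essentially the same route as the paper's proof: extract a convergent subsequence by compactness, write the first-order variational inequality with the general $g'_i$, and then reduce the ratio $g'_i(\tilde{s}_{m,i})/g'_i(\tilde{s}_{n,i})$ to the pure $\alpha$-fair ratio $(\tilde{s}_{n,i}/\tilde{s}_{m,i})^{\alpha_i}$ in the limit, after which the argument of Theorem~\ref{Th_psdsf} applies verbatim. The paper simply asserts that limit equality (its equation for $\lim_{l\to\infty} g'_i(\tilde{s}_{m,i})/g'_i(\tilde{s}_{n,i})$) under the hypothesis $\tilde{s}_{n,i}/A_i\le 1$, whereas your factorization $g'_i(z)=(A_i/z)^{\alpha_i}\bigl[1+(B_i/A_i)(z/A_i)^{\alpha_i-1}\bigr]$ makes the reason explicit; and you are more careful than the paper about whether the hypothesis is imposed on the sequence or only on the limit, a point the paper leaves ambiguous.
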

The proofs appear in the appendix.

\section{Towards a solution to Problem~\ref{P1}: an equivalent formulation}\label{sec:reformulate}
In this section, we reformulate Problem~\ref{P1} as a nonlinear complementary problem.
This equivalent formulation forms the basis to develop an iterative algorithm to solve this problem in the next section.

\subsection{Formulation as a non-linear complementary problem}
For Problem~\ref{P1} describing a concave game, it is well known that ${\bf x}$ is a solution (Nash equilibrium) if and only if there exists a set of multipliers, ${\bf \lambda}$ and ${\bf \nu}$, such that KKT conditions are satisfied\footnote{Conditions of the form $0\le x\perp y\ge0$ are referred to as \emph{complementary conditions}, where $x\perp y$ means that $xy=0$.}~\cite{GNE_KKT}:
\be
&& 0\le\lambda_{i,r} \perp (c_{i,r}-\sum_{n\in\mathcal{N}_i}x_{n,i}d_{n,r})\ge0,~\forall r,i,\label{KKT1_1}\qquad\qquad\\
&& 0\le\nu_{n,i} \perp x_{n,i}\ge0,~n\in\mathcal{N}_i,~\forall i,\label{KKT1_2}\\
&&  \frac{\partial\mathcal{L}_i({\bf x},{\bf \lambda},{\bf\nu})}{\partial x_{n,i}}=0,~n\in\mathcal{N}_i,~\forall i,\label{KKT1_3}
\ee
where, $\mathcal{L}_i({\bf x},{\bf \lambda},{\bf\nu})$ is the Lagrangian function for the local problem at server $i$,
\be\nonumber
\mathcal{L}_i({\bf x},{\bf \lambda},{\bf\nu}) := \sum_{n\in\mathcal{N}_i}\left[\phi_ng_i\left(\frac{x_n}{\phi_n\gamma_{n,i}}\right)+\nu_{n,i}x_{n,i}\right]\\
  +~ \sum_r\lambda_{i,r}(c_{i,r}-\sum_{n\in\mathcal{N}_i}x_{n,i}d_{n,r}).\qquad
\ee
Hence, the first order optimality condition in \eqref{KKT1_3} implies that:
\be
\frac{1}{\gamma_{n,i}}g'_i\left(\frac{x_n}{\phi_n\gamma_{n,i}}\right)-\sum_r\lambda_{i,r}d_{n,r}+\nu_{n,i}=0, ~n\in\mathcal{N}_i,~\forall i.\label{FOC_1}
\ee
We may solve the system of KKT conditions in \eqref{KKT1_1}-\eqref{KKT1_3} for $\nu_{n,i}$,
and reach the following simplified set of conditions:
\be
&& 0\le\lambda_{i,r} \perp f_{i,r}({\bf x}_i)\ge0,~\forall r,i,\label{KKT2_1}\qquad\qquad\qquad\\
&& 0\le x_{n,i} \perp f_{n,i}({\bf x},{\bf \lambda}_i)\ge0,~n\in\mathcal{N}_i,~\forall i,\label{KKT2_2}
\ee
where,
\be
&& f_{i,r}({\bf x}_i) := c_{i,r}-\sum_{n\in\mathcal{N}_i}x_{n,i}d_{n,r},\label{mapping1}\\
&& f_{n,i}({\bf x},{\bf \lambda}_i):=
\sum_r\lambda_{i,r}d_{n,r}-\frac{1}{\gamma_{n,i}}g'_i\left(\frac{x_n}{\phi_n\gamma_{n,i}}\right).\qquad\label{mapping2}
\ee

The problem of finding $({\bf x},{\bf \lambda})$, such that the complementary conditions in \eqref{KKT2_1}-\eqref{KKT2_2} are satisfied,
is known as a Non-linear Complementary Problem (NCP).
A brief introduction to this family of problems is given in the next subsection.


\subsection{Background on nonlinear complementary problems}
Let $\mathcal{F}({\bf z})$ be a continuously differentiable function (mapping), $\mathcal{F}:\mathbb{R}^m\rightarrow\mathbb{R}^m$. The nonlinear complementary problem, NCP$(\mathcal{F})$,
is to find ${\bf z}\in{\mathbb{R}^m}$ such that:
\be
0\le {\bf z}, \mathcal{F}({\bf z})\ge0, {\bf z}^T\mathcal{F}({\bf z})=0,
\ee
where the inequalities are taken componentwise.
This problem can be best described by introducing a \emph{complementary function}.
Specifically, $\psi:{\mathbb R}^2\rightarrow\mathbb{R}$ is said to be a \emph{complementary function} when~\cite{Fischer98}:
\be
\psi(a,b)=0 ~\Leftrightarrow~ a\ge0,~b\ge0,\text{ and }ab=0.\label{Criteria_NCP_function}
\ee
Based on a complementary function, NCP$(\mathcal{F})$ can be reformulated as:
\be
\psi(z_l,F_l({\bf z}))=0,~\forall l.
\ee

Several complementary functions have been proposed in the literature~\cite{ResolutionNCP}, but the most prominent one is the Fischer-Burmeister function~\cite{Fischer98}:
\be
\psi_{FB}(a,b) := \sqrt{a^2+b^2}-a-b.
\ee
Here we use the following complementary function~\cite{ResolutionNCP}:
\be
\psi(a,b)=\frac{1}{2}\psi_{FB}^2(a,b),\label{NCP_function}
\ee
which is continuously differentiable, and $\psi(a,b)\ge0$, $\forall(a,b)\in\mathbb{R}^2$.

%
%

\subsection{A constrained merit function}
In order to solve the NCP described by \eqref{KKT2_1}-\eqref{KKT2_2}, 
we confine the feasible region such that \eqref{KKT2_1} is always satisfied. 
\begin{lemma}\label{Corollary_merit_function}
$({\bf x},{\lambda})$ is a solution to the NCP described by \eqref{KKT2_1}-\eqref{KKT2_2}, if and only if it is a solution to the following problem.
\begin{problem}\label{P3}
\be
&&\min\Psi({\bf x},{\bf \lambda}):=\sum_{i\in\mathcal{K}}\sum_{n\in\mathcal{N}_i}
\psi (x_{n,i},f_{n,i}({\bf x},{\bf \lambda}_i)),\quad\:\:\label{P3_0}\\
&&\text{s.t.}\quad \sum_{n\in\mathcal{N}_i}x_{n,i}d_{n,r}\le c_{i,r},~\forall r,i,\label{P3_1}\\
&&\qquad\:\,       \lambda_{i,r}\ge0,~r\in \mathcal{R}_i({\bf x}),\forall i,\label{P3_2} \\
&&\qquad\:\,       \lambda_{i,r} = 0,~r\notin \mathcal{R}_i({\bf x}),\forall i.\label{P3_3}
\ee
where, $\mathcal{R}_i({\bf x}):=\{r\mid \sum_{n\in\mathcal{N}_i}x_{n,i}d_{n,r}=c_{i,r}\}$ denotes the set of \emph{saturated resources} at server $i$ under the allocation ${\bf x}$.
\end{problem}
\end{lemma}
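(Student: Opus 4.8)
The plan is to prove the two implications by separating the NCP conditions \eqref{KKT2_1}--\eqref{KKT2_2} into the pair governing the multipliers $\lambda_{i,r}$ (namely \eqref{KKT2_1}) and the pair governing the allocation variables $x_{n,i}$ (namely \eqref{KKT2_2}). First I would show that the feasible set of Problem~\ref{P3} is \emph{exactly} the set of pairs $({\bf x},{\bf \lambda})$ satisfying \eqref{KKT2_1}. Indeed, \eqref{P3_1} is nothing but $f_{i,r}({\bf x}_i)\ge0$; granting this, the complementarity $0\le\lambda_{i,r}\perp f_{i,r}({\bf x}_i)\ge0$ is equivalent to demanding $\lambda_{i,r}=0$ whenever $f_{i,r}({\bf x}_i)>0$, i.e.\ whenever $r\notin\mathcal{R}_i({\bf x})$, and only $\lambda_{i,r}\ge0$ whenever $f_{i,r}({\bf x}_i)=0$, i.e.\ $r\in\mathcal{R}_i({\bf x})$ --- which is precisely \eqref{P3_2}--\eqref{P3_3}. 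The single subtlety here is that the constraint region of Problem~\ref{P3} is described through the set $\mathcal{R}_i({\bf x})$, which depends on the decision variable ${\bf x}$ itself; the case analysis above handles this without trouble.

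Next I would invoke the two defining properties of the complementarity function $\psi$ in \eqref{NCP_function}: that $\psi(a,b)\ge0$ for every $(a,b)\in\mathbb{R}^2$, and that $\psi(a,b)=0$ if and only if $a\ge0$, $b\ge0$ and $ab=0$ --- the latter being \eqref{Criteria_NCP_function}, which $\psi=\frac{1}{2}\psi_{FB}^2$ inherits from $\psi_{FB}$. Hence, over the feasible set of Problem~\ref{P3} the objective $\Psi$ is nonnegative, and $\Psi({\bf x},{\bf \lambda})=0$ holds iff $0\le x_{n,i}\perp f_{n,i}({\bf x},{\bf \lambda}_i)\ge0$ for all $i$ and $n\in\mathcal{N}_i$, i.e.\ iff \eqref{KKT2_2} holds. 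Combining this with the previous paragraph: any solution $({\bf x},{\bf \lambda})$ of the NCP is feasible for Problem~\ref{P3} (by \eqref{KKT2_1}) and drives $\Psi$ to its least possible value $0$ (by \eqref{KKT2_2}), so it is a global minimizer of Problem~\ref{P3}.

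For the converse I need that the optimal value of Problem~\ref{P3} is $0$: then any minimizer has $\Psi=0$, hence satisfies \eqref{KKT2_2}, and being feasible it also satisfies \eqref{KKT2_1}, so it solves the NCP. To certify the optimal value it suffices to produce one solution of the NCP. This is where I would appeal to the earlier development: Problem~\ref{P1} defines a concave game, so a Nash equilibrium exists by \cite{Rosen}; such an equilibrium satisfies the KKT system \eqref{KKT1_1}--\eqref{KKT1_3} for suitable multipliers $\lambda,\nu$; and eliminating $\nu_{n,i}$ through \eqref{FOC_1} --- which forces $\nu_{n,i}=f_{n,i}({\bf x},{\bf \lambda}_i)$ --- converts \eqref{KKT1_1}--\eqref{KKT1_2} into exactly \eqref{KKT2_1}--\eqref{KKT2_2}. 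Thus the NCP is solvable, the infimum in Problem~\ref{P3} is attained with value $0$, and the converse follows.

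The main obstacle is not a hard computation but bookkeeping: being careful that the $\mathcal{R}_i({\bf x})$-dependence of the constraints is handled correctly, and making the step ``the optimal value is $0$'' fully rigorous, so that Problem~\ref{P3} genuinely certifies NCP solutions rather than merely bounding a possibly-positive residual. Everything else is a direct consequence of the nonnegativity and zero-set characterization of $\psi$.
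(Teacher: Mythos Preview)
Your proposal is correct and follows essentially the same approach as the paper: identify the feasibility set of Problem~\ref{P3} with the set satisfying \eqref{KKT2_1}, and use the nonnegativity and zero-set characterization of $\psi$ to equate \eqref{KKT2_2} with $\Psi=0$. You are slightly more careful than the paper in explicitly justifying that the optimal value of Problem~\ref{P3} is $0$ (via the existence of a Nash equilibrium from \cite{Rosen}), a point the paper's proof leaves implicit.
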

\begin{proof}
The constraints in \eqref{P3_1}-\eqref{P3_3} are established if and only if $({\bf x},\lambda)$ satisfies \eqref{KKT2_1}. According to \eqref{Criteria_NCP_function}, the complementary conditions in \eqref{KKT2_2} are satisfied if and only if $\psi (x_{n,i},f_{n,i}({\bf x},{\bf \lambda}_i))=0$, $n\in\mathcal{N}_i,~\forall i$. On the other hand \eqref{NCP_function} implies that $\psi(\cdot,\cdot)$ has a lower bound of zero. Hence, a feasible point $({\bf x},\lambda)$ is a solution to Problem~\ref{P3} if and only if \eqref{KKT2_2} is satisfied for all $n\in\mathcal{N}_i,~\forall i$.
\end{proof}

The following theorem, proven in the appendix, suggests how to find global minima for Problem~\ref{P3}.
\begin{theorem}\label{Th_stationary_points}
A feasible point, $({\bf x},{\bf \lambda})$, satisfying the conditions  in \eqref{P3_1}-\eqref{P3_3}, is a solution to Problem~\ref{P3}
if and only if it is a stationary point of $\Psi({\bf x},\lambda)$ in \eqref{P3_0}, i.e., $\nabla\Psi({\bf x},\lambda)=0$.
\end{theorem}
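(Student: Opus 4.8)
The plan is to exploit the special structure of the merit function $\Psi$ built from $\psi=\tfrac12\psi_{FB}^2$, namely that $\psi(a,b)\ge 0$ always, with equality exactly on the complementary set. Since $\Psi$ is a sum of such terms, it is bounded below by $0$, and any feasible point at which $\Psi=0$ is automatically a global minimum and (being a minimum of a nonnegative differentiable function) a stationary point; so the ``only if'' direction of the equivalence is essentially Lemma~\ref{Corollary_merit_function} combined with the observation that a global min of a smooth function over a constraint set on which it is already zero is a stationary point of the unconstrained function restricted appropriately. The substantive content is the converse: a feasible stationary point $\nabla\Psi=0$ must have $\Psi=0$, i.e., must solve Problem~\ref{P3}.

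First I would set up the gradient computation. Writing $\psi_{n,i}:=\psi(x_{n,i},f_{n,i}({\bf x},{\bf\lambda}_i))$, the partial derivatives of $\Psi$ with respect to $x_{m,j}$ and $\lambda_{j,r}$ decompose using $\partial_a\psi$ and $\partial_b\psi$ together with the Jacobian entries of the mappings $f_{n,i}$ from \eqref{mapping2}. Recall $f_{n,i}$ depends on $x_{m,j}$ only through $x_n=\sum_j x_{n,j}$ (via $g_i'$) and on $\lambda_{i,r}$ linearly through $\sum_r\lambda_{i,r}d_{n,r}$; and $f_{i,r}$ from \eqref{mapping1} appears only implicitly through the feasible set \eqref{P3_1}--\eqref{P3_3}, not inside $\Psi$. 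The key structural facts I would establish are: (i) $\partial_b\psi(a,b)=\psi_{FB}(a,b)\,\partial_b\psi_{FB}(a,b)$, and $\partial_b\psi_{FB}(a,b)=\tfrac{b}{\sqrt{a^2+b^2}}-1\le 0$ with equality iff $\psi_{FB}(a,b)=0$ together with $b\ge 0$; similarly $\partial_a\psi_{FB}\le 0$; (ii) $g_i''<0$ (strict concavity of $g_i$), so $\partial f_{n,i}/\partial x_n=-\tfrac1{\gamma_{n,i}^2\phi_n}g_i''(\cdot)>0$; (iii) the $\lambda$-block: $\partial f_{n,i}/\partial\lambda_{i,r}=d_{n,r}\ge 0$.

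The heart of the argument is then a sign/coercivity chase showing that $\nabla\Psi=0$ forces each $\psi_{n,i}=0$. The natural route: consider the stationarity conditions for the $\lambda_{i,r}$ variables with $r\in\mathcal R_i({\bf x})$ (the free ones) — these give $\sum_{n\in\mathcal N_i}d_{n,r}\,\partial_b\psi_{n,i}=0$; since each summand has sign determined by $\partial_b\psi_{FB}\le 0$ and $d_{n,r}\ge 0$, and we want to conclude all terms vanish, I would argue that a nonzero configuration would contradict either this $\lambda$-stationarity or the $x_{m,j}$-stationarity, using that $x_n$ couples all servers $j$ through the $g_i'$ term. More precisely, I expect one shows that the Jacobian of the NCP mapping restricted to the active structure is a $P_0$-matrix (or that $\Psi$ has the property that every stationary point is a global minimum — a known property of the squared Fischer--Burmeister merit function when the underlying mapping is monotone or a $P_0$-function), so the standard theory (e.g.\ \cite{Fischer98,ResolutionNCP}) applies. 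The monotonicity needed comes from the concave-game structure: the map ${\bf x}\mapsto(-\partial U_i/\partial x_{n,i})$ is monotone because each $U_i$ is concave in ${\bf x}_i$ and the $\gamma_{n,i}$-scaling is linear.

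The main obstacle I anticipate is precisely verifying the $P_0$ / monotonicity hypothesis for the \emph{constrained} merit function here: the mapping $\mathcal F=(f_{i,r},f_{n,i})$ is not the gradient of a single function (it is a pseudo-gradient of a game), and the $x$-variables are duplicated across servers via $x_n=\sum_j x_{n,j}$, which makes the Jacobian singular along directions that redistribute $x_{n,\cdot}$ among servers holding $x_n$ fixed. I would handle this by restricting attention to $\{x_n\}$ as the effective variables (as the theorem statement's parenthetical in Theorem~\ref{Th_basic} and the uniqueness discussion in Section~\ref{sec:reformulate:structure} suggest), showing that along the degenerate directions $\Psi$ is constant so stationarity is not lost, and that modulo this quotient the relevant Jacobian is a $P_0$-matrix thanks to $g_i''<0$ and $d_{n,r}\ge 0$; then invoke the classical result that for $P_0$ mappings every stationary point of the squared-FB merit function on the (polyhedral, hence regular) feasible set \eqref{P3_1}--\eqref{P3_3} is a zero of the merit function, which by Lemma~\ref{Corollary_merit_function} is a solution of the NCP and hence of Problem~\ref{P3}.
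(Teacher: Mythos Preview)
Your setup and the ``only if'' direction are fine. For the hard direction (stationary $\Rightarrow$ solution) you reach for the general $P_0$-function theory of the squared Fischer--Burmeister merit, but---as you yourself flag---you never actually verify that the NCP map here is $P_0$ or monotone, and the cross-server coupling through $x_n=\sum_j x_{n,j}$ makes that verification genuinely awkward (the Jacobian is singular along redistribution directions). So as written there is a gap: the whole argument rests on a hypothesis you only ``expect'' to hold.

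The paper avoids the $P_0$ machinery entirely with a short direct computation that uses exactly your structural facts (i) and (ii). Write $\partial_a\psi_{n,i},\partial_b\psi_{n,i}$ for the two partials of $\psi$ at $(x_{n,i},f_{n,i})$, and test $\nabla_{\bf x}\Psi$ against the vector with components $w_{n,i}:=\partial_b\psi_{n,i}\cdot\partial f_{n,i}/\partial x_n$. Because $(\nabla_{\bf x}\Psi)_{n,i}=\partial_a\psi_{n,i}+\sum_j\partial_b\psi_{n,j}\,\partial f_{n,j}/\partial x_n$, one gets
\[
\sum_{n,i} w_{n,i}\,(\nabla_{\bf x}\Psi)_{n,i}
=\sum_{n,i}\partial_a\psi_{n,i}\,\partial_b\psi_{n,i}\,\frac{\partial f_{n,i}}{\partial x_n}
+\sum_{n}\Bigl(\sum_i \partial_b\psi_{n,i}\,\frac{\partial f_{n,i}}{\partial x_n}\Bigr)^{2}.
\]
Both sums on the right are termwise nonnegative: the first by $\partial_a\psi\,\partial_b\psi\ge 0$ (a standard property of $\psi=\tfrac12\psi_{FB}^2$) together with $\partial f_{n,i}/\partial x_n=-g_i''(\cdot)/(\phi_n\gamma_{n,i}^2)>0$ from strict concavity, and the second trivially. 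Hence $\nabla_{\bf x}\Psi=0$ forces every term to vanish; since for the squared FB function $\partial_a\psi\,\partial_b\psi=0$ already implies $\psi=0$ (equivalently $\partial_a\psi=\partial_b\psi=0$), this yields $\psi_{n,i}=0$ for all $n,i$, i.e.\ $\Psi=0$, and $\nabla_\lambda\Psi=0$ then follows for free. Note that the degeneracy you worried about is precisely what makes the cross term collapse into the perfect square $\sum_n(\sum_i\cdots)^2$---it is a feature here, not an obstacle, and no quotienting or $P_0$ check is needed.
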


\subsection{Structure of the solution}\label{sec:reformulate:structure}
In this section, we investigate the structure of the solution set for Problem~\ref{P1}, or the equivalent NCP described by Problem~\ref{P3},
when $g_i(\cdot)$ is specified by \eqref{Extended_utility} and $\alpha_i\ge1,~\forall i$. In Problem~\ref{P1}, the feasible region for ${\bf x}$, described by \eqref{P1_2}-\eqref{P1_4}, is a bounded region. As we showed in Section~\ref{sec:main:properties}, the solution to this problem satisfies sharing incentive property. That is,
\be
x_n\ge\frac{\psi_n}{\sum_m\phi_m}\sum_i\gamma_{n,i}=:x_n^{\min},~\forall n.\label{lower_bound_tasks}
\ee
For the equivalent NCP described by \eqref{KKT2_1}-\eqref{KKT2_2}, it can also be shown that the solution $({\bf x},{\bf \lambda})$ is contained within a bounded region. To observe boundedness of Lagrange multipliers for each server $i$, consider some user $n$ for which $x_{n,i}d_{n,r}>0$, for some resource $r$. From the complementary condition in~\eqref{KKT2_2}, it follows that $f_{n,i}({\bf x},{\lambda}_i)=0$, and therefore,
\be
\sum_r\lambda_{i,r}d_{n,r} & = &  \frac{1}{\gamma_{n,i}}g'_i\left(\frac{x_n}{\phi_n\gamma_{n,i}}\right).
\ee
Since $g_i(\cdot)$ is a concave function, the lower bound in \eqref{lower_bound_tasks} implies that:
\be
\sum_r\lambda_{i,r}d_{n,r} &\le& \frac{1}{\gamma_{n,i}}g'_i\left(\frac{x_n^{\min}}{\phi_n\gamma_{n,i}}\right)<\infty,
\ee
which clearly shows boundedness of Lagrange multipliers for server $i$.

\begin{lemma}\label{lem_connectedness}
The solution set for the NCP described by \eqref{KKT2_1}-\eqref{KKT2_2}, or equivalently Problem~\ref{P3}, is \emph{connected}.
\end{lemma}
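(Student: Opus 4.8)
The plan is to identify the solution set $S$ of the NCP \eqref{KKT2_1}--\eqref{KKT2_2} with the set of Nash equilibria of the concave game in Problem~\ref{P1} (which is the content of the KKT characterization of Nash equilibria recalled above and of Lemma~\ref{Corollary_merit_function}), and then to show directly that $S$ is \emph{convex} --- which immediately implies it is connected. Because the feasible sets \eqref{P1_2}--\eqref{P1_4} are decoupled across servers, $(\mathbf{x},\boldsymbol{\lambda})\in S$ if and only if, for every server $i$, the block $\mathbf{x}_i$ maximizes the strictly concave per-server payoff $U_i(\cdot,\mathbf{x}_{-i})$ over the polytope $P_i:=\{\mathbf{x}_i\ge\mathbf{0}:\sum_{n\in\mathcal{N}_i}x_{n,i}d_{n,r}\le c_{i,r}\ \forall r,\ x_{n,i}=0\ \forall n\notin\mathcal{N}_i\}$ and $\boldsymbol{\lambda}_i$ is a corresponding vector of KKT multipliers. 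The structural fact I would exploit is that the gradient $\nabla_{\mathbf{x}_i}U_i(\mathbf{x})$ has $n$-th component $\tfrac{1}{\gamma_{n,i}}g_i'\!\big(x_n/(\phi_n\gamma_{n,i})\big)$, which depends on $\mathbf{x}$ \emph{only through the aggregate task counts} $\{x_n\}$.

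The key step is therefore to prove that $\{x_n\}$ takes one and the same value, say $\{x_n^\ast\}$, at \emph{every} point of $S$. I would take two solutions $\mathbf{x}^{0},\mathbf{x}^{1}$, write the best-response (variational-inequality) condition for each server at $\mathbf{x}^{0}$ evaluated at $\mathbf{x}^{1}_i$ and vice versa, and add the resulting inequalities over all servers and both solutions. Using that $g_i$ is strictly concave (so $g_i'$ is strictly decreasing, and, for the extended utility \eqref{Extended_utility}, strictly convex), and that the sharing-incentive lower bound \eqref{lower_bound_tasks} keeps every argument of $g_i'$ bounded away from its singularity at $0$, this produces a monotonicity inequality that forces $x_n^{0}=x_n^{1}$ for every $n$. (When $\alpha_i\equiv 1$ --- equivalently whenever $B_i>0$ in \eqref{Extended_utility} --- uniqueness of the $\{x_n\}$-component is already the statement recorded in Section~\ref{sec:main:ext}, so there this step is free.) Once $\{x_n\}=\{x_n^\ast\}$ is pinned, each server's gradient becomes a \emph{fixed} vector $\boldsymbol{\mu}_i:=\big(\tfrac{1}{\gamma_{n,i}}g_i'(x_n^\ast/(\phi_n\gamma_{n,i}))\big)_{n}$; since $U_i(\cdot,\mathbf{x}_{-i})$ is concave, the best-response condition collapses to ``$\mathbf{x}_i$ maximizes the \emph{linear} functional $\langle\boldsymbol{\mu}_i,\cdot\rangle$ over $P_i$'', whose set of primal optimizers $F_i$ is a face of $P_i$ and whose set of optimal KKT multipliers $G_i$ is a polyhedron --- both convex. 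Hence $S=\big(\prod_i (F_i\times G_i)\big)\cap\{(\mathbf{x},\boldsymbol{\lambda}):\sum_i x_{n,i}=x_n^\ast\ \forall n\}$, an intersection of convex sets, so $S$ is convex and therefore connected. (Boundedness of $S$, already established above, together with closedness gives compactness, a convenient but not strictly necessary supplement.)

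The main obstacle is precisely the aggregate-invariance step for \emph{heterogeneous} servers: the per-user marginals $\tfrac{1}{\gamma_{n,i}}g_i'(x_n/(\phi_n\gamma_{n,i}))$ carry a server-dependent weight, so the summed variational inequality does not collapse automatically into a sum over users of manifestly nonnegative terms, and one must control the resulting cross terms using the strict convexity of $g_i'$ (or fall back on the case $B_i>0$). If pinning $\{x_n\}$ to a single point proves too strong, the contingency plan is to show instead that the set $T:=\{\{x_n\}:(\mathbf{x},\boldsymbol{\lambda})\in S\}$ of achievable aggregates is connected and that the projection $S\to T$ is a continuous closed surjection whose fibers are exactly the convex sets constructed in the previous paragraph; since a continuous closed surjection onto a connected base with connected fibers has connected total space, this again yields the claim, with the connectedness of $T$ (for instance via convexity of $T$, or via a retraction argument) becoming the remaining work.
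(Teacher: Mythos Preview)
Your approach is genuinely different from the paper's, and the step you yourself flag as the main obstacle --- aggregate invariance of $\{x_n\}$ across all Nash equilibria --- is a real gap that your sketch does not close. The summed variational inequality reads
\[
\sum_i \sum_{n\in\mathcal{N}_i}\Bigl[\tfrac{1}{\gamma_{n,i}}g_i'\bigl(\tfrac{x_n^0}{\phi_n\gamma_{n,i}}\bigr)-\tfrac{1}{\gamma_{n,i}}g_i'\bigl(\tfrac{x_n^1}{\phi_n\gamma_{n,i}}\bigr)\Bigr](x_{n,i}^1-x_{n,i}^0)\;\le\;0.
\]
Each bracketed factor has the sign of $x_n^1-x_n^0$ (since $g_i'$ is decreasing), uniformly in $i$; but you are multiplying by the \emph{per-server} increment $x_{n,i}^1-x_{n,i}^0$, which can change sign across servers, so the double sum does not collapse into $\sum_n(\text{nonnegative term})$. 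The collapse occurs precisely when the bracket is independent of $i$, which for $g_i'(z)=z^{-\alpha}$ means $\alpha=1$ (then $\tfrac{1}{\gamma_{n,i}}g_i'(\cdot)=\phi_n/x_n$). For $\alpha\neq 1$, or for the extended family \eqref{Extended_utility} with heterogeneous $\gamma_{n,i},A_i,B_i,\alpha_i$, no such cancellation is available, and strict convexity of $g_i'$ does not supply the missing monotonicity. Indeed the paper itself only pins down $\{x_n\}$ under the additional hypothesis that a single capacity constraint is active at each server (Theorem~\ref{Th_uniqueness}), so aggregate invariance is not available at the generality of Lemma~\ref{lem_connectedness}. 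Your fallback --- show the aggregate set $T$ is connected and lift through a fibration with convex fibers --- inherits the same difficulty, since you give no independent argument for the connectedness of $T$.

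The paper's route avoids all of this. It never attempts convexity or uniqueness of aggregates; it uses only that (a) $\Psi$ is $C^1$, (b) every stationary point of $\Psi$ is a global minimizer, i.e., a solution (Theorem~\ref{Th_stationary_points}), and (c) the solution set is bounded (established just above the lemma via the sharing-incentive lower bound). It then invokes Lemma~3.1 and Corollary~3.5 of \cite{Facchinei98}, which yield that under (a)--(c) the zero level set of such a merit function is connected --- morally a mountain-pass argument: two disjoint components of $\{\Psi=0\}$ would force a critical point at a strictly positive level of $\Psi$, contradicting (b). This exploits the NCP/merit-function structure rather than the game structure, and does not need the convexity you are aiming for (which, were it true, would of course be a strictly stronger conclusion).
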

\begin{proof}
For Problem~\ref{P3} we know that:
(a) if $({\bf x},{\lambda})$ is a solution, then $\nabla\Psi({\bf x},{\lambda})=0$ (c.f. Theorem~\ref{Th_stationary_points}), (b) the function $\Psi({\bf x},{\lambda})$, $\Psi:\mathbb{R}^{(N+M)K}\mapsto \mathbb{R}$, is continuously differentiable, and (c) the solutions are contained within a bounded region. Given these conditions, the proofs for Lemma 3.1, and Corollary 3.5 of \cite{Facchinei98} can be applied to show connectedness of the solution set for this problem.
\end{proof}

The following result, borrowed from \cite{Facchinei98}, is a direct consequence of Lemma~\ref{lem_connectedness}.
\begin{corollary}\label{corollary_local_global1}
Problem~\ref{P3} has a unique solution if and only if it has a locally unique solution \cite{Facchinei98}.
\end{corollary}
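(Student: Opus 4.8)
The plan is to obtain the corollary from Lemma~\ref{lem_connectedness} by a short topological argument, mirroring the reasoning behind Corollary~3.5 of \cite{Facchinei98}. Let $S\subseteq\mathbb{R}^{(N+M)K}$ denote the solution set of Problem~\ref{P3}, equivalently the solution set of the NCP in \eqref{KKT2_1}--\eqref{KKT2_2}. The ``only if'' direction is immediate: if $S$ is a singleton, then its unique element is in particular locally unique. The content is in the ``if'' direction.

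For that direction I would start from a locally unique solution $({\bf x}^*,{\bf\lambda}^*)\in S$, i.e., an open set $U$ with $({\bf x}^*,{\bf\lambda}^*)\in U$ and $U\cap S=\{({\bf x}^*,{\bf\lambda}^*)\}$. First I would note that $S$ is nonempty (a Nash equilibrium exists by \cite{Rosen}, and by the KKT reformulation of Section~\ref{sec:reformulate} together with Lemma~\ref{Corollary_merit_function} it corresponds to a point of $S$), so connectedness is not vacuous. Next I would observe that the singleton $\{({\bf x}^*,{\bf\lambda}^*)\}$ is open in the subspace topology of $S$ (because $U\cap S=\{({\bf x}^*,{\bf\lambda}^*)\}$) and also closed in $S$ (because $\mathbb{R}^{(N+M)K}$ is Hausdorff, so singletons are closed, and closedness passes to the subspace). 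Since $S$ is connected by Lemma~\ref{lem_connectedness}, its only nonempty clopen subset is $S$ itself; hence $S=\{({\bf x}^*,{\bf\lambda}^*)\}$, which is precisely uniqueness of the solution to Problem~\ref{P3}. If desired, I would add one line remarking that uniqueness of $({\bf x},{\bf\lambda})$ forces uniqueness of ${\bf x}$ (and, under the conditions of Section~\ref{sec:reformulate:structure}, of $\{x_n\}$), as ${\bf x}$ is a coordinate projection of the NCP variable.

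There is essentially no obstacle here once Lemma~\ref{lem_connectedness} is available; the only points requiring a little care are (i) rephrasing ``locally unique'' as ``$\{({\bf x}^*,{\bf\lambda}^*)\}$ is a clopen subset of $S$'' and (ii) checking that $S\neq\emptyset$ so that the connectedness conclusion is meaningful. All the genuine work — reducing solutions to stationary points via Theorem~\ref{Th_stationary_points}, continuous differentiability of $\Psi$, and boundedness of $S$ established in Section~\ref{sec:reformulate:structure} — has already been done inside the proof of Lemma~\ref{lem_connectedness} by appeal to Lemma~3.1 and Corollary~3.5 of \cite{Facchinei98}.
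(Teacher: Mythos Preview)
Your argument is correct and matches the paper's treatment: the paper does not give its own proof but simply states that the corollary is ``borrowed from \cite{Facchinei98}'' and is ``a direct consequence of Lemma~\ref{lem_connectedness},'' which is exactly the connectedness-plus-clopen reasoning you spell out. Your proposal just makes explicit the standard topological step (an isolated point of a connected set forces the set to be a singleton) that the paper leaves implicit by citation.
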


In the same way, it can be observed that Problem~\ref{P3} has a unique solution in terms of the total allocated tasks to each user, provided that it has a locally unique solution in terms of $\{x_n\}$. In the following we draw conditions under which the optimal solution to Problem~\ref{P3} will be locally unique in terms of $\{x_n\}$.


\begin{theorem}\label{Th_uniqueness}
Let $({\bf x},{\bf \lambda})$ denote an optimal solution to Problem~\ref{P3},
when $g_i(\cdot)$ is from the class of functions specified in \eqref{Extended_utility},
and $\alpha_i\ge 1$, $A_i\ge0$ and $B_i>0$. 
If $\lambda_{i,r}>0$ only for one resource at each server $i$, then $({\bf x},{\bf \lambda})$
will be locally unique in terms of the total allocated tasks to each user, $\{x_n\}$.
\end{theorem}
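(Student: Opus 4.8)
\emph{Proof idea.} The plan is to restrict attention to a neighbourhood of a fixed solution, use the single-bottleneck hypothesis to reduce the characterizing conditions to a system in the aggregates $\{x_n\}$ and the per-server prices, compare two nearby solutions directly, and show that they must coincide in $\{x_n\}$; by Lemma~\ref{lem_connectedness} and Corollary~\ref{corollary_local_global1} this local uniqueness is exactly what is needed.

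\emph{Localization and the reduced system.} The first step is to observe that for $({\bf x}',{\bf\lambda}')$ in a small enough neighbourhood of the solution $({\bf x},{\bf\lambda})$ the active set is unchanged: at each server $i$ the unique saturated resource $\rho(i)$ (the one with $\lambda_{i,\rho(i)}=:\mu_i>0$; some resource must bind since $\mathcal{N}_i\neq\emptyset$ and $g_i'>0$) stays saturated, the others keep strict slack, and the support $E:=\{(n,i):x_{n,i}>0\}$ is the same (aside from degenerate cases where a complementarity constraint is already tight, which will need a separate remark). Thus $({\bf x}',{\bf\lambda}')$ satisfies $\mu_i'\gamma_{n,i}d_{n,\rho(i)}=g_i'(x_n'/(\phi_n\gamma_{n,i}))$ for $(n,i)\in E$ — see \eqref{mapping2}, with $d_{n,\rho(i)}>0$ on $E$ since otherwise this would read $0=g_i'(\cdot)>0$ — and $\sum_{n:(n,i)\in E}x_{n,i}'d_{n,\rho(i)}=c_{i,\rho(i)}$ for every $i$, together with $x_n'=\sum_ix_{n,i}'$; the same holds for $({\bf x},{\bf\lambda})$. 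By Corollary~\ref{Corollary_EF_SI} and \eqref{lower_bound_tasks} (available because $\alpha_i\ge1$) one has $x_n,x_n'\ge x_n^{\min}>0$, and with the boundedness of the multipliers established in Section~\ref{sec:reformulate:structure} this keeps all arguments of $g_i',g_i''$ in a fixed compact subinterval of $(0,\infty)$ on which $g_i''<0$.

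\emph{Comparing the two solutions.} Next I would subtract the stationarity relations of the two solutions and apply the mean value theorem to $g_i'$, obtaining for every $(n,i)\in E$ an identity $\gamma_{n,i}d_{n,\rho(i)}(\mu_i'-\mu_i)=-\kappa_{n,i}(x_n'-x_n)$ with $\kappa_{n,i}>0$; hence along every edge $\mu_i'-\mu_i$ and $x_n'-x_n$ have opposite signs, so on each connected component $K$ of the bipartite support graph $(\mathcal{N}\cup\mathcal{K},E)$ all the differences $x_n'-x_n$, $n\in K$, share one sign and all $\mu_i'-\mu_i$, $i\in K$, the opposite one. Subtracting the capacity equalities gives $\sum_{n:(n,i)\in E}d_{n,\rho(i)}(x_{n,i}'-x_{n,i})=0$ for each $i$, while $\sum_i(x_{n,i}'-x_{n,i})=x_n'-x_n$. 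If $K$ is a tree I would choose server weights $w_i$, all of one sign, with $w_id_{n,\rho(i)}=w_jd_{n,\rho(j)}$ whenever $(n,i),(n,j)\in E$ (solvable uniquely up to scale because $d_{n,\rho(i)}>0$), put $u_n:=-w_id_{n,\rho(i)}$ (so $u_n$ has the sign opposite to $w$), and combine the capacity-difference equations with weights $w_i$ and the aggregation identities with weights $u_n$: every $x_{n,i}'-x_{n,i}$ cancels and one is left with $\sum_{n\in K}u_n(x_n'-x_n)=0$; since $u_n$ has a fixed sign, $x_n'-x_n$ a common sign on $K$, and $K$ has at least one user, this forces $x_n'=x_n$ on $K$. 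If $K$ contains a cycle, the stationarity relations around the cycle over-determine $\{x_n'-x_n\}_{n\in K}$ and force the same conclusion, except in a measure-zero coincidence of the $g_i''$-values which needs a short extra argument. Since $K$ is arbitrary, $x_n'=x_n$ for all $n$.

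\emph{Main obstacle.} The crux — and the reason the statement is confined to $\{x_n\}$ — is that a user served by more than one server has genuinely non-unique per-server allocations $\{x_{n,i}\}$, so the whole argument has to be run at the level of the aggregates; this is precisely where the hypotheses bite, with $B_i>0$ forcing the strict curvature $g_i''<0$ that drives the sign argument, and $\alpha_i\ge1$ delivering the interior lower bound $x_n^{\min}>0$ via sharing incentive. I expect the delicate part to be the combinatorial bookkeeping of the last step — handling simultaneously all connected components and all cycles of the support graph, together with the degenerate complementarity cases flagged above — rather than any analytic estimate.
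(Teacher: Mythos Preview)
Your skeleton matches the paper's proof almost step for step: assume a nearby solution $({\bf x}',{\bf\lambda}')$, use continuity to freeze the active sets, linearize the stationarity and capacity relations into a homogeneous system in $(\delta x_n,\delta\lambda_{i,r^*(i)})$, and use the sign coupling along the support graph. Where you diverge is only in the closing move. The paper does not split into tree/cycle cases; instead it writes the linearized stationarity relations \eqref{KKT2_2} as a matrix equation $[D\mid V]\,[\delta{\lambda};\delta{\bf x}]=0$, with $D$ carrying the entries $d_{n,r^*(i)}$ and $V$ the coefficients $\omega_{n,i}$ from its equation~(Th\_U\_7), and then argues that $[D\mid V]$ has full column rank $N+S$.

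The substantive gap in your proposal is the cycle case, and it is exactly there that the hypothesis $B_i>0$ does real work. Your tree argument is clean and correct, but notice that it uses $B_i>0$ only through strict concavity ($g_i''<0$), which would equally follow from $A_i>0$ alone. In fact, with $B_i=0$ one has $\omega_{n,i}=\alpha_i d_{n,r^*(i)}\lambda_{i,r^*(i)}/x_n$, so the linearized stationarity reduces to $\delta x_n/x_n=-\delta\lambda_{i,r^*(i)}/(\alpha_i\lambda_{i,r^*(i)})$ for every $(n,i)\in E$; on any cycle this forces all $\delta x_n/x_n$ equal but does not force them to vanish, and one can solve the capacity/aggregation identities with that common ratio nonzero. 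Thus uniqueness in $\{x_n\}$ genuinely fails on cyclic support graphs when $B_i=0$, and your ``measure-zero coincidence'' escape clause is not available. The paper closes this by observing that when $B_i>0$ the columns of $V$ cannot be written as linear combinations of the columns of $D$ (precisely because $\omega_{n,i}$ then contains the extra term $B_i/x_n^2$ that is not proportional to $d_{n,r^*(i)}$), which is what lifts the rank to $N+S$. So your identification of where $B_i>0$ enters (``forcing strict curvature'') undershoots; the hypothesis is there to kill the cycle kernel, and your proof needs to use it in that role rather than defer the cycle case to a later remark.
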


\section{Iterative solution and distributed implementation}\label{sec:algo}
In this section, we first develop an iterative (centralized) algorithm that is globally convergent to an optimal solution
(Nash equilibrium) to Problem~\ref{P1}. 
Next, we will propose a simple heuristic to solve this problem in a distributed fashion.



\subsection{Centralized solution}\label{sec:algo:ctr}
As discussed in Section~\ref{sec:reformulate}, ${\bf x}$ is a solution to Problem~\ref{P1},
if and only if there exists a set of multipliers such that $({\bf x},{\bf \lambda})$ is a solution to Problem~\ref{P3}.
According to Theorem~\ref{Th_stationary_points}, in order to find a solution to Problem~\ref{P3}, we may employ an iterative descent algorithm which converges to a stationary point where $\nabla\Psi({\bf x},{\bf \lambda})=0$. In the following, we propose an iterative algorithm inspired by projected-gradient method.

Initially, we begin with a feasible point, $({\bf x}^1,{\bf \lambda}^1)$, which satisfies \eqref{P3_1}-\eqref{P3_3}.
Then, in each iteration $h\ge1$, we update $({\bf x}^h, {\bf \lambda}^h)$ such that $\Psi({\bf x}^{h+1},{\bf \lambda}^{h+1})$ is decreased compared to $\Psi({\bf x}^{h},{\bf \lambda}^{h})$, while $({\bf x}^{h+1},{\bf \lambda}^{h+1})$ remains feasible. Specifically, let $\mathcal{R}_i^h$ denote
the set of saturated resources at server $i$ under the allocation ${\bf x}^h$,
\be
\mathcal{R}_i^h:=\{r\mid \sum_{n\in\mathcal{N}_i}x_{n,i}^hd_{n,r} = c_{i,r}\}.
\ee
The opposite of gradient, $-\nabla\Psi({\bf x}^{h},{\bf \lambda}^{h})$, is a descent direction.
However, by moving ${\bf x}^h$ in the direction of $-\nabla_{\bf x}\Psi({\bf x}^{h},{\bf \lambda}^{h})$,
the capacity constraint would be violated for resource $r\in\mathcal{R}_i^h$ if:
\be
\beta_{i,r}^h:=-(\nabla_{{\bf x}_i}\Psi({\bf x}^{h},{\bf \lambda}^{h}))^T{{\bf d}_{r}}>0,
\ee
where ${\bf d}_{r}:=[d_{n,r}]_{N\times1}$, and $\nabla_{{\bf x}_i}\Psi$ is the gradient of $\Psi$ with respect to ${\bf x}_i$. So, the moving direction should be chosen in a way that the capacity constraints are not violated for any resource $r\in\mathcal{R}_i^h$.
Furthermore, to reach a feasible point satisfying \eqref{P3_3}, we need to preserve equality for the capacity constraints 
corresponding to resources $r\in\mathcal{R}_i^h$ with $\lambda_{i,r}^h>0$. Hence, we choose the moving direction for updating ${\bf x}^h_i$, $({\bf v}_x^h)_i$, as the projection of
$-\nabla_{{\bf x}_i}\Psi({\bf x}^{h},{\bf \lambda}^{h})$ onto:
\be\nonumber
\Omega_i:=\{{\bf v}\in\mathbb{R}^N  \mid {\bf v}^T{\bf d}_{r}\le0,~ r\in\mathcal{R}_i^h:~\lambda_{i,r}=0,\:\\
                      \text{~and~~}{\bf v}^T{\bf d}_{r}=0,~ r\in\mathcal{R}_i^h:~\lambda_{i,r}>0 \}.\label{projection_set}
\ee
To update ${\bf \lambda}_{i,r}^h$, the following is a descent direction\footnote{The function $\text{U}(\cdot)$ is $\text{U}(z)=1$ if $z\ge0$, and $\text{U}(z)=0$ otherwise.}:
\be
(\tilde{\bf v}^h_{\lambda})_{i,r}:=-\frac{\partial\Psi}{\partial\lambda_{i,r}}
+ \beta^h_{i,r}~\text{U}\left(-\beta^h_{i,r}\frac{\partial\Psi}{\partial\lambda_{i,r}}\right).\label{lambda_non_feas_dir}
\ee
However, to maintain feasibility, we need to choose the moving direction $({\bf v}^h_\lambda)_{i,r}$ such that:
\be
({\bf v}^h_\lambda)_{i,r}:=[(\tilde{\bf v}^h_\lambda)_{i,r}]^+_{\lambda_{i,r}},\label{lambda_feas_dir}
\ee
where, $[v]^+_{z}=v$ if $z>0$, and otherwise $[v]^+_{z}=\max\{v,0\}$.
Finally, we update
\be
&& {\bf x}^{h+1}={\bf x}^{h}+\eta^h{\bf v}_x^h,\\
&& {\bf \lambda}^{h+1}={\bf \lambda}^{h}+\eta^h{\bf v}_\lambda^h,\qquad\qquad\qquad\;
\ee
where, the step size $\eta^h$ is chosen such that:
\be
&& \Psi({\bf x}^{h+1},{\lambda}^{h+1}) - \Psi({\bf x}^{h},{\lambda}^{h}) < 0,\qquad\label{Step_size_C1}\\
&& \sum_{n\in\mathcal{N}_i}x_{n,i}^{h+1}d_{n,r}\le c_{i,r},~\forall r,i,\label{Step_size_C2}\\
&& \lambda_{i,r}^{h+1}\ge 0,~\forall r,i.\label{Step_size_C4}
\ee
The algorithm terminates when $\|{\bf v}^h\|<\epsilon$, for some arbitrary $\epsilon>0$.
The above described algorithm, referred to as \emph{Per-Server Multi-resource Fair Allocation} (PS-MFA) algorithm has been summarized in Algorithm I.

\begin{table}
\footnotesize
\captionsetup{labelformat=empty}
\caption{Algorithm I: {PS-MFA Algorithm}}
\label{table2}
\begin{tabular}{p{8.48cm}}
\hline\noalign{\smallskip}
Initially, begin with a feasible point $({\bf x^1},{\bf \lambda^1})$ satisfying \eqref{P3_1}-\eqref{P3_3}. Then, in each iteration $h\ge1$, take the following steps.

\begin{itemize}
\item [1)] Choose the moving direction for ${\bf x}_i$, as the projection of
$-\nabla_{{\bf x}_i}\Psi({\bf x}^h,\lambda^h)$ onto $\Omega_i$,
\be
({\bf v}_x^h)_i=\text{Proj}_{\Omega_i}(-\nabla_{{\bf x}_i}\Psi({\bf x}^h,\lambda^h)),
\ee
where $\Omega_i$ is given by~\eqref{projection_set}.
\item [2)] Choose the moving direction for $\lambda_{i,r}$ according to \eqref{lambda_feas_dir}.
\item [3)] Choose the step size, $\eta^h$, sufficiently small, so that the conditions in \eqref{Step_size_C1}-\eqref{Step_size_C4} are satisfied.
\item [4)] Let ${\bf x}^{h+1}={\bf x}^{h}+\eta^h{\bf v}_x^h$,~
           and ${\bf \lambda}^{h+1}={\bf \lambda}^{h}+\eta^h{\bf v}_{\lambda}^h$.
\item [5)] Stop when $\|{\bf v}^h\|\le\varepsilon$, where ${\bf v}^h = [{\bf v}_x^h;{\bf v}_{\lambda}^h]$.
\end{itemize}\\
\noalign{\smallskip}\hline
\end{tabular}
\end{table}

\begin{lemma}\label{lem_descent}
The moving direction, ${\bf v}^h=[{\bf v}_x^h;{\bf v}_\lambda^h]$, is a strictly descent direction, unless $\|{\bf v}^h\|=0$.
\end{lemma}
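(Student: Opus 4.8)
The plan is to verify directly that the directional derivative $\langle\nabla\Psi({\bf x}^h,{\bf \lambda}^h),{\bf v}^h\rangle$ is strictly negative unless ${\bf v}^h=0$; this is the operative meaning of ``strictly descent direction'' here, and together with continuity of $\nabla\Psi$ it is what guarantees that a step size meeting \eqref{Step_size_C1}--\eqref{Step_size_C4} exists. Splitting the inner product by coordinate blocks,
\[
\langle\nabla\Psi,{\bf v}^h\rangle=\sum_{i\in\mathcal{K}}\langle\nabla_{{\bf x}_i}\Psi,({\bf v}_x^h)_i\rangle+\sum_{i\in\mathcal{K}}\sum_r\frac{\partial\Psi}{\partial\lambda_{i,r}}({\bf v}_\lambda^h)_{i,r},
\]
I would bound each sum above by a nonpositive quantity and then track exactly when equality is possible.

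For the ${\bf x}$-block, the key fact is that $\Omega_i$ in \eqref{projection_set} is a closed convex cone (an intersection of homogeneous half-spaces and hyperplanes). Hence the variational characterization of the Euclidean projection onto a closed convex cone applies to $w_i:=-\nabla_{{\bf x}_i}\Psi$: one has $\langle w_i-\mathrm{Proj}_{\Omega_i}(w_i),\,\mathrm{Proj}_{\Omega_i}(w_i)\rangle=0$, so $\langle w_i,({\bf v}_x^h)_i\rangle=\|({\bf v}_x^h)_i\|^2$. Thus each ${\bf x}$-term equals $-\|({\bf v}_x^h)_i\|^2\le0$, vanishing exactly when $({\bf v}_x^h)_i=0$.

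For the $\lambda$-block I would argue term by term. Fix $(i,r)$ and abbreviate $p:=\partial\Psi/\partial\lambda_{i,r}$, $\beta:=\beta_{i,r}^h$, $\lambda:=\lambda_{i,r}^h\ge0$, so that $\tilde v=-p+\beta\,\text{U}(-\beta p)$ and $({\bf v}_\lambda^h)_{i,r}$ equals $\tilde v$ when $\lambda>0$ and $\max\{\tilde v,0\}$ when $\lambda=0$. A case split on the sign of $\beta p$ gives: if $\beta p\le0$ then $\text{U}(-\beta p)=1$, $\tilde v=-p+\beta$, and $p\,\tilde v=-p^2+p\beta\le-p^2$; if $\beta p>0$ then $\text{U}(-\beta p)=0$, $\tilde v=-p$, and $p\,\tilde v=-p^2$. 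In both branches $p\,\tilde v\le0$, and the feasibility clipping $[\cdot]^+_\lambda$ (active only when $\lambda=0$, where it merely replaces a nonpositive $\tilde v$ by $0$) preserves $p\,({\bf v}_\lambda^h)_{i,r}\le0$. Summing over $(i,r)$ and adding the ${\bf x}$-block yields $\langle\nabla\Psi,{\bf v}^h\rangle\le0$.

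The remaining, and I expect genuinely delicate, step is promoting this to a strict inequality. If $\langle\nabla\Psi,{\bf v}^h\rangle=0$ then $({\bf v}_x^h)_i=0$ for every $i$ and $p\,({\bf v}_\lambda^h)_{i,r}=0$ for every $(i,r)$; the estimates above then force $p=0$ wherever $({\bf v}_\lambda^h)_{i,r}\neq0$, and in that sub-case $\tilde v=\beta\,\text{U}(0)=\beta$, so one must exclude $\beta_{i,r}^h\neq0$. The obstacle is precisely this: I would use that $({\bf v}_x^h)_i=0$ means $-\nabla_{{\bf x}_i}\Psi$ lies in the polar cone of $\Omega_i$ (a nonnegative combination of the ${\bf d}_r$ with $r\in\mathcal{R}_i^h,\ \lambda_{i,r}^h=0$, plus an arbitrary combination of the ${\bf d}_r$ with $\lambda_{i,r}^h>0$), feed this representation into $\beta_{i,r}^h=-(\nabla_{{\bf x}_i}\Psi)^\top{\bf d}_r$, and combine it with the concrete form of $\nabla\Psi$ inherited from $\psi$ in \eqref{NCP_function} and the mappings \eqref{mapping1}--\eqref{mapping2} to show that the $\text{U}(\cdot)$-correction cannot be triggered with a nonzero value at a point where the ${\bf x}$-block does not move; this forces ${\bf v}_\lambda^h=0$ as well. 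Hence $\langle\nabla\Psi,{\bf v}^h\rangle=0\Rightarrow{\bf v}^h=0$, which is the contrapositive of the claim.
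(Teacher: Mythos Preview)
Your approach is the same as the paper's: decompose $\langle\nabla\Psi,{\bf v}^h\rangle$ into the ${\bf x}$- and $\lambda$-blocks, handle the former via the projection characterization on $\Omega_i$ (the paper plugs ${\bf v}_i=0$ into the variational inequality for the projection to get $({\bf v}_x^h)_i^T\nabla_{{\bf x}_i}\Psi\le-\|({\bf v}_x^h)_i\|^2$, while your cone-projection identity gives equality---either suffices), and treat the latter termwise. For the $\lambda$-block the paper simply writes, for each $(i,r)$ with $({\bf v}_\lambda^h)_{i,r}\neq0$, that $p\,({\bf v}_\lambda^h)_{i,r}=-p^2+\beta p\,\text{U}(-\beta p)<0$, and concludes.

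You are right to probe the strict inequality: when $p=\partial\Psi/\partial\lambda_{i,r}=0$ but $\beta_{i,r}^h\neq0$, the displayed quantity vanishes while $({\bf v}_\lambda^h)_{i,r}$ can still be nonzero (equal to $\beta$, or $\max\{\beta,0\}$ after clipping), so the paper's per-term strict sign is not justified in that boundary case; the paper does not discuss it. Your proposed remedy---use $({\bf v}_x^h)_i=0$ to place $-\nabla_{{\bf x}_i}\Psi$ in the polar cone of $\Omega_i$ and then exploit the explicit structure of $\nabla\Psi$---is a reasonable outline (and in fact resembles the machinery the paper deploys in the proof of Lemma~\ref{lem_eqv}), but you do not actually carry out the computation that forces $\beta_{i,r}^h=0$, and it is not obvious that the ingredients you list close the gap without further hypotheses. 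In sum, your write-up follows the paper's route, is more honest about where the difficulty sits, and leaves roughly the same loose end the paper itself leaves.
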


The proof appears in the appendix. To further analyze the convergence point of the PS-MFA algorithm,
we need to make one of the following assumptions.

\begin{assumption}\label{assmp_degen}
The PS-MFA algorithm terminates at a point where $\lambda_{i,r}>0$ for every $r\in\mathcal{R}_i,~\forall i$.
\end{assumption}

\begin{assumption}\label{assmp_simple}
For every server $i$ and resource $r$ there exists at least one user $n\notin\mathcal{N}_i$ with $d_{n,r}>0$.
\end{assumption}

Assumption~\ref{assmp_degen} requires that the algorithm terminates at a \emph{non-degenerate point},
where $\lambda_{i,r}>0$ for every $r\in\mathcal{R}_i$.
Assumption~\ref{assmp_simple} does not restrict the solution,
but it gently restricts the model. In particular, it holds when all users demand all types of resources,
and there exists one user $n\notin\mathcal{N}_i$ for each server $i$.

\begin{lemma}\label{lem_eqv}
Under either of Assumptions~\ref{assmp_degen} or \ref{assmp_simple}, ${\bf v}^h=0$ only when $\nabla_{\bf x}\Psi=0$ (the proof is given in the appendix).
\end{lemma}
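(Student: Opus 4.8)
The plan is to assume ${\bf v}^h=0$, that is ${\bf v}^h_x=0$ and ${\bf v}^h_\lambda=0$, and to deduce $\nabla_{\bf x}\Psi=0$ one server at a time. Fix a server $i$, let $\mathcal{R}^h_i$ be its set of saturated resources, and split it as $S_+:=\{r\in\mathcal{R}^h_i:\lambda_{i,r}>0\}$ and $S_0:=\{r\in\mathcal{R}^h_i:\lambda_{i,r}=0\}$.

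I would first unpack $({\bf v}^h_x)_i=\text{Proj}_{\Omega_i}(-\nabla_{{\bf x}_i}\Psi)=0$. The projection onto a closed convex cone vanishes exactly when the projected vector lies in the polar cone, and by a Farkas-type description the polar of the polyhedral cone $\Omega_i$ in \eqref{projection_set} equals $\mathrm{cone}\{{\bf d}_r:r\in S_0\}+\mathrm{span}\{{\bf d}_r:r\in S_+\}$. Hence there are scalars $\mu_{i,r}\ge0$ ($r\in S_0$) and $\sigma_{i,r}$ ($r\in S_+$) with
\[
-\nabla_{{\bf x}_i}\Psi=\sum_{r\in S_0}\mu_{i,r}{\bf d}_r+\sum_{r\in S_+}\sigma_{i,r}{\bf d}_r .
\]
Next I would unpack $({\bf v}^h_\lambda)_{i,r}=0$ via \eqref{lambda_non_feas_dir} and \eqref{lambda_feas_dir}. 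For $r\in S_+$ the truncation $[\cdot]^+_{\lambda_{i,r}}$ is the identity, so $(\tilde{\bf v}^h_\lambda)_{i,r}=0$; a short case split on the step function $\text{U}(\cdot)$ forces $\partial\Psi/\partial\lambda_{i,r}=0$ together with $\beta^h_{i,r}=0$, i.e. $(\nabla_{{\bf x}_i}\Psi)^T{\bf d}_r=0$. For $r\in S_0$ the truncation is $\max\{\cdot,0\}$, so $(\tilde{\bf v}^h_\lambda)_{i,r}\le0$; the analogous case split yields $\partial\Psi/\partial\lambda_{i,r}\ge0$, and, in addition, that whenever $\partial\Psi/\partial\lambda_{i,r}=0$ one has $\beta^h_{i,r}\le0$.

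Combining these, I would take the inner product of the displayed representation with $\nabla_{{\bf x}_i}\Psi$. Using $(\nabla_{{\bf x}_i}\Psi)^T{\bf d}_r=0$ on $S_+$ and $(\nabla_{{\bf x}_i}\Psi)^T{\bf d}_r=-\beta^h_{i,r}$ on $S_0$, everything collapses to
\[
\|\nabla_{{\bf x}_i}\Psi\|^2=\sum_{r\in S_0}\mu_{i,r}\,\beta^h_{i,r},\qquad \mu_{i,r}\ge0 .
\]
Under Assumption~\ref{assmp_degen} the set $S_0$ is empty (at a termination point $\mathcal{R}^h_i$ coincides with $\mathcal{R}_i$, every resource of which has a positive multiplier), so the right-hand side is the empty sum and $\nabla_{{\bf x}_i}\Psi=0$; since $i$ was arbitrary this gives $\nabla_{\bf x}\Psi=0$. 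Under Assumption~\ref{assmp_simple} the remaining task is to show $\mu_{i,r}\beta^h_{i,r}\le0$ for every $r\in S_0$. Here I would use the model restriction (for each resource there is a user that is ineligible at server $i$ and demands that resource) together with the sign behaviour of $\partial\psi/\partial b$ --- namely that $\frac{\partial\psi}{\partial b}(x_{n,i},f_{n,i})$ is nonpositive unless $x_{n,i}>0$ and $f_{n,i}>0$ --- to argue that $\partial\Psi/\partial\lambda_{i,r}=\sum_{n\in\mathcal{N}_i}\frac{\partial\psi}{\partial b}(x_{n,i},f_{n,i})\,d_{n,r}\le0$ at a saturated resource whose multiplier is zero. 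Combined with $\partial\Psi/\partial\lambda_{i,r}\ge0$ from the previous step this forces $\partial\Psi/\partial\lambda_{i,r}=0$, hence $\beta^h_{i,r}\le0$, hence $\mu_{i,r}\beta^h_{i,r}\le0$, and again $\nabla_{\bf x}\Psi=0$.

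The routine ingredients are the polar-cone identity, the two indicator case splits, and the inner-product cancellation. The main obstacle is the closing step under Assumption~\ref{assmp_simple}: converting ``every resource is demanded by some user ineligible at server $i$'' into the sign statement $\partial\Psi/\partial\lambda_{i,r}\le0$ at a degenerate saturated resource. The subtlety is that $\partial\Psi/\partial\lambda_{i,r}$ only involves users eligible at server $i$, so the assumption has to enter indirectly --- through the fact that an ineligible user's demand still constrains the global solution via the coupling $x_n=\sum_j x_{n,j}$ inside the $f_{n,j}$, which limits which $f_{n,i}$ can be strictly positive once ${\bf v}^h_\lambda=0$ (equivalently, rules out the configuration $\partial\Psi/\partial\lambda_{i,r}>0$, $\beta^h_{i,r}>0$, $\mu_{i,r}>0$ on $S_0$). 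Should that route prove awkward, the fallback is to show, again under Assumption~\ref{assmp_simple}, that $S_0=\emptyset$ at any point with ${\bf v}^h=0$, reducing the argument to the first case.
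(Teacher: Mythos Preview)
Your treatment of the Assumption~\ref{assmp_degen} case is correct and essentially the paper's argument, with your inner-product identity
\[
\|\nabla_{{\bf x}_i}\Psi\|^2=\sum_{r\in S_0}\mu_{i,r}\,\beta^h_{i,r}
\]
being a clean shortcut that avoids the paper's appeal to linear independence of the vectors $\{{\bf d}_r\}$.

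The Assumption~\ref{assmp_simple} case, however, has a genuine gap, and the difficulty you flag is real and not resolved by your sketch. Your plan is to force $\partial\Psi/\partial\lambda_{i,r}\le0$ for $r\in S_0$ via the sign of $\partial\psi/\partial b$. But, as you note yourself, $\partial\Psi/\partial\lambda_{i,r}=\sum_{m\in\mathcal{N}_i}\frac{\partial\psi}{\partial b}(x_{m,i},f_{m,i})\,d_{m,r}$ only involves \emph{eligible} users, while Assumption~\ref{assmp_simple} supplies an \emph{ineligible} user. There is no mechanism in your outline that transfers information from that ineligible user into this sum; nothing rules out the configuration $x_{m,i}>0$, $f_{m,i}>0$ for some eligible $m$, which would make a summand positive. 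The ``indirect'' route via the coupling $x_n=\sum_j x_{n,j}$ does not produce a sign constraint on $\partial\Psi/\partial\lambda_{i,r}$, and the fallback of proving $S_0=\emptyset$ under Assumption~\ref{assmp_simple} is simply false in general.

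The paper's argument for Assumption~\ref{assmp_simple} uses a different and more direct idea: read your polar-cone representation
\[
-\nabla_{{\bf x}_i}\Psi=\sum_{r\in S_0}\mu_{i,r}{\bf d}_r+\sum_{r\in S_+}\sigma_{i,r}{\bf d}_r
\]
\emph{componentwise} at an ineligible user $n\notin\mathcal{N}_i$. The $n$-th component of the left-hand side vanishes (because $\psi_{n,i}\equiv0$ for $n\notin\mathcal{N}_i$), while on the right-hand side --- after first eliminating the $S_+$ coefficients using $\beta^h_{i,r}=0$ for $r\in S_+$ together with a linear-independence reduction on $\{{\bf d}_r:r\in\mathcal{R}^h_i\}$ --- one is left with $\sum_{r\in S_0}\mu_{i,r}d_{n,r}\ge0$. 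If some $\mu_{i,r}>0$, Assumption~\ref{assmp_simple} provides an $n\notin\mathcal{N}_i$ with $d_{n,r}>0$, making this strictly positive, a contradiction. Hence $\mu_{i,r}=0$ for all $r\in S_0$, and $\nabla_{{\bf x}_i}\Psi=0$. This is the missing idea: Assumption~\ref{assmp_simple} enters through the \emph{coordinates} of the $N$-vectors ${\bf d}_r$ and $\nabla_{{\bf x}_i}\Psi$ indexed by ineligible users, not through $\partial\Psi/\partial\lambda_{i,r}$.
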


\begin{theorem}\label{Th_convergence}
The PS-MFA algorithm terminates at a stationary point of $\Psi$ under either of Assumptions~\ref{assmp_degen} or \ref{assmp_simple}. 
\end{theorem}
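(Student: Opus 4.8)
The plan is to view the PS-MFA algorithm as a feasible-direction descent method for the merit function $\Psi$ over the polyhedral constraints \eqref{P3_1}--\eqref{P3_3}, and to establish four things in turn: (i) the sequence $\Psi({\bf x}^h,{\bf \lambda}^h)$ is monotone and bounded below, so the per-step decrease vanishes; (ii) the iterates stay in a compact set; (iii) at every accumulation point of the iterates the search direction ${\bf v}^h$ vanishes; and (iv) a vanishing direction forces $\nabla\Psi=0$. Once (iv) holds, Theorem~\ref{Th_stationary_points} immediately identifies the limit as a solution of Problem~\ref{P3}, i.e.\ a stationary point of $\Psi$, which is the claim.

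For (i)--(ii): by Step~4 of Algorithm~I together with the feasibility requirements \eqref{Step_size_C2} and \eqref{Step_size_C4} on the step size, every iterate $({\bf x}^h,{\bf \lambda}^h)$ satisfies \eqref{P3_1}--\eqref{P3_3}; by Lemma~\ref{lem_descent} together with \eqref{Step_size_C1}, $\Psi({\bf x}^h,{\bf \lambda}^h)$ is strictly decreasing whenever $\|{\bf v}^h\|>0$; and since $\psi(\cdot,\cdot)\ge0$ we have $\Psi\ge0$. Hence $\Psi({\bf x}^h,{\bf \lambda}^h)\downarrow\Psi^\infty\ge0$ and, in particular, $\Psi({\bf x}^h,{\bf \lambda}^h)-\Psi({\bf x}^{h+1},{\bf \lambda}^{h+1})\to0$. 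As noted in Section~\ref{sec:reformulate:structure}, the $\{x_n\}$ (hence all $x_{n,i}$) lie in the bounded region \eqref{P1_2}--\eqref{P1_4} and the multipliers are bounded as well, so the iterates remain in a compact set $\mathcal{C}$.

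For (iii): suppose the algorithm never stops, i.e.\ $\|{\bf v}^h\|>\varepsilon$ for all $h$, and pass to a convergent subsequence $({\bf x}^{h_k},{\bf \lambda}^{h_k})\to({\bf x}^*,{\bf \lambda}^*)\in\mathcal{C}$. The search-direction map is genuinely discontinuous --- it depends on the active set $\mathcal{R}_i^h$ through $\Omega_i$ in \eqref{projection_set}, on which saturated resources carry $\lambda_{i,r}>0$, and on the sign selections inside $\text{U}(\cdot)$ and $[\cdot]^+$ in \eqref{lambda_non_feas_dir}--\eqref{lambda_feas_dir}. Since there are only finitely many such configurations, I would refine the subsequence so that all of them are constant; along it the direction map coincides with a single continuous map, so ${\bf v}^{h_k}\to{\bf v}^*$ with $\|{\bf v}^*\|\ge\varepsilon>0$. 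Reading Step~3 as a sufficient-decrease rule (Armijo or limited-minimization, which is the intent behind \eqref{Step_size_C1}) and using that $\Psi$ is $C^1$ on the compact set $\mathcal{C}$ and that ${\bf v}^{h_k}$ is a strictly (and gradient-related) descent direction by Lemma~\ref{lem_descent}, there is $\delta>0$ with $\Psi({\bf x}^{h_k+1},{\bf \lambda}^{h_k+1})\le\Psi({\bf x}^{h_k},{\bf \lambda}^{h_k})-\delta$ for all large $k$, contradicting that the per-step decrease tends to $0$. Hence the algorithm terminates, and every accumulation point of $\{({\bf x}^h,{\bf \lambda}^h)\}$ satisfies ${\bf v}^*=0$.

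For (iv): at such a point Lemma~\ref{lem_eqv} --- and this is precisely where Assumption~\ref{assmp_degen} or Assumption~\ref{assmp_simple} enters --- gives $\nabla_{\bf x}\Psi({\bf x}^*,{\bf \lambda}^*)=0$ from ${\bf v}_x^*=0$. Then $\beta_{i,r}^*=-(\nabla_{{\bf x}_i}\Psi)^T{\bf d}_r=0$ for all $i,r$, so \eqref{lambda_non_feas_dir} collapses to $(\tilde{\bf v}_\lambda^*)_{i,r}=-\partial\Psi/\partial\lambda_{i,r}$; for $r\notin\mathcal{R}_i^*$ the multiplier is pinned at $0$ and is not a free variable, while for $r\in\mathcal{R}_i^*$ Assumption~\ref{assmp_degen} (respectively the non-degeneracy used in the Assumption~\ref{assmp_simple} case) gives $\lambda_{i,r}^*>0$, so $[(\tilde{\bf v}_\lambda^*)_{i,r}]^+_{\lambda_{i,r}^*}=(\tilde{\bf v}_\lambda^*)_{i,r}$ and ${\bf v}_\lambda^*=0$ forces $\partial\Psi/\partial\lambda_{i,r}=0$. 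Combining the two blocks, $\nabla\Psi({\bf x}^*,{\bf \lambda}^*)=0$, and Theorem~\ref{Th_stationary_points} finishes the proof. The main obstacle is step (iii): because the direction map is discontinuous, the textbook ``descent plus compactness implies stationarity'' argument must be routed through the finiteness of the active-set/sign configurations and only then combined with continuity; a secondary caveat is that Step~3 must be interpreted as a genuine sufficient-decrease rule, since strict decrease alone with vanishing step sizes would not rule out $\liminf_h\|{\bf v}^h\|>0$.
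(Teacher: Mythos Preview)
Your overall structure (monotone bounded $\Psi$, compactness, accumulation-point analysis, then Lemma~\ref{lem_eqv}) tracks the paper's proof, though you elaborate step (iii) much more than the paper does; the paper simply asserts that boundedness of $\Psi$ together with Lemma~\ref{lem_descent} force termination and $\|{\bf v}^h\|\to0$.

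The genuine gap is in step (iv), specifically in how you pass from $\nabla_{\bf x}\Psi=0$ to $\nabla_{\bf\lambda}\Psi=0$. You argue directly from ${\bf v}_\lambda^*=0$: once $\beta_{i,r}^*=0$, \eqref{lambda_non_feas_dir} gives $(\tilde{\bf v}_\lambda^*)_{i,r}=-\partial\Psi/\partial\lambda_{i,r}$, and then you claim that for $r\in\mathcal{R}_i^*$ Assumption~\ref{assmp_degen} ``(respectively the non-degeneracy used in the Assumption~\ref{assmp_simple} case)'' gives $\lambda_{i,r}^*>0$, so the projection $[\cdot]^+_{\lambda_{i,r}^*}$ is inactive and $\partial\Psi/\partial\lambda_{i,r}=0$ follows. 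This is fine under Assumption~\ref{assmp_degen}, but it is \emph{wrong} under Assumption~\ref{assmp_simple}: that assumption is purely structural (for every $i,r$ there is some $n\notin\mathcal{N}_i$ with $d_{n,r}>0$) and says nothing about strict positivity of multipliers at the limit. Under Assumption~\ref{assmp_simple} a saturated resource may well carry $\lambda_{i,r}^*=0$, in which case the clipping in \eqref{lambda_feas_dir} can zero out a strictly negative $(\tilde{\bf v}_\lambda^*)_{i,r}$ and your inference $\partial\Psi/\partial\lambda_{i,r}=0$ fails.

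The paper avoids this problem entirely by not arguing about ${\bf v}_\lambda$ at all. Once Lemma~\ref{lem_eqv} delivers $\nabla_{\bf x}\Psi=0$, the paper invokes the computation inside the proof of Theorem~\ref{Th_stationary_points}: premultiplying $\nabla_{\bf x}\Psi$ by ${\bf d}_b\psi$ shows that $\nabla_{\bf x}\Psi=0$ already forces $\psi(x_{n,i},f_{n,i})=0$ for all $n,i$; the complementary-function property $\partial_a\psi=\partial_b\psi=0\Leftrightarrow\psi=0$ then kills every partial derivative of every $\psi_{n,i}$, so from \eqref{partial_2} one gets $\partial\Psi/\partial\lambda_{i,r}=\sum_m(\partial\psi_{m,i}/\partial f_{m,i})\,d_{m,r}=0$ automatically. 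In short: replace your direct ${\bf v}_\lambda^*=0\Rightarrow\nabla_\lambda\Psi=0$ argument with the route $\nabla_{\bf x}\Psi=0\Rightarrow\psi_{n,i}\equiv0\Rightarrow\nabla\Psi=0$, and the Assumption~\ref{assmp_simple} case goes through without any non-degeneracy claim on the multipliers.
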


\begin{proof}
The facts that $\Psi$ is lower bounded and ${\bf v}^h$ is a descent direction imply that the algorithm converges/terminates.
When the algorithm terminates, $\|{\bf v}^h\|\rightarrow0$, and therefore $\nabla_{\bf x}\Psi=0$ (see Lemma~\ref{lem_eqv}).
This in turn implies that $\psi (x_{n,i},f_{n,i}({\bf x},{\bf \lambda}_i))=0,~\forall n,i$ (see the proof of Theorem~\ref{Th_stationary_points}), and therefore $\nabla\Psi=0$.
\end{proof}

According to Theorem~\ref{Th_stationary_points} and~\ref{Th_convergence}, the sequence of allocations $\{{\bf x}^h\}$ generated by the PS-MFA algorithm globally converges to an optimal solution to Problem~\ref{P1} under either of Assumption~\ref{assmp_degen} or \ref{assmp_simple}.

\subsection{Distributed implementation}\label{sec:algo:dtr}
The PS-MFA algorithm, as described by Algorithm~I, may run in parallel on different servers,
where each server iteratively updates its own allocation parameters, $({\bf x}_i,\lambda_i)$.
However, to find the gradient vector at each server $i$, one needs to know the allocation parameters at the other servers.
To achieve an efficient distributed implementation, we propose a simple heuristic algorithm which directly applies to the NCP in \eqref{KKT2_1}-\eqref{KKT2_2}. First, assume that the Lagrange multipliers, $\{\lambda_{i,r}\}$, are known,
so that the complementary conditions in \eqref{KKT2_1} are satisfied. Starting with a feasible allocation ${\bf x}^1$, in each iteration $h\ge1$,
one may update $x^h_{n,i}$ in the following direction: 
\be
v^{h}_{n,i} &=& \left[-f_{n,i}({\bf x}^h, \lambda)\right]_{x^h_{n,i}}^+,~n\in\mathcal{N}_i,~\forall i,\label{dist_impl_1}\\
x_{n,i}^{h+1} &=& \left[x_{n,i}^h + \kappa_iv^{h}_{n,i}\right]^+,\label{dist_impl_2}
\ee
where, $[v]_z^+=v$ if $z>0$, and otherwise $[v]_z^+=\max\{v,0\}$.
The step size, $\kappa_i>0$, is chosen independently by each server.
According to \eqref{dist_impl_1} and \eqref{dist_impl_2}, $x^h_{n,i}$ is decreased if it is positive and $f_{n,i}({\bf x}^h,\lambda_i)>0$.
Otherwise, it will be increased when $f_{n,i}({\bf x}^h,\lambda_i)<0$.
The above dynamic converges/terminates when the complementary conditions in \eqref{KKT2_2} are satisfied.
The only issue is to find the Lagrange multipliers. As also can be seen in \cite{kelly1998rate},
the Lagrange multipliers could be approximated by:
\be
\lambda_{i,r}({\bf x}^h)=\left[\sum_mx^h_{m,i}d_{m,r}-c_{i,r}+\varepsilon\right]^+/\varepsilon^2,
\ee
where a better approximation is achieved when $\varepsilon\rightarrow0$\footnote{This is equivalent to relaxing the capacity constraint in \eqref{P1_2} for every resource $r$, and adding a quadratic barrier function, $\left(\sum_mx_{m,i}d_{m,r}-c_{i,r}+\varepsilon\right)^2/2\varepsilon^2$, to the objective function in \eqref{P1_1}.}.
The major advantage of the above dynamic 
is that the allocation at each server could be updated
based on the local information on the available resources at each server, without any knowledge of the available resources at other servers. The only information that each server requires is the \emph{total} number of tasks that are allocated to eligible users, which is assumed to be updated whenever an update is made by any of the servers.
The convergence of such a heuristic algorithm is shown through numerical experiments in the next section.

\section{Trace driven simulation}\label{sec:eval}
In this section we evaluate the performance of the $\alpha$PF-VDS allocation mechanism through several experiments driven by real-world traces. Among the publicly available traces, Google cluster-usage data-set is the most extensive one which reports the resource usage for different tasks of different users (Google engineers and services) running on a cluster of 12000 servers over a period of one month.
The resource usage for each task has been measured at 1 second intervals, however, its average value is reported in the data-set with a period of 5 minutes. If a task is terminated during the five minutes period, its resource usage is reported over a shorter interval. Specifically, the data-set is given in a table format where each row reports the start and the end of each measurement period (which is typically 5 minutes), the job ID and task index, and finally the usage of CPU and RAM for the specified task.
The reader may refer to \cite{Google11} for further details.
Despite the detailed information that is provided by this data-set, it does not report the usage of other resources such as network/IO bandwidth.

To do experiments with resource demand vectors of higher dimensionality, we also use a data-set provided by
Bitbrain IT services incorporation which gives cloud service to users with business-critical workloads~\cite{shen2015}. This data-set reports the resource usage (of CPU, RAM, network and storage bandwidth) for different virtual machines each giving service to one user. 
Again, the measurements have been reported every 5 minutes for a period of 4 months.
There are only a few other traces which are publicly available (such as those from Yahoo or Facebook).
However, we do not use such traces herein as they do not provide the information that we need on the usage of different resources.
Furthermore, some of them pertain to specific processing tasks, such as Map-reduce, which may not represent the input to a real world data-center~\cite{shen2015}.

\subsection{Experimental setup}
To do experiments with Google traces, we consider a cluster consisting of four different classes of servers (120 servers in total as shown in Fig.~\ref{fig:num_exmp}), where the configuration of servers are drawn from the distribution of Google cluster servers~\cite{Google11}.
For the input workload, we randomly sample $2\%$ of users from the Google traces, so that the cluster is heavily loaded. The jobs for different users belong to 4 different scheduling classes (specified in a table for different jobs), where the last two classes are more latency sensitive. We classify users/jobs\footnote{While each user in practice may submit several jobs at the same time, for the sake of simplicity in presentation we assume that each user only submits one job. So, we may use jobs or users interchangeably.} into two different groups, $\mathcal{U}_1$ and $\mathcal{U}_2$, where the users in $\mathcal{U}_1$ are less latency sensitive.
Servers in classes A and B are assumed to be public (available to all users), while classes C and D are only available to delay sensitive users, $\mathcal{U}_2$.

%

\begin{figure}[h!]
\centering
\includegraphics[width = 0.99\columnwidth]{./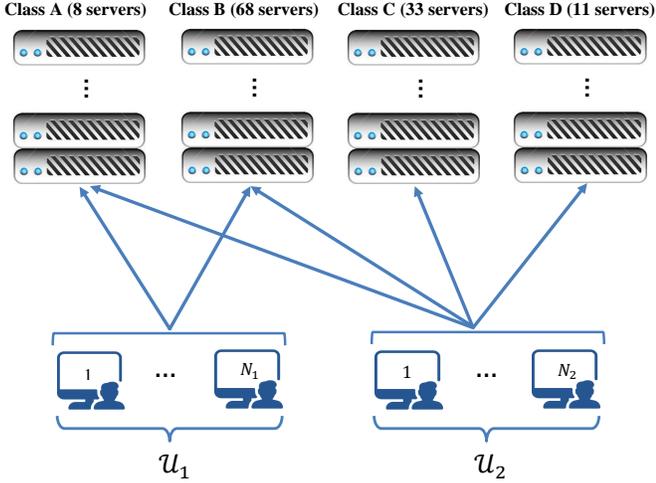}
\footnotesize
\caption{\footnotesize A cluster with four classes of servers (120 servers in total) and two classes of users.
Users are assumed to be equally weighted.
The configurations of resources (CPU and memory respectively) for servers of each class are as follows: $C_A=[1,1],~C_B=[0.5,0.5],~C_C=[0.5,0.25],~C_D=[0.5,0.75]$,
where CPU and memory units for each server are normalized with respect to the servers of the first class.}
\label{fig:num_exmp}
\vspace{-2mm}
\end{figure}

To do experiments with the Bitbrain data-set, first we need to choose a set of servers from which the resources are allocated to different users.
For this purpose, we find the overall resource usage by active users at different instants of time, so we provide enough capacity of each resource to meet the maximum overall usage. It is worth noting that users demand resources to meet their \emph{maximum} usage. Hence, the overall demands might be more than the overall resource usages at any of instant of time. Given the required resource capacities, we assume that CPU and RAM resources are provided by three types of servers, where there are 75 servers of type~1 with 4~GHz of CPU and 12~GBytes of RAM, 100~servers of type~2 with 8~GHz of CPU and 8~GBytes of RAM, and 75 servers of type~3 with 16~GHz of CPU and 4~GBytes of RAM. It is assumed that servers are distributed over 3 different locations as shown in Fig.~\ref{fig:num_exmp2}.
Each type of servers at each location is connected to a storage device and also is equipped with a network connection (see Fig.~\ref{fig:num_exmp2}). 
It is assumed that users are uniformly distributed over different locations.
Each user may get service from the servers at the same, or nearby location.

\begin{figure}[h!]
\centering
\includegraphics[width = 0.99\columnwidth]{./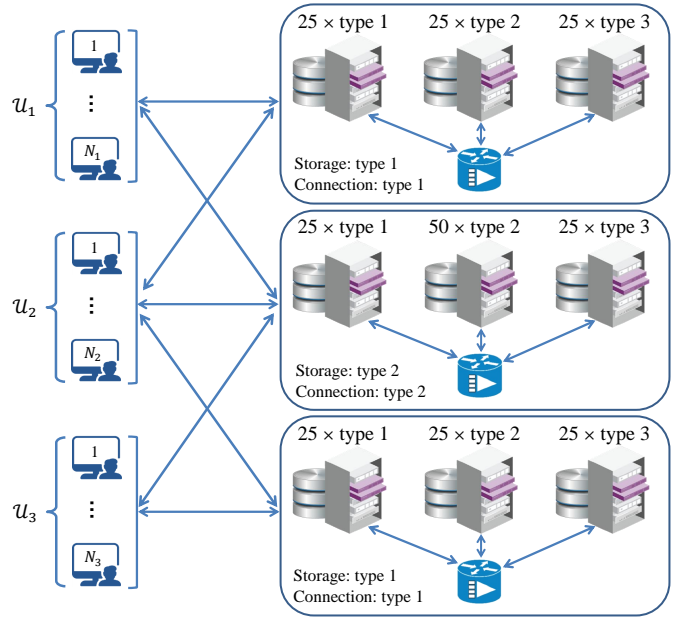}
\footnotesize
\caption{\footnotesize A data-center distributed over three different locations.
There are three types of servers, where the configuration of resources (CPU and memory respectively) for each type of server is as follows:
(4~GHz, 12~GBytes) for type~1, (8~GHz, 8~GBytes) for type~2, (16~GHz, 4~GBytes) for type~3. The storage devices are of two different types, where the read/write bandwidth for type~1, used at the first and the last locations, is 32~MB/s, and for type~2, used at the second location, is 100~MB/s. Finally, there are two types of broadband connections, where the first type provides a bandwidth of 100~Mb/s (and 1~Gb/s respectively) to send (receive) data, while the second type provides a bandwidth of 1~Gb/s (and 2~Gb/s respectively) to send (receive) data. 
}
\label{fig:num_exmp2}
\vspace{-2mm}
\end{figure}

\subsection{Adjusting the resource utilization}
As discussed in Section~\ref{sec:main:Def} and shown by an illustrative example in Fig.~\ref{fig:example3},
the resource utilization is improved as the parameter $\alpha$ in the $\alpha$PF-VDS allocation mechanism reduces.
In this subsection we study this effect when applying this mechanism to real-world workloads.

In the Bitbrain workload, users become active/incactive with a low churn.
So, the resources could be allocated to different users/virtual-machines in a semi-static manner.
In this case, we do several experiments for different sets of active users chosen at random instants of time.
In particular, for each set of active users we find the $\alpha$PF-VDS allocation for $\alpha=1$, $\alpha=3$ and $\alpha=\infty$.
In case of $\alpha=1$ and $\alpha=3$ we employ the distributed iterative algorithm proposed in Section~\ref{sec:algo:dtr}.
For $\alpha=\infty$, we employ the customized algorithm proposed in \cite{PSDSF} to implement PS-DSF.
Fig.~\ref{fig:cong} shows how our proposed distributed algorithm converges to an optimal solution to Problem~\ref{P3} where $\Psi({\bf x},\lambda)=0$. 
In Fig.~\ref{fig:diff_var_BB} we report the overall resource utilization that is achieved on average over different servers and over 100 runs, for different variants of $\alpha$PF-VDS. As expected, the $\alpha$PF-VDS results in a greater utilization of different resources for smaller values of $\alpha$. 
The improvement in utilization could be significant when $\alpha$ ranges from $\infty$ to 1.

\begin{figure}[h!]
\centering
\includegraphics[width = 2.5in]{./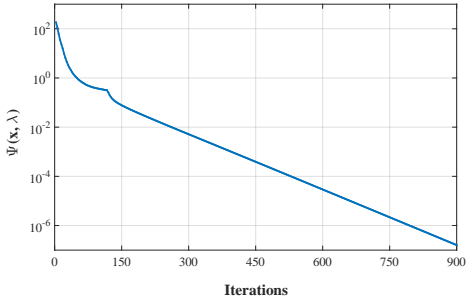}
\footnotesize
\caption{\footnotesize The convergence of the proposed distributed algorithm to find the $\alpha$PF-VDS allocation for the computing cluster in Fig.~\ref{fig:num_exmp2} when $\alpha=3$. The algorithm shows a linear convergence.}
\label{fig:cong}
\vspace{-2mm}
\end{figure}

\begin{figure}[h!]
\centering
\includegraphics[width = 0.95\columnwidth]{./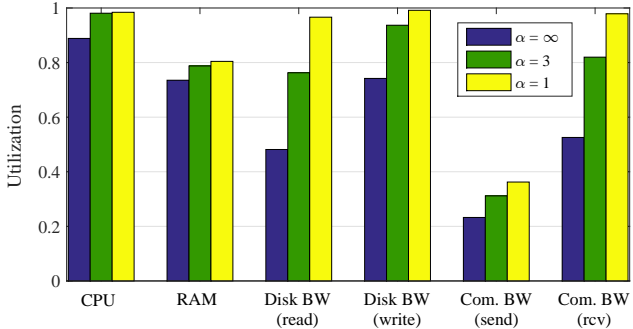}
\footnotesize
\caption{\footnotesize The overall resource utilization, averaged over different servers and over differen runs, for different variants of $\alpha$PF-VDS: $\alpha=1$ (proportional fairness), $\alpha=3$, $\alpha=\infty$ (PS-DSF).}
\label{fig:diff_var_BB}
\vspace{-2mm}
\end{figure}


For the Google workload, we allocate resources in a (semi)dynamic manner.
In particular, consider the computing cluster in Fig.~\ref{fig:num_exmp},
where $2\%$ of users from the Google traces are randomly chosen as the input workload.
In such a setting, we (re)allocate resources from the servers to demanding jobs (at least) every 5~minutes.
Specifically, given the resource usage for different tasks of each job by the Google traces, we may find the demand vector for each job (at the beginning of each 5 minutes interval) as the summation of the resource usage for different tasks (different tasks of the same job usually have proportional demands~\cite{Google11}). Given the total demand for each job, ${\bf d}_n=[d_{n,r}]$, we define a \emph{\bf quantum} for job $n$ as a block of resources in the amount of $\tilde{\bf d}_n:={\bf d}_n/\max_rd_{m,r}$ that is allocated to job $n$ for 1~second. Accordingly, job $n$ requires $q_n:= 300\max_rd_{m,r}$ execution quanta for the next 5~minutes interval. We use the normalized demand vectors, $\{\tilde{\bf d}_n\}$, as the input to the $\alpha$PF-VDS mechanism in order to find the number of tasks that are allocated to each job under this mechanism. Given the allocated tasks to each job, the completion time for job $n$ is given by $q_n/x_n$. If a job completes/leaves the system during the 5 minutes period, the released resources are reallocated among the remaining jobs.

The required execution quanta for different jobs, and also their activity duration, span a very wide range.
Fig.~\ref{fig:CDFJOB} plots the cumulative density function (CDF) of the activity duration (in terms of the number of 5 min intervals) and also the CDF of the required quanta for different jobs in the Google traces over an interval of 24 hours. As can be seen in Fig.~\ref{fig:CDFJOB}, around 38$\%$ of jobs are completed within a 5 min period, while 16$\%$ of them are active more than 24 hours. Moreover, 53$\%$ of jobs require less than 10 execution quanta, while $5\%$ of them require more then 10000~quanta.
The variety of jobs and the existence of such intensive ones indicate the necessity for an efficient and fair allocation mechanism,
which from one hand prevents intensive jobs starving others, and on the other hand results in an efficient resource utilization.

\begin{figure}[h!]
\centering
\includegraphics[width = 0.99\columnwidth]{./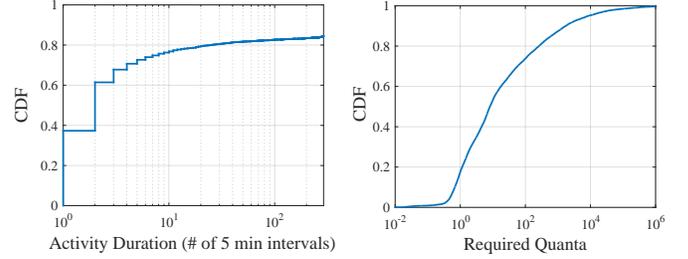}
\footnotesize
\caption{\footnotesize The CDF of the activity duration and the required quanta for different jobs over an interval of 24 hours.}
\label{fig:CDFJOB}
\vspace{-2mm}
\end{figure}

Fig.~\ref{fig:NumExmp:11} compares the overall resource utilization (averaged over different servers) that is achieved on average over an interval of \emph{one hour}, when allocating resources using different variants of $\alpha$PF-VDS.
Again, it can be observed that the proportional fair allocation ($\alpha=1$) results in a greater resource utilization,
while the PS-DSF allocation ($\alpha=\infty$) 
is less efficient compared to other variants.
It is worth noting that the achieved increase in utilization for smaller values of $\alpha$ comes at the price of compromising on fairness.
Intuitively, PS-DSF strives to balance the (weighted) VDS for all users with $x_{n,i}>0$, at each server $i$.
Specifically, the PS-DSF allocation results in $\tilde{s}_{n,i}=\min_m \tilde{s}_{m,i}$, for all users with $x_{n,i}>0,~\forall i$,
provided that $d_{n,r}>0,~\forall n, r$. For an arbitrary allocation ${\bf x}$, we define:
\be
D_n({\bf x}):=\sum_i\frac{x_{n,i}}{x_n} \left(\frac{\tilde{s}_{n,i}-\min_m \tilde{s}_{m,i}}{\min_m\tilde{s}_{m,i}}\right),
\ee
as a measure for \emph{deviation} of each user $n$ from the fair allocation (given by PS-DSF). In Fig.~\ref{fairness} we report the average deviation, $\sum_n\phi_nD_n({\bf x})/\sum_n\phi_n$, and also the maximum deviation among different users, $\max_nD_n({\bf x})$, for our previous experiment in Fig.~\ref{fig:NumExmp:11}. It can be observed that a larger deviation is experienced for smaller values of $\alpha$.
This in turn results in variations in the quality of service (e.g. per-quantum delay) experienced by different jobs.
In Fig.~\ref{Dev_Delay} we report the average per-quantum delay and its standard deviation which are experienced by \emph{short-duration jobs} (those requiring less than one execution quantum) under different variants of the $\alpha$PF-VDS mechanism. Again a larger deviation is experienced for smaller values of $\alpha$.

\begin{figure}[h!]
\centering
\includegraphics[width = 3.3in]{./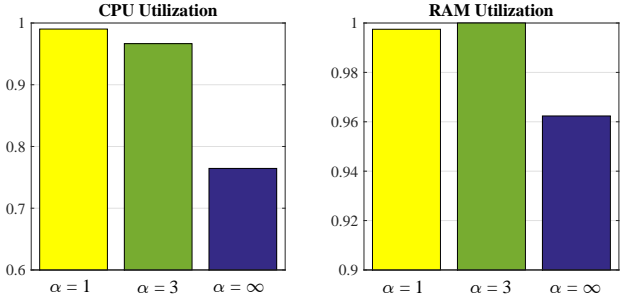}
\footnotesize
\caption{\footnotesize The overall resource utilization that is achieved (on average) over different servers during a 1~hour period,
when allocating resources using $\alpha$PF-VDS with: $\alpha=1$ (proportional fairness), $\alpha=3$, $\alpha=\infty$ (PS-DSF).}
\label{fig:NumExmp:11}
\vspace{-3mm}
\end{figure}

\begin{figure}[h!]
\centering
\includegraphics[width = 3in]{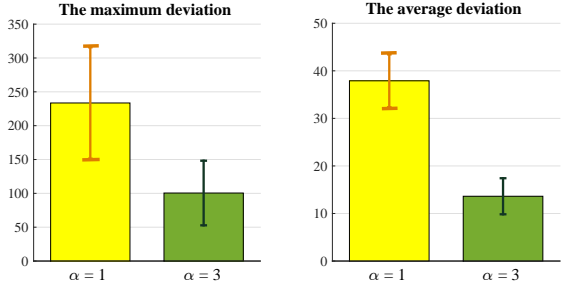}
\footnotesize
\caption{\footnotesize The average and the maximum deviation (averaged over a 1~hour period),
for different variants of the $\alpha$PF-VDS allocation mechanism.}
\label{fairness}
\vspace{-3mm}
\end{figure}

\begin{figure}[h!]
\centering
\includegraphics[width= 2.4in]{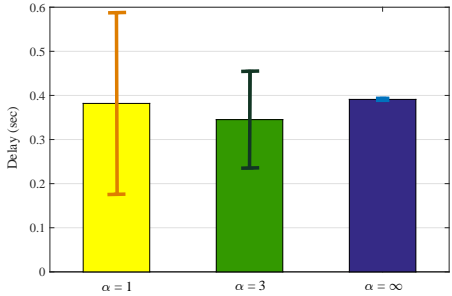}
\footnotesize
\caption{\footnotesize The average and the standard deviation of the per-quantum delay under different variants of
the $\alpha$PF-VDS allocation mechanism.}
\label{Dev_Delay}
\vspace{-2mm}
\end{figure}


\subsection{Comparison with existing mechanisms}
In this subsection, we compare our proposed multi-resource allocation mechanism in terms of resource utilization against the existing multi-resource allocation mechanisms. Specifically, we compare the PS-DSF allocation (which is the least efficient variant of $\alpha$PF-VDS), against the DRFH and TSF allocation mechanisms (which all are applicable to heterogeneous servers in the presence of placement constraints).
First, consider the computing cluster in Fig.~\ref{fig:num_exmp2} feeded by the Bitbrain workload.
We employ each of the aforementioned mechanisms to allocate resources of the servers in Fig.~\ref{fig:num_exmp2} to different sets of active users chosen at random instants of time. The overall utilization that is achieved by of each of these mechanisms for different resources, is shown in Fig.~\ref{fig:diff_alfo}, when averaged over different servers and over 100 runs.
It cane be observed that the PS-DSF allocation mechanism outperforms the two other mechanisms in terms of the achieved utilization for different resources. In particular, the resource utilization is enhanced by the PS-DSF mechanism for up to 20$\%$ for some resources.

\begin{figure}[h!]
\centering
\includegraphics[width = 0.95\columnwidth]{./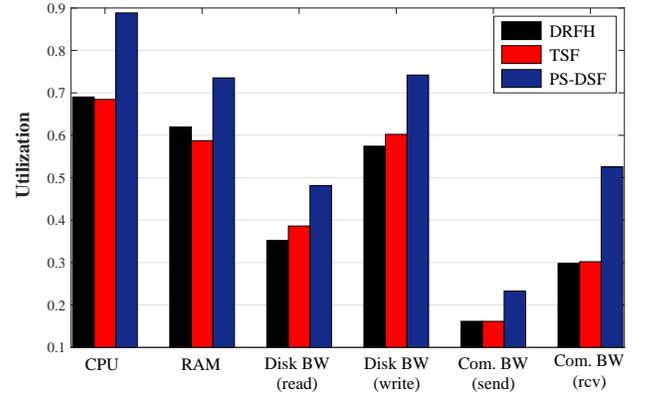}
\footnotesize
\caption{\footnotesize The overall resource utilization, averaged over different servers and over different runs, for different allocation mechanisms.}
\label{fig:diff_alfo}
\vspace{-1mm}
\end{figure}

We make similar observations with the Google traces.
Specifically, consider again the computing cluster in Fig.~\ref{fig:num_exmp},
where 2$\%$ of jobs in the Google traces are randomly chosen as the input workload.
We employ each of the PS-DSF, DRFH and TSF allocation mechanisms to allocate resources of the servers in Fig.~\ref{fig:num_exmp} to demanding jobs over an interval of 24 hours. Fig.~\ref{fig:NumExmp:21} compares the overall resource utilization (averaged over different servers) that is achieved by different allocation mechanisms. It can be observed that PS-DSF is again more efficient in utilizing different resources, compared to the DRFH and TSF allocation mechanisms, while the achieved resource utilization by DRFH and TSF mechanisms is almost the same\footnote{DRFH is just the extension of DRF for multiple heterogeneous servers. There are also other mechanisms, such as those in~\cite{zhu2015EoF}, which approximate DRFH.
As their utilization is the same as, or inferior to the DRFH allocation, we do not consider them herein.}.
The overall resource utilization that is achieved on average over the 24 hour period is shown in Fig.~\ref{fig:NumExmp:23} for different allocation mechanisms. The resource utilization over the last two classes of servers is also shown in Fig.~\ref{fig:NumExmp:23}. It can be observed that the PS-DSF allocation improves the resource utilization over the last two classes of servers more significantly.

\begin{figure}[h!]
\centering
\includegraphics[width=0.99\columnwidth]{./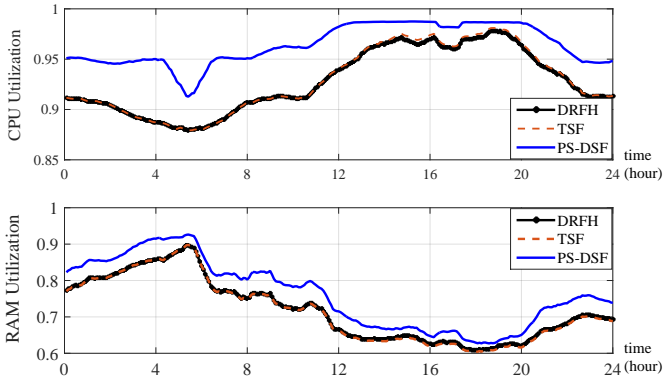}
\footnotesize
\caption{\footnotesize The overall resource utilization (averaged over different servers) that is achieved by different allocation mechanisms during an interval of 24 hours. To get a better view, a moving average with a window size of 1~hour is applied to all graphs.}
\label{fig:NumExmp:21}
\vspace{-2mm}
\end{figure}

\begin{figure}[h!]
\centering
\includegraphics[width=0.98\columnwidth]{./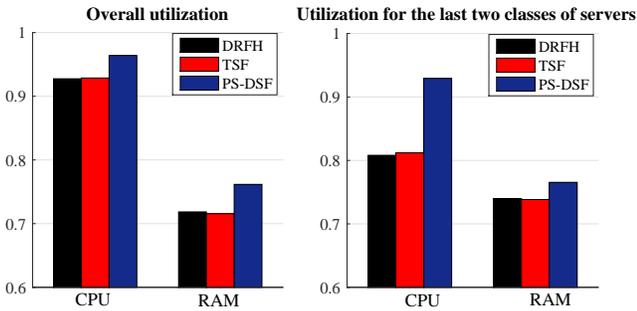}
\footnotesize
\caption{\footnotesize The overall resource utilization that is achieved on average over an interval of 24 hours.}
\label{fig:NumExmp:23}
\vspace{-1mm}
\end{figure}

Intuitively, the PS-DSF allocation mechanism allocates resources at each server based on the \emph{per-server virtual dominant shares}. So, at each server it gives more priority to users which may run more tasks (c.f.~\eqref{VDS}). Hence, one may expect that the PS-DSF allocation mechanism results in a greater resource utilization compared to the DRFH and TSF mechanisms, especially when the resources are heterogeneously distributed over different servers. That is the reason why a more significant increase in utilization is achieved by the PS-DSF allocation over the last two classes of servers, where the resources are more heterogeneously distributed (the available resources over the first two classes of servers in Fig.~\ref{fig:num_exmp} are almost proportional to the overall resource capacities).
This is also consistent with our observation in the first experiment (with the Bitbrain workload),
where the variety of resources along with the heterogeneity of servers results in a significant outperformance by the PS-DSF mechanism.

\section{Conclusion}

We studied efficient and fair allocation of \emph{multiple types} of resources in an environment of \emph{heterogeneous servers} in the presence of \emph{placement constraints}.
We identified potential limitations in the existing multi-resource allocation mechanisms, DRF and its follow up work, when used in such environments. In certain occasions, they may not succeed in satisfying all of the essential fairness-related properties, may not readily be implementable in a distributed fashion, and may lead to inefficient resource utilization. We proposed a \emph{new server-based} approach to efficiently allocate resources while capturing server heterogeneity. We showed how our proposed mechanism could be parameterized (by $\alpha$) to adjust the trade-off between efficiency and fairness. Our resource allocation mechanism was shown to satisfy all of the essential fairness-related properties, i.e., sharing incentive, envy-freeness and bottleneck fairness, for every $\alpha\ge1$, and Pareto optimality for the case $\alpha=1$.
The amenability to distributed implementation usually comes at the price of degrading the performance.
Our proposed mechanism not only is amenable to distributed implementation, but also results in an enhanced resource utilization compared to the existing mechanisms. We carried out extensive simulations, driven by real-world traces, to demonstrate the performance improvements of our resource allocation mechanism.

\section*{Acknowledgment}
\addcontentsline{toc}{section}{Acknowledgment}
The Bitbrain data-set, used for numerical experiments, was graciously provided by Bitbrains IT Services Inc. This data-set is publicly available at http://gwa.ewi.tudelft.nl.

\bibliographystyle{IEEEtran}
\bibliography{cloud,scheduling}

\begin{thebibliography}{10}
\providecommand{\url}[1]{#1}
\csname url@samestyle\endcsname
\providecommand{\newblock}{\relax}
\providecommand{\bibinfo}[2]{#2}
\providecommand{\BIBentrySTDinterwordspacing}{\spaceskip=0pt\relax}
\providecommand{\BIBentryALTinterwordstretchfactor}{4}
\providecommand{\BIBentryALTinterwordspacing}{\spaceskip=\fontdimen2\font plus
\BIBentryALTinterwordstretchfactor\fontdimen3\font minus
  \fontdimen4\font\relax}
\providecommand{\BIBforeignlanguage}[2]{{%
\expandafter\ifx\csname l@#1\endcsname\relax
\typeout{** WARNING: IEEEtran.bst: No hyphenation pattern has been}%
\typeout{** loaded for the language `#1'. Using the pattern for}%
\typeout{** the default language instead.}%
\else
\language=\csname l@#1\endcsname
\fi
#2}}
\providecommand{\BIBdecl}{\relax}
\BIBdecl

\bibitem{DRF}
A.~Ghodsi, M.~Zaharia, B.~Hindman, A.~Konwinski, S.~Shenker, and I.~Stoica,
  ``Dominant resource fairness: Fair allocation of multiple resource types,''
  in \emph{Proc. NSDI}, June 2011.

\bibitem{Chiang12}
C.~Joe-Wong, S.~Sen, T.~Lan, and M.~Chiang, ``{Multi-resource allocation:
  Fairness-efficiency tradeoffs in a unifying framework},'' \emph{IEEE/ACM
  Trans. Networking}, vol.~21, no.~6, Dec. 2013.

\bibitem{Grandl}
R.~Grandl, G.~Ananthanarayanan, S.~Kandula, S.~Rao, and A.~Akella,
  ``Multi-resource packing for cluster schedulers,'' \emph{SIGCOMM Rev.},
  vol.~44, no.~4, pp. 455--466, Aug. 2014.

\bibitem{bonald2014}
T.~Bonald and J.~Roberts, ``Enhanced cluster computing performance through
  proportional fairness,'' \emph{Performance Evaluation}, vol.~79, pp.
  134--145, 2014.

\bibitem{DN}
D.~Bertsekas and R.~Gallager, \emph{Data networks}.\hskip 1em plus 0.5em minus
  0.4em\relax Prentice Hall, 1992.

\bibitem{CDRF}
E.~Friedman, A.~Ghodsi, and C.-A. Psomas, ``Strategyproof allocation of
  discrete jobs on multiple machines,'' in \emph{Proc. ACM Conf. on Economics
  and Computation}, 2014.

\bibitem{DRFH15}
W.~Wang, B.~Liang, and B.~Li, ``Multi-resource fair allocation in heterogeneous
  cloud computing systems,'' \emph{IEEE TPDS}, vol.~26, no.~10, pp. 2822--2835,
  Oct 2015.

\bibitem{HUG}
M.~Chowdhury, Z.~Liu, A.~Ghodsi, and I.~Stoica, ``Hug: Multi-resource fairness
  for correlated and elastic demands,'' in \emph{Proc. NSDI}, Mar 2016.

\bibitem{zhu2015}
Q.~Zhu and J.~C. Oh, ``An approach to dominant resource fairness in distributed
  environment,'' in \emph{Proc. IEA-AIE}.\hskip 1em plus 0.5em minus
  0.4em\relax Springer, May 2015.

\bibitem{UDRF}
Y.~Tahir, S.~Yang, A.~Koliousis, and J.~McCann, ``Udrf: Multi-resource fairness
  for complex jobs with placement constraints,'' in \emph{GLOBECOM}, Dec 2015,
  pp. 1--7.

\bibitem{TSF}
W.~Wang, B.~Li, B.~Liang, and J.~Li, ``Multi-resource fair sharing for
  datacenter jobs with placement constraints,'' \emph{SC 2016}.

\bibitem{PSDSF}
J.~Khamse-Ashari, I.~Lambadaris, G.~Kesidis, B.~Urgaonkar, and Y.~Zhao,
  ``Per-server dominant-share fairness (ps-dsf): A multi-resource fair
  allocation mechanism for heterogeneous servers,'' in \emph{Proc. ICC}, May.,
  2017.

\bibitem{bredin2000}
J.~Bredin, R.~T. Maheswaran, C.~Imer, T.~Ba{\c{s}}ar, D.~Kotz, and D.~Rus, ``A
  game-theoretic formulation of multi-agent resource allocation,'' in
  \emph{Proc. Autonomous Agents}, 2000, pp. 349--356.

\bibitem{jalaparti2010}
V.~Jalaparti and G.~D. Nguyen, ``Cloud resource allocation games,'' Tech. Rep.,
  2010.

\bibitem{wei2010}
G.~Wei, A.~V. Vasilakos, Y.~Zheng, and N.~Xiong, ``A game-theoretic method of
  fair resource allocation for cloud computing services,'' \emph{The journal of
  supercomputing}, vol.~54, no.~2, 2010.

\bibitem{xu2014}
X.~Xu and H.~Yu, ``A game theory approach to fair and efficient resource
  allocation in cloud computing,'' \emph{Mathematical Problems in Engineering},
  vol. 2014, 2014.

\bibitem{zhu2016}
Q.~Zhu and J.~C. Oh, ``Learning fairness under constraints: A decentralized
  resource allocation game,'' in \emph{Proc. IEEE ICMLA}.\hskip 1em plus 0.5em
  minus 0.4em\relax IEEE, 2016, pp. 214--221.

\bibitem{zhu2015EoF}
------, ``Equality or efficiency: A game of distributed multi-type fair
  resource allocation on computational agents,'' in \emph{Proc. IEEE/ACM
  WI-IAT}, vol.~2, 2015, pp. 139--142.

\bibitem{Ghodsi13}
A.~Ghodsi, M.~Zaharia, S.~Shenker, and I.~Stoica, ``Choosy: Max-min fair
  sharing for datacenter jobs with constraints,'' in \emph{Proc. ACM EuroSys},
  2013, pp. 365--378.

\bibitem{midrr}
K.~Yap, T.~Huang, Y.~Yiakoumis, S.~Chinchali, N.~McKeown, and S.~Katti,
  ``Scheduling packets over multiple interfaces while respecting user
  preferences,'' in \emph{Proc. ACM coNEXT}, Dec. 2013.

\bibitem{CM4FQ}
J.~Khamse{-}Ashari, I.~Lambadaris, and Y.~Q. Zhao, ``Constrained multi-user
  multi-server max-min fair queuing,'' \emph{http://arxiv.org/abs/1601.04749},
  Jan. 2016.

\bibitem{Ashari17}
J.~Khamse-Ashari, G.~Kesidis, I.~Lambadaris, B.~Urgaonkar, and Y.~Zhao,
  ``Efficient and fair scheduling of placement constrained threads on
  heterogeneous multi-processors,'' in \emph{Proc. DCPerf}, Atlanta, USA, May
  2017.

\bibitem{Ashari2017j}
------, ``Constrained max-min fair scheduling of variable-length packet-flows
  to multiple servers,'' \emph{Annals of Telecom.}, Aug 2017.

\bibitem{DRF12}
D.~Parkes, A.~Procaccia, and N.~Shah, ``Beyond dominant resource fairness:
  Extensions, limitations, and indivisibilities,'' in \emph{Proc. ACM EC},
  Valencia, Spain, June 2012.

\bibitem{Walrand}
J.~Mo and J.~Walrand, ``Fair end-to-end window-based congestion control,''
  \emph{IEEE/ACM Trans. Networking}, vol.~8, no.~5, Oct 2000.

\bibitem{Rosen}
J.~B. Rosen, ``Existence and uniqueness of equilibrium points for concave
  n-person games,'' \emph{Econometrica: Journal of the Econometric Society},
  pp. 520--534, 1965.

\bibitem{kelly1998rate}
F.~P. Kelly, A.~K. Maulloo, and D.~K. Tan, ``Rate control for communication
  networks: shadow prices, proportional fairness and stability,'' \emph{Journal
  of the Operational Research}, 1998.

\bibitem{GNE_KKT}
A.~Dreves, F.~Facchinei, C.~Kanzow, and S.~Sagratella, ``On the solution of the
  kkt conditions of generalized nash equilibrium problems,'' \emph{SIAM Journal
  on Opt.}, vol.~21, no.~3, 2011.

\bibitem{Fischer98}
A.~Fischer, ``New constrained optimization reformulation of complementarity
  problems,'' \emph{Journal of Optimization Theory and Applications}, vol.~97,
  no.~1, pp. 105--117, 1998.

\bibitem{ResolutionNCP}
C.~Geiger and C.~Kanzow, ``On the resolution of monotone complementarity
  problems,'' \emph{Computational Optimization and Applications}, vol.~5,
  no.~2, pp. 155--173, 1996.

\bibitem{Facchinei98}
\BIBentryALTinterwordspacing
F.~Facchinei, ``Structural and stability properties of p0 nonlinear
  complementarity problems,'' \emph{Math. Oper. Res.}, vol.~23, no.~3, pp.
  735--745, Mar. 1998. [Online]. Available:
  \url{http://dx.doi.org/10.1287/moor.23.3.735}
\BIBentrySTDinterwordspacing

\bibitem{Google11}
C.~Reiss, J.~Wilkes, and J.~L. Hellerstein, ``Google cluster-usage traces,''
  2011, \url{http://code.google.com/p/googleclusterdata/}.

\bibitem{shen2015}
S.~Shen, V.~van Beek, and A.~Iosup, ``Statistical characterization of
  business-critical workloads hosted in cloud datacenters,'' in \emph{Proc.
  15th IEEE/ACM CCGrid}, May 2015, pp. 465--474.

\bibitem{boyd}
S.~Boyd and L.~Vandenberghe, \emph{Convex Optimization}.\hskip 1em plus 0.5em
  minus 0.4em\relax Cambrdige University Press, 2004.

\end{thebibliography}

\appendix

\begin{proof}[\bf Proof of Theorem~\ref{Th_basic}]
With $g'_i(z)=z^{-\alpha}$, the partial derivative of $U_i({\bf x})$ in \eqref{P1_1} with respect to $x_{n,i}$,
\be
\frac{\partial U_i({\bf x})}{\partial x_{n,i}}=\frac{g'_i(\tilde{s}_{n,i})}{\gamma_{n,i}}=\frac{1}{{\gamma_{n,i}}~\tilde{s}_{n,i}^\alpha},
~n\in\mathcal{N}_i,~\forall i,
\ee
and $\partial U_i/\partial x_{n,i}=0$ if $n\notin\mathcal{N}_i$.
Hence,
\be
\nabla_{{\bf x}_i} U_i({\bf x})^T({\bf y}_i-{\bf x}_i)=
\sum_{n\in\mathcal{N}_i}\frac{(y_{n,i}-x_{n,i})/\gamma_{n,i}}{\tilde{s}_{n,i}^\alpha}.\label{grad_local}
\ee
In case that ${\bf x}$ is a solution to Problem~\ref{P1},
the inequality in \eqref{fair_criteria} follows directly from the first order optimality condition. Specifically,
\be
\sum_{n\in\mathcal{N}_i}\frac{(y_{n,i}-x_{n,i})/\gamma_{n,i}}{\tilde{s}_{n,i}^\alpha}=
\nabla_{{\bf x}_i} U_i({\bf x})^T({\bf y}_i-{\bf x}_i) \le0,
\ee
for every feasible ${\bf y}_i=[y_{n,i}]$, and for all servers.
Now assume that \eqref{fair_criteria} is established for a feasible allocation ${\bf x}$.
The equality in \eqref{grad_local} implies that $\nabla_{{\bf x}_i} U_i({\bf x})^T({\bf y}_i-{\bf x}_i)\le0$.
This along with concavity of $U_i({\bf x})$ in terms of ${\bf x}_i$ imply that:
\be\nonumber
U_i({\bf y}_i,{\bf x}_{-i}) &\le& U_i({\bf x}_i,{\bf x}_{-i})+ \nabla_{{\bf x}_i} U_i({\bf x})^T({\bf y}_i-{\bf x}_i)\\\nonumber
                            &\le& U_i({\bf x}_i,{\bf x}_{-i}),
\ee
for every feasible ${\bf y}_i$, and for all servers. This in turn implies that ${\bf x}$ is
a solution (Nash equilibrium) to Problem~\ref{P1}.
\end{proof}

\begin{proof}[\bf Proof of Theorem~\ref{Th_psdsf}]
For $\alpha=1$, we may sum \eqref{fair_criteria} over different servers. Then, for every feasible allocation ${\bf y}$:
\be
\sum_n\phi_n\frac{y_n-x_n}{x_n}\le0,
\ee
which implies that ${\bf x}$ satisfies (weighted) proportional fairness.
In case that $\alpha\rightarrow\infty$, the proof would be similar to that for Lemma~5 in ~\cite{Walrand}.
Let ${\bf x}(\alpha)$ denote an allocation satisfying $\alpha$PF-VDS.
Since the feasible region, described by \eqref{P1_2}-\eqref{P1_4}, is a compact set,
we can find a sequence of $\alpha$, $\{\alpha_l,~l\ge1\}$, such that
$\lim_{l\rightarrow\infty}\alpha_l=\infty$ and $\{{\bf x}(\alpha_l)\}$
converges to some feasible ${\bf x}(\infty)$ as $l\rightarrow\infty$.
By definition, if ${\bf x}(\alpha_l)$ satisfies $\alpha$PF-VDS,
then for every server $i$ and for all feasible allocations, ${\bf y}_{i}$,
\be\nonumber
\sum_{m\in\mathcal{N}_i}\frac{(y_{m,i}-x_{m,i})}{\gamma_{m,i}[\tilde{s}_{m,i}({\alpha_l})]^{\alpha_l}}\le0,
\ee
where $\tilde{s}_{m,i}({\alpha_l}):=x_{m}({\alpha_l})/\phi_m\gamma_{m,i}$ is the weighted VDS for user $m$ with respect to server $i$.
Consider an arbitrary user $n$ and server $i$ for which $y_{n,i}\neq x_{n,i}$. It follows that:
\be\nonumber
\Delta_{n,i}:=\frac{(y_{n,i}-x_{n,i})}{\gamma_{n,i}[\tilde{s}_{n,i}({\alpha_l})]^{\alpha_l}}\le
-\sum_{m\neq n,~m\in\mathcal{N}_i}\frac{(y_{m,i}-x_{m,i})}{\gamma_{m,i}[\tilde{s}_{m,i}({\alpha_l})]^{\alpha_l}},
\ee
Dividing both sides of the above inequality by $\Delta_{n,i}\neq 0$,
\be\nonumber
1 &\le& \sum_{m\neq n,~m\in\mathcal{N}_i} h_{m,i} \left[\frac{\tilde{s}_{n,i}({\alpha_l})}{\tilde{s}_{m,i}({\alpha_l})}\right] ^{\alpha_l}\\
  &\le& \sum_{m\neq n,~h_{m,i}>0} h_{m,i}\left[\frac{\tilde{s}_{n,i}({\alpha_l})}{\tilde{s}_{m,i}({\alpha_l})}\right]^{\alpha_l}\label{Th2_3},
\ee
where $h_{m,i}:=-\frac{\gamma_{n,i}(y_{m,i}-x_{m,i})}{\gamma_{m,i}(y_{n,i}-x_{n,i})}<\infty$.

Unless there exists some user $p$ with $h_{p,i}>0$ and $\tilde{s}_{p,i}(\infty)\le \tilde{s}_{n,i}(\infty)$,
the right hand side of \eqref{Th2_3} approaches zero as $\alpha_l\rightarrow\infty$.
Hence, for the inequality in \eqref{Th2_3} to hold, there must exist some user $p$ with $h_{p,i}>0$ such that $\tilde{s}_{p,i}(\infty)\le \tilde{s}_{n,i}(\infty)$. Note that $h_{p,i}>0$ implies that if $y_{n,i}>x_{n,i}(\infty)$, then $y_{p,i}< x_{p,i}(\infty)$. In other words, for any feasible allocation ${\bf y}\neq{\bf x}(\infty)$,
we may not increase the allocated tasks to user $n$ from server $i$, $y_{n,i}>x_{n,i}(\infty)$,
unless decreasing the allocated tasks from server $i$, $y_{p,i}<x_{p,i}(\infty)$,
for some user $p$ with $\tilde{s}_{p,i}(\infty)\le \tilde{s}_{n,i}(\infty)$.
This implies that ${\bf x}(\infty)$ satisfies PS-DSF.
\end{proof}

\begin{proof}[\bf Proof of Theorem~\ref{Th_Properties}]
For $\alpha=1$, the $\alpha$PF-VDS allocation reduces to a proportional fair allocation,
maximizing a \emph{global objective}, $\sum_n\phi_n\log(x_n)$. So, the resulting allocation is Pareto optimal.  
To prove sharing incentive and envy freeness properties, first we need to derive some inequalities.

As in Section~\ref{sec:reformulate},
let $\lambda_{i,r}$, and $\nu_{n,i}$, denote the Lagrange multipliers corresponding to the constraints in \eqref{P1_2}, and \eqref{P1_3}, respectively. Given the Lagrange multipliers, the KKT conditions for Problem~\ref{P1} are described by \eqref{KKT1_1}-\eqref{KKT1_3}.
Specifically, the first order optimality condition in \eqref{KKT1_3} implies that:
\be
{\gamma_{n,i}^{-1}}{g'_i(\tilde{s}_{n,i})}-\sum_r\lambda_{i,r}d_{n,r} + \nu_{n,i} = 0,~n\in\mathcal{N}_i,~\forall i, \label{Th3_1}
\ee
where $\tilde{s}_{n,i}=x_n/\phi_n\gamma_{n,i}$. 
Since $\nu_{n,i}\ge0$,
\be
{\gamma_{n,i}^{-1}}{g'_i(\tilde{s}_{n,i})} \le \sum_r\lambda_{i,r}d_{n,r},~n\in\mathcal{N}_i,~\forall i. \label{Th3_2}
\ee
By definition, $\gamma_{n,i}=\min_r\{c_{i,r}/d_{n,r}\}$. Hence,
\be
g'_i(\tilde{s}_{n,i}) \le \sum_r\lambda_{i,r}\gamma_{n,i}{d_{n,r}}
                                                        \le \sum_r\lambda_{i,r}c_{i,r},~n\in\mathcal{N}_i,~\forall i. \label{Th3_3}
\ee

Multiplying both sides of \eqref{Th3_1} by $x_{n,i}$, and summing over different users:
\be\nonumber
\sum_{n\in\mathcal{N}_i} \frac{x_{n,i}}{\gamma_{n,i}} g'_i(\tilde{s}_{n,i})
 &=&  \sum_{n\in\mathcal{N}_i}\left[\sum_r\lambda_{i,r}x_{n,i}d_{n,r} + x_{n,i}\nu_{n,i}\right] \\
 &=&  \sum_r\sum_{n\in\mathcal{N}_i}\lambda_{i,r}x_{n,i}d_{n,r},~\forall i, \label{Th3_4}
\ee
where the second equality follows from the fact that $x_{n,i}\nu_{n,i}=0,~n\in\mathcal{N}_i,~\forall i$ (c.f. \eqref{KKT1_2}).
Moreover, the complementary condition in \eqref{KKT1_1} implies that:
\be
\sum_r\sum_{n\in\mathcal{N}_i}\lambda_{i,r}x_{n,i}d_{n,r}=\sum_r\lambda_{i,r}c_{i,r},~\forall i.\label{Th3_5}
\ee
From \eqref{Th3_3}, \eqref{Th3_4} and \eqref{Th3_5} it follows that:
\be
g'_i(\tilde{s}_{n,i}) \le \sum_{m\in\mathcal{N}_i} \frac{x_{m,i}}{\gamma_{m,i}} g'_i(\tilde{s}_{m,i}),
~n\in\mathcal{N}_i,~\forall i.\label{Th3_6}
\ee

Now, we are ready to prove the sharing incentive property. Let $n_i^*:=\argmin_n \tilde{s}_{n,i}$, and
multiply both sides of the inequality in \eqref{Th3_6} for server $i$ and user $n_i^*$ by $(\tilde{s}_{n^*_i,i})^{\alpha-1}$.
For $g'_i(z)=z^{-\alpha}$ and $\alpha\ge1$, it follows that:
\be
(\tilde{s}_{n^*_i,i})^{-1} \nonumber
&\le& \sum_{m\in\mathcal{N}_i} \phi_m\frac{x_{m,i}}{x_{m}}\left[\frac{\tilde{s}_{n^*_i,i}}{\tilde{s}_{m,i}}\right]^{\alpha-1}\\
&\le& \sum_{m\in\mathcal{N}} \phi_m\frac{x_{m,i}}{x_{m}},~\forall i, \label{TH3_7}
\ee
where the second inequality follows from the facts that $\tilde{s}_{n^*_i,i}\le{\tilde{s}_{m,i}},~\forall m$,
and $x_{m,i}=0,~n\notin\mathcal{N}_i$. Furthermore,
\be
(\tilde{s}_{n,i})^{-1} \le (\tilde{s}_{n^*_i,i})^{-1} \le \sum_m \phi_m\frac{x_{m,i}}{x_{m}}, ~\forall n,i.\label{Th3_8}
\ee
Summing \eqref{Th3_8} over different servers,
\be
\frac{\phi_n}{x_n}\sum_i\gamma_{n,i} \le \sum_m\sum_i\phi_m\frac{x_{m,i}}{x_m}=\sum_m\phi_m,\label{Th3_9}
\ee
or,
\be
x_n\ge \frac{\phi_n}{\sum_m\phi_m}\sum_i\gamma_{n,i}.\label{Th3_10}
\ee

To show envy freeness, (by contradiction) assume that user $n$ envies user $m$'s allocation vector,
when adjusted according to their weights. That is, $U_n(\frac{\phi_n}{\phi_m}{\bf a}_m)>U_n({\bf a}_n)$,
where ${\bf a}_m=x_m{\bf d}_m$ and ${\bf a}_n=x_n{\bf d}_n$.
It follows that (c.f.~\eqref{utility}):
\be
\frac{x_nd_{n,r}}{\phi_n} < \frac{x_md_{m,r}}{\phi_m} ,~\forall r.\label{Th3_11}
\ee
Consider some server $j$ for which 
$x_{m,j}>0$, so that $\nu_{m,j}=0$.
Then, \eqref{Th3_1} implies that:
\be
\sum_r\lambda_{j,r}d_{m,r} = {\gamma_{m,j}^{-1}}{g'_j(\tilde{s}_{m,j})}. \label{Th3_1'}
\ee
If we multiply both sides of \eqref{Th3_2} by $x_n/\phi_n$,
then it follows from \eqref{Th3_11} and \eqref{Th3_1'} that:
\be\nonumber
\tilde{s}_{n,j}g'_j(\tilde{s}_{n,j}) &\le& \sum_r\frac{\lambda_{j,r}d_{n,r}x_n}{\phi_n}\\\nonumber
                             & < & \sum_r\frac{\lambda_{j,r}d_{m,r}x_m}{\phi_m}\\
                             & = & \tilde{s}_{m,j}g'_j(\tilde{s}_{m,j})\label{Th3_12}
\ee
In case that $g'_j(z)=z^{-\alpha}$ and $\alpha>1$, it follows from \eqref{Th3_12} that
$\tilde{s}_{n,j}>\tilde{s}_{m,j}$, and therefore,
\be
\frac{x_nd_{n,\rho(n,j)}}{\phi_nc_{j,\rho(n,j)}}=\frac{x_n}{\phi_n\gamma_{n,j}} > \frac{x_m}{\phi_m\gamma_{m,j}}
\ge\frac{x_md_{m,r}}{\phi_mc_{j,r}},~\forall r,\label{Th3_13}
\ee
which contradicts \eqref{Th3_11} for $r=\rho(n,j)$.
In case that $g'_j(z)=z^{-1}$, \eqref{Th3_12} requires that $1<1$, which is a contradiction.
\end{proof}

\begin{proof}[\bf Proof of Theorem~\ref{Th_Universal_Allocation}]
A resource, say $\rho(i)$, is identified as a bottleneck resource at server $i$, if it is dominantly requested by all eligible users for server $i$. That is $\rho(n,i)=\rho(i),~n\in\mathcal{N}_i$.
Hence,
\be
\sum_{n\in\mathcal{N}_i} x_{n,i}\frac{d_{n,r}}{c_{i,r}}\le \sum_{n\in\mathcal{N}_i}x_{n,i}\frac{d_{n,\rho(i)}}{c_{i,\rho(i)}} =
\sum_{n\in\mathcal{N}_i}\frac{x_{n,i}}{\gamma_{n,i}},~\forall i,r.\label{resource_usage}
\ee
The above inequality implies that all of the capacity constraints for server $i$ could be replaced by:
\be
\sum_{n\in\mathcal{N}_i}\frac{x_{n,i}}{\gamma_{n,i}}\le1. \label{new_cap_const}
\ee
Let $\lambda_i$ denote the Lagrange multiplier corresponding to the constraint in \eqref{new_cap_const}.
Hence, the KKT conditions for Problem~\ref{P1} (given by \eqref{KKT1_1}-\eqref{KKT1_3}) can be simplified/rewritten as:
\be
&& g'_i(\frac{x_n}{\phi_n\gamma_{n,i}}) - \lambda_{i} + \nu'_{n,i} = 0,~n\in\mathcal{N}_i,~\forall i,\qquad\label{KKT3_1}\\
&& 0\le\lambda_{i} \perp (1-\sum_{n\in\mathcal{N}_i}{x_{n,i}}/{\gamma_{n,i}})\ge0,~\forall i,\label{KKT3_2}\\
&& 0\le\nu'_{n,i} \perp x_{n,i}\ge0,~n\in\mathcal{N}_i,~\forall i,\label{KKT3_3}
\ee
where, $\nu'_{n,i}:=\nu_{n,i}\gamma_{n,i}$.

First assume that ${\bf x}$ is a solution (Nash equilibrium) for Problem~\ref{P1}, so that the KKT conditions in \eqref{KKT3_1}-\eqref{KKT3_3} are satisfied. We know that $g'(\cdot)>0$, owing to the assumption that $g(\cdot)$ is strictly increasing and differentiable.
This along with \eqref{KKT3_1} imply that $\lambda_{i}>\nu'_{n,i},~n\in\mathcal{N}_i~\forall i$,
and therefore, $\lambda_{i}>0, \forall i$.
According to \eqref{KKT3_2},
\be
\sum_{n\in\mathcal{N}_i} \frac{x_{n,i}}{\gamma_{n,i}} = 1,~\forall i. \label{tight_bottleneck}
\ee
The assumption that $g(\cdot)$ is \emph{strictly concave} implies that $g'(\cdot)$ is strictly decreasing, and therefore is invertible.
The inverse of $g'(\cdot)$ would be also a strictly decreasing function which we denote it by $h(\cdot)$.
It follows from \eqref{KKT3_1} that:
\be
\tilde{s}_{n,i} := \frac{x_n}{\phi_n\gamma_{n,i}} = h({\lambda_{i,}-\nu'_{n,i}}),~n\in\mathcal{N}_i~\forall i.
\ee

Consider two users $n,m\in\mathcal{N}_i$, where $x_{n,i}>0$, so that $\nu'_{n,i}=0$ (c.f. \eqref{KKT3_3}). It follows that,
\be\label{conseq}
\tilde{s}_{n,i}=h({\lambda_{i}})\le h({\lambda_{i}-\nu'_{m,i}})=\tilde{s}_{m,i},
\ee
where the above inequality follows from the fact that $h(\cdot)$ is a decreasing function.
According to \eqref{tight_bottleneck} and \eqref{conseq}, we may not increase $x_{m,i}$ for any user $m$,
unless decreasing $x_{n,i}$ for some user $n$ with $\tilde{s}_{n,i}\le\tilde{s}_{m,i}$.
So, by definition, ${\bf x}$ satisfies PS-DSF.

Now consider an allocation ${\bf x}$ satisfying PS-DSF.
For such an allocation, there exists (at least) one resource at each server $i$ for which the capacity constraint holds with equality.
This along with \eqref{resource_usage} imply that the constraint in \eqref{new_cap_const} holds with equality at each server $i$.
Hence, \eqref{KKT3_2} is established for all servers.
For each server $i$, consider some user $n$ for which $x_{n,i}>0$.
By definition, $\tilde{s}_{p,i}\ge\tilde{s}_{n,i},~\forall p$, and $\tilde{s}_{p,i}=\tilde{s}_{n,i},~\forall p:x_{p,i}>0$.
Hence, we may choose $\lambda_{i}:=g'(\tilde{s}_{n,i})$, and
\be
\nu'_{m,i}=\lambda_{i}-g'_i(\tilde{s}_{m,i})\ge0,~m\in\mathcal{N}_i,~\forall i,\label{slack_var}
\ee
so that the first order optimality condition in \eqref{KKT3_1} is established.
Furthermore, \eqref{slack_var} implies that $\nu'_{m,i}=0$ when $x_{m,i}>0$.
So, all of the conditions in \eqref{KKT3_1}-\eqref{KKT3_3} are established in conjunction with the chosen multipliers.
In other words, given a PS-DSF allocation, ${\bf x}$, we may find a set of multipliers such that the KKT conditions in \eqref{KKT3_1}-\eqref{KKT3_3} are established. This implies that ${\bf x}$ is a solution to Problem~\ref{P1}.
\end{proof}

\begin{proof}[\bf Proof of Theorem~\ref{Th_non_divisible_servers}]
The proof follows exactly the same line of arguments as Theorem~\ref{Th_Universal_Allocation}.
Specifically, let $\lambda_{i}$, and $\nu_{n,i}$, denote the Lagrange multipliers corresponding to the constraints in \eqref{P2_2}, and \eqref{P2_3}. The Lagrangian function for Problem~\ref{P2} at server $i$ is given by:
\be\nonumber
\mathcal{L}_i({\bf x},{\bf \lambda},{\bf\nu}) := \sum_{n\in\mathcal{N}_i}\left[\phi_ng_i(\frac{x_n}{\phi_n\gamma_{n,i}})+\sum_n\nu_{n,i}x_{n,i}\right]\\
  + ~\lambda_{i}\left[1-\sum_{n\in\mathcal{N}_i}\frac{x_{n,i}}{\gamma_{n,i}}\right].\qquad\qquad\qquad
\ee
It can be observed that the KKT conditions for this problem are exactly the same as those given by \eqref{KKT3_1}-\eqref{KKT3_3} in Theorem~\ref{Th_Universal_Allocation}.
Hence, the same arguments apply here.
\end{proof}

\begin{proof}[\bf Proof of Corollary~\ref{Corollary_EF_SI}]
As in Theorem~\ref{Th_Properties}, let $n_i^*:=\argmin_n \tilde{s}_{n,i}$. Then, divide both sides of the inequality in \eqref{Th3_6}
for server $i$ and user $n_i^*$ by $g'_i(\tilde{s}_{n^*_i,i})\tilde{s}_{n^*_i,i}$. It follows that:
\be
(\tilde{s}_{n^*_i,i})^{-1}
&\le& \sum_{m\in\mathcal{N}_i} \phi_m\frac{x_{m,i}}{x_{m}}\left[\frac{\tilde{s}_{m,i}}{\tilde{s}_{n^*_i,i}}
                                              \frac{g'_i(\tilde{s}_{m,i})}{g'_i(\tilde{s}_{n^*_i,i})}\right].\label{Coro_EFSI_1}
\ee
To simplify the right hand side of \eqref{Coro_EFSI_1}, let define:
\be
q_i(z):=\frac{1}{zg'_i(z)}=\frac{(\frac{z}{A_i})^{\alpha_i-1}}{A_i +{B_i}(\frac{z}{{A_i}})^{\alpha_i-1}}. \label{Coro_EFSI_2}
\ee
It can be observed that $q_i(z)$ is an increasing function. Hence,
\be\nonumber
(\tilde{s}_{n,i})^{-1}\le(\tilde{s}_{n^*_i,i})^{-1}
&\le& \sum_{m\in\mathcal{N}_i} \phi_m\frac{x_{m,i}}{x_{m}}\left[\frac{q_i(\tilde{s}_{n^*_i,i})}{q_i(\tilde{s}_{m,i})}\right]\\
&\le& \sum_{m\in\mathcal{N}_i} \phi_m\frac{x_{m,i}}{x_{m}},~\forall n, i, \label{Coro_EFSI_3}
\ee
where the first and the last inequalities follow from the fact that $\tilde{s}_{n^*_i,i}\le{\tilde{s}_{m,i}},~\forall m$.
As in Theorem~\ref{Th_Properties}, we may sum \eqref{Coro_EFSI_3} over different servers to get the required result (c.f. \eqref{Th3_9} and \eqref{Th3_10}).
\end{proof}

\begin{proof}[\bf Proof of Corollary~\ref{Corollary_PSDSF}]
Consider a sequence, $\{\alpha_l,~l\ge1\}$, which converges to infinity, $\lim_{l\rightarrow\infty}\alpha_l=\infty$,
and $\{{\bf x}(\alpha_l)\}$ converges to some feasible ${\bf x}(\infty)$ as $l\rightarrow\infty$.
 When ${\bf x}(\alpha_l)$ is a solution to Problem~\ref{P1}, then for every feasible allocation, ${\bf y}$, 
 it can be shown that (c.f. Theorem~\ref{Th_basic}):
\be\nonumber
\sum_m g'_i(\tilde{s}_{m,i}({\alpha_l})) \frac{y_{m,i}-x_{m,i}}{\gamma_{m,i}}\le0,~\forall i.
\ee
Now, consider an arbitrary user $n$ and server $i$ for which $y_{n,i}\neq x_{n,i}$.
As in Theorem~\ref{Th_psdsf}, it can be shown that:
\be
1 &\le& \sum_{m\neq n,~h_{m,i}>0} h_{m,i} \frac{g'_i(\tilde{s}_{m,i}({\alpha_l}))}{g'_i(\tilde{s}_{n,i}({\alpha_l}))}\label{Coro_PSDSF_2},
\ee
where $h_{m,i}:=-\frac{\gamma_{n,i}(y_{m,i}-x_{m,i})}{\gamma_{m,i}(y_{n,i}-x_{n,i})}<\infty$. On the other hand,
\be
\lim_{l\rightarrow\infty}\frac{g'_i(\tilde{s}_{m,i}({\alpha_l}))}{g'_i(\tilde{s}_{n,i}({\alpha_l}))}=
\lim_{l\rightarrow\infty}\left[\frac{\tilde{s}_{n,i}({\alpha_l})}{\tilde{s}_{m,i}({\alpha_l})}\right]^{\alpha_l}, \label{Coro_PSDSF_3}
\ee
provided that $\tilde{s}_{n,i}/A_i\le 1,~n\in\mathcal{N}_i$. The limit in \eqref{Coro_PSDSF_3},
and therefore the right hand side of \eqref{Coro_PSDSF_2}, approaches zero as $\alpha_l\rightarrow\infty$,
unless there exists some user $p$ with $h_{p,i}>0$ and $\tilde{s}_{p,i}(\infty)\le \tilde{s}_{n,i}(\infty)$.
In other words, for the inequality in \eqref{Coro_PSDSF_2} to be established,
there must exist some user $p$ with $h_{p,i}>0$ such that $\tilde{s}_{p,i}(\infty)\le \tilde{s}_{n,i}(\infty)$.
As in Theorem~\ref{Th_psdsf}, this implies that ${\bf x}(\infty)$ satisfies PS-DSF.
\end{proof}

\begin{proof}[\bf Proof of Theorem~\ref{Th_stationary_points}]
Let $\nabla_{\bf x}\Psi=[(\nabla_{\bf x}\Psi)_{n,i}]_{NK\times1}$ denote the gradient of $\Psi$
with respect to ${\bf x}$, and $\nabla_{\bf \lambda}\Psi=[(\nabla_{\bf \lambda}\Psi)_{i,r}]_{MK\times1}$ denote the gradient of $\Psi$ with respect to ${\bf \lambda}$. In particular,
\be
 (\nabla_{\bf x}\Psi)_{n,i} := \frac{\partial\Psi}{\partial x_{n,i}} = 
\left[\frac{\partial\psi_{n,i}}{\partial x_{n,i}} +
\sum_j\frac{\partial\psi_{n,j}}{\partial f_{n,j}}\frac{\partial f_{n,j}}{\partial x_n}\right],\label{partial_1}\\
(\nabla_{\bf \lambda}\Psi)_{i,r} := \frac{\partial\Psi}{\partial \lambda_{i,r}} = \sum_m \frac{\partial\psi_{m,i}}{\partial f_{m,i}}d_{m,r},\qquad\qquad\qquad\label{partial_2}
\ee
where $\psi_{n,i}:=\psi (x_{n,i},f_{n,i}({\bf x},{\bf \lambda}_i)),~n\in\mathcal{N}_i$, and $\psi_{n,i}=0,~n\notin\mathcal{N}_i$. Let
\be
{\bf d}_b\psi:=\left[\frac{\partial \psi_{n,i}}{\partial f_{n,i}}\frac{\partial f_{n,i}}{\partial x_n}\right]_{NK\times 1}.
\ee
$({\bf x},{\bf \lambda})$ is a stationary point of $\Psi$ when
$\nabla\Psi=[\nabla_{\bf x}\Psi;\nabla_{\bf \lambda}\Psi]=0$.
If we pre-multiply $\nabla_{\bf x}\Psi$ by ${\bf d}_b\psi$, it follows that (c.f. \eqref{partial_1}):
\be\nonumber
({\bf d}_b\psi)^T \nabla_{\bf x}\Psi
&=& \sum_{n,i}\frac{\partial\psi_{n,i}}{\partial x_{n,i}}\frac{\partial\psi_{n,i}}{\partial f_{n,i}}\frac{\partial f_{n,i}}{\partial x_n}\\
&+&\sum_n\left(\sum_i\frac{\partial\psi_{n,i}}{\partial f_{n,i}}\frac{\partial f_{n,i}}{\partial x_n}\right)^2=0,\label{descent_direction_test}
\ee
where,
\be
\frac{\partial f_{n,i}}{\partial x_n} :=  \frac{-1}{\phi_n\gamma^2_{n,i}}g''_i(\frac{x_n}{\phi_n\gamma_{n,i}}) > 0,
\ee
owing to strict concavity of $g_i(\cdot)$. For the complementary function $\psi(a,b)$ defined in \eqref{NCP_function}, it is shown that~\cite{ResolutionNCP}:
\be
&&\frac{\partial\psi}{\partial a}\frac{\partial\psi}{\partial b}\ge 0,~\forall (a,b)\in\mathbb{R}^2,\\
&&\frac{\partial\psi}{\partial a}=\frac{\partial\psi}{\partial b}=0~\Longleftrightarrow~\psi(a,b)=0.
\ee
Hence, the right hand side of \eqref{descent_direction_test} is strictly positive, unless ${\partial\psi_{n,i}}/{\partial x_{n,i}}={\partial\psi_{n,i}}/{\partial f_{n,i}}=0,~\forall n,i$, or equivalently
$\psi (x_{n,i},f_{n,i}({\bf x},{\bf \lambda}_i))=0,~\forall n,i$,
which is the case only if ${\bf x}$ is a solution to Problem~\ref{P1}.
Conversely, for every solution to Problem~\ref{P1},
$\psi (x_{n,i},f_{n,i}({\bf x},{\bf \lambda}_i))=0,~\forall n,i$,
which implies that ${\partial\psi_{n,i}}/{\partial x_{n,i}}={\partial\psi_{n,i}}/{\partial f_{n,i}}=0,~\forall n,i$,
and therefore $\nabla\Psi = 0$.
\end{proof}

\begin{proof}[\bf Proof of Theorem~\ref{Th_uniqueness}]
By contradiction, assume that there exists another solution $({\bf x}',{\lambda}')$,
arbitrarily close to $({\bf x},{\lambda})$, for which $x'_m\neq x_m$, for some user $m$.
Since $({\bf x}',{\lambda}')$ is arbitrarily close to $({\bf x},{\lambda})$, by continuity
we conclude that $f_{n,i}({\bf x}',\lambda'_i)>0$ if $f_{n,i}({\bf x},\lambda_i)>0$.
Hence, $x'_{n,i}>0$ only if $f_{n,i}({\bf x},\lambda_i)=0$ (c.f. \eqref{KKT2_2}).
In the same way, $\lambda'_{i,r}>0$ only if $f_{i,r}({\bf x}_i)=0$.
On the other hand, the assumption that $({\bf x}',{\lambda}')$ is a solution to Problem~\ref{P3},
implies that $f_{i,r}({\bf x}'_i)=0$, for every server $i$ and resource $r$ with $\lambda'_{i,r}>0$ (c.f. \eqref{KKT2_1}). Hence,
\be
\Delta f_{i,r} := f_{i,r}({\bf x}'_i) - f_{i,r}({\bf x}_i)=0,~~\forall i,r:\lambda'_{i,r}>0. \label{Th_U_1}
\ee
Furthermore, \eqref{KKT2_2} implies that $f_{n,i}({\bf x}',\lambda'_i)=0$, for every user $n$ and server $i$ with $x'_{n,i}>0$. Hence,
\be
\Delta f_{n,i} := f_{n,i}({\bf x}',\lambda'_i) - f_{n,i}({\bf x},\lambda_i) = 0,~~\forall n,i:x'_{n,i}>0. \label{Th_U_2}
\ee

Let $\delta x_{n,i}:=x'_{n,i}-x_{n,i}$ and $\delta\lambda_{i,r}:=\lambda'_{i,r}-\lambda_{i,r}$.
Since $({\bf x}',{\lambda}')$ could be arbitrarily close to $({\bf x},{\lambda})$,
it follows from \eqref{Th_U_1} and \eqref{Th_U_2} that:
\be
\Delta f_{i,r} &    =   & -\sum_n d_{n,r}\delta x_{n,i} = 0,~~\forall i,r:\lambda'_{i,r}>0,\label{Th_U_3}\\\nonumber
\Delta f_{n,i} & \simeq &
       \frac{\alpha_i \delta x_n}{x_n}\left[\frac{\phi_n^{\alpha_i}\gamma_{n,i}^{\alpha_i-1}A_i^{\alpha_i}}{x_n^{\alpha_i}}\right]+\frac{B_i \delta x_n}{x_n^2}\\
               & + &  \sum_r d_{n,r}\delta\lambda_{i,r}=0,~~\forall n,i:x'_{n,i}>0\label{Th_U_4},
\ee
where $\delta x_n:=\sum_i\delta x_{n,i}$. Let $r^*(i)$ denote the only resource at server $i$ for which $\lambda_{i,r^*(i)}>0$.
The fact that $f_{n,i}({\bf x},\lambda_i)=0$ for every user $n$ with $x'_{n,i}>0$, implies that:
\be
d_{n,r^*(i)}\lambda_{i,r^*(i)} = \frac{\phi_n^{\alpha_i}\gamma_{n,i}^{\alpha_i-1}A_i^{\alpha_i}}{x_n^{\alpha_i}} + \frac{B_i}{x_n}.\label{Th_U_5}
\ee
Hence, we may rewrite \eqref{Th_U_4} as:
\be
 d_{n,r^*(i)}\delta\lambda_{i,r^*(i)} + \omega_{n,i}\delta x_{n} = 0, ~~\forall n,i:x'_{n,i}>0,\label{Th_U_6}
\ee
where, $\omega_{n,i}$ is defined as:
\be
\omega_{n,i} := \left[\frac{\alpha_id_{n,r^*(i)}\lambda_{i,r^*(i)}}{x_n}+\frac{B_i(1-\alpha_i)}{x_n^2}\right].\label{Th_U_7}
\ee

In the following, we check the possibility for existence of $({\bf x}',{\lambda}')$ such that \eqref{Th_U_3} and \eqref{Th_U_6} are established.
In general, one may partition the set of users, and the set of servers respectively, into $L$ partitions,
$\mathcal{N}=\{\mathcal{N}_1, \mathcal{N}_2,\cdots, \mathcal{N}_L\}$ and
$\mathcal{K}=\{\mathcal{K}_1,\mathcal{K}_2,\cdots, \mathcal{K}_L\}$,
such that $x'_{n,i}=0$ for each user $n\in\mathcal{N}_l$ and for every server $i\notin\mathcal{K}_l$.
Here, without loss of generality, we assume that $L=1$ is the greatest number of partitions which could be found [Otherwise, we should consider each partition separately]. 

By assumption, $\sum_i\delta x_{m,i}=\delta x_m\neq 0$ for user $m$. Without loss of generality assume that $\delta x_m>0$.
From \eqref{Th_U_6} it follows that $\delta\lambda_{i,r^*(i)}<0$, for every server $i$ for which $x'_{m,i}>0$.
Furthermore, \eqref{Th_U_6} implies that $\delta x_{n}>0$ for every user $n$,
for which $x'_{m,j}x'_{n,j}>0$ for some server $j$.
Given that all users and all servers reside in the same partition, it follows that
$\delta x_n>0$ for all users, and $\delta\lambda_{i,r^*(i)}<0$ for all servers.

Let $\mathcal{S}$, $S:=|\mathcal{S}|$, denote the set of servers for which $\delta{\bf x}_i=[\delta x_{n,i}]\neq0$.
For every server $i\in\mathcal{S}$, \eqref{Th_U_3} implies that there exists some user $p$ for which $\delta x_{p,i}<0$.
On the other hand, for each user $n$ there exists some server $i$ for which $\delta x_{n,i}>0$.
Hence, $x'_{n,i}>0$ for at least $N+S$ pairs of users and servers, for which \eqref{Th_U_6} should be established.

To present the system of equations in \eqref{Th_U_6} in a matrix form, let define $\chi_i:=\{n\mid x'_{n,i}>0\}$.
Also, define $W_i=\text{diag}([\omega_{n,i}])$, and $V_i=W_i(\{n\in\chi_i\},:)$ as a sub-matrix of $W_i$,
consisting of a subset of rows in $W_i$ which corresponds to users $n\in\chi_i$.
Then, the system of equations in \eqref{Th_U_6} can be written as:
\be
\begin{bmatrix}[cccc|c]
{\bf d}^*_1 &         0     & \cdots &         0     &  V_1  \\
0           &  {\bf d}^*_2  & \cdots &         0     &  V_2  \\
\vdots      &     \vdots    & \ddots &     \vdots    & \vdots\\
0           &         0     & \cdots &  {\bf d}^*_S  &  V_S  \\
\end{bmatrix}
\begin{bmatrix}
\delta{\lambda}_{1,r^*(1)}       \\
\vdots                      \\
\delta{\lambda}_{S,r^*(S)}       \\
\delta{\bf x}
\end{bmatrix} = 0,\label{matrix_system1}
\ee
where ${\bf d}^*_i:=[d_{n,r^*(i)} \mid n\in\chi_i]$, and $\delta{\bf x}:=[\delta x_n\mid n\in\mathcal{N}]$.
Here, without loss of generality, we have assumed that servers in $\mathcal{S}$ are indexed from 1 to $S$.
As a short hand notation, we denote the coefficient matrix in \eqref{matrix_system1} by $[D\mid V]$.
In the following we show that the coefficient matrix in \eqref{matrix_system1} has a column rank of $N+S$.

For the matrix $V$, we know that exactly one element is non-zero in each row.
Furthermore, there exists at least a non-zero element in each column, owing to the fact that for each user $n$,
$x'_{n,i}>0$ for at least one server $i$. Hence, we may find $N$ linear independent rows in $V$, or equivalently $N$ linear independent columns, which form this matrix. It means that $V$ is a full rank matrix which has a column rank of $N$. Furthermore, from \eqref{Th_U_6} and \eqref{Th_U_7}, it can be observed that none of the columns (or group of columns) in $V$ can be expressed as a linear combination of columns of $D$, provided that $B_i>0,~\forall i$. Also, none of the columns (or group of columns) in $D$ can be expressed in terms of columns in $V$, owing to the assumption that all users and all servers reside in the same partition. Accordingly, the coefficient matrix, $[D,V]$, has a column rank of $N+S$. Hence, the only solution to \eqref{matrix_system1} is $\delta{\bf \lambda}_i=0,~\forall i$, and $\delta{\bf x}=0$. However, this contradicts the assumption that $\delta x_{m}\neq0$ for user $m$.
\end{proof}

\begin{proof}[\bf Proof of Lemma~\ref{lem_descent}]
$({\bf v}_{x}^h)_i$ is the projection of $-\nabla_{{\bf x}_i}\Psi$ onto $\Omega_i$,
when:
\be
({\bf v}_{x}^h)_i =\argmin_{{\bf v}_i\in\Omega_i}\frac{1}{2}\|{\bf v}_i+\nabla_{{\bf x}_i}\Psi\|^2.\label{Th8_0}
\ee
According to the constrained optimality theorem, $({\bf v}_{x}^h)_i$ is a solution to minimization in \eqref{Th8_0} if and only if~\cite{boyd}:
\be
[({\bf v}_{x}^h)_i+\nabla_{{\bf x}_i}\Psi]^T[{\bf v}_i-({\bf v}_{x}^h)_i]\ge0,\label{Th_8_1}
\ee
for every ${\bf v}_i\in\Omega_i$. Substituting ${\bf v}_i=0$ results in:
\be
({\bf v}_{x}^h)_i^T\nabla_{{\bf x}_i}\Psi\le -\|({\bf v}_{x}^h)_i\|^2
\ee
which implies that ${\bf v}_{x}^h$ is a strictly descent direction, unless ${\bf v}_{x}^h = 0$.
On the other hand, for every $i,r$ with $({\bf v}_{\lambda}^h)_{i,r}\neq0$, it follows from \eqref{lambda_non_feas_dir} and \eqref{lambda_feas_dir} that:
\be\nonumber
 \frac{\partial\Psi}{\partial\lambda_{i,r}} \times ({\bf v}_{\lambda}^h)_{i,r} = -\left(\frac{\partial\Psi}{\partial\lambda_{i,r}}\right)^2\qquad\qquad\qquad\quad\quad\\
+~\beta^h_{i,r}\frac{\partial\Psi}{\partial\lambda_{i,r}}~\text{U}\left(-\beta^h_{i,r}\frac{\partial\Psi}{\partial\lambda_{i,r}}\right)<0,\label{Th_8_2}
\ee
which implies that ${\bf v}_{\lambda}^h$ is a strictly descent direction, unless ${\bf v}_{\lambda}^h=0$.
\end{proof}

\begin{proof}[\bf Proof of Lemma~\ref{lem_eqv}]
By contradiction, assume that $\nabla_{\bf x}\Psi({\bf x}^h,{\bf \lambda}^h)\neq0$ and ${\bf v}^h=[{\bf v}_{x}^h;{\bf v}_{\lambda}^h]=0$. Consider some server $i$ for which $\nabla_{{\bf x}_i}\Psi({\bf x}^h,{\bf \lambda}^h)\neq 0$. The projection of $-\nabla_{{\bf x}_i}\Psi$ onto $\Omega_i$ can be found by solving the optimization in \eqref{Th8_0}. The problem in \eqref{Th8_0} is to minimize a convex function over an affine feasible region (see the definition of $\Omega_i$ in \eqref{projection_set}). The KKT conditions for this problem imply that:
\be
&&0 \le \varsigma_{i,r} \perp -({\bf v}_{x}^h)_i^T{\bf d}_{r}\ge0,~r\in\mathcal{R}_i^h:\lambda_{i,r}=0,\label{Th9_01}\quad\qquad\\
&&({\bf v}_{x}^h)_i^T{\bf d}_{r}=0,~r\in\mathcal{R}_i^h:\lambda_{i,r}>0,\label{Th9_02}\\
&&({\bf v}_{x}^h)_i + \nabla_{{\bf x}_i}\Psi + \sum_{r\in\mathcal{R}_i^h}\varsigma_{i,r}{\bf d}_{r}=0.\label{Th9_03}
\ee
Here, without loss of generality, we assume that the vectors in $\{{\bf d}_{r}\mid r\in\mathcal{R}_i^h \}$ are linear independent [Otherwise we may restrict the conditions in $\Omega_i$ to a subset $\mathcal{A}\subseteq\mathcal{R}_i^h$ for which $\{{\bf d}_{r}\mid r\in\mathcal{A}\}$ are linear independent.]. The projection of $-\nabla_{{\bf x}_i}\Psi$ onto $\Omega_i$ is zero, $({\bf v}_x^h)_i=0$, when:
\be
-\nabla_{{\bf x}_i}\Psi = \sum_{r\in\mathcal{R}_i^h}\varsigma_{i,r}{\bf d}_{r},\label{Th9_1}
\ee
for some scalars $\varsigma_{i,r}\in\mathbb{R}$ satisfying \eqref{Th9_01}.
In other words, ${\bf v}_{x}^h=0$ if and only if there exists some scalars $\varsigma_{i,r}\in\mathbb{R}$,
so that \eqref{Th9_1} is established for all servers with $\nabla_{{\bf x}_i}\Psi\neq 0$.

On the other hand, ${\bf v}^h_{\lambda}=0$ requires that $(\tilde{\bf v}_{\lambda}^h)_{i,r}=0$, for every $r\in\mathcal{R}_i^h$ with $\lambda_{i,r}>0$ (c.f. \eqref{lambda_feas_dir}).
However, $(\tilde{\bf v}_{\lambda}^h)_{i,r}=0$ only when $\partial\Psi/\partial\lambda_{i,r}=0$ and
$\beta_{i,r}^h=-{\bf d}_{r}^T\nabla_{{\bf x}_i}\Psi=0$ (c.f.~\eqref{lambda_non_feas_dir}).
it follows from \eqref{Th9_1} that:
\be
\beta_{i,r}^h=-{\bf d}_{r}^T\nabla_{{\bf x}_i}\Psi={\bf d}_{r}^T\sum_{r'\in\mathcal{R}_i^h}\varsigma_{i,r'}{\bf d}_{r'}. \label{Th9_2}
\ee
Since the vectors $\{{\bf d}_{r}\mid r\in\mathcal{R}_i^h \}$ are linear independent,
It follows from \eqref{Th9_2} that $\beta_{i,r}^h=0$ only when $\varsigma_{i,r}=0$.
Hence, $\varsigma_{i,r}=0$ for every $r\in\mathcal{R}_i^h$ with $\lambda_{i,r}>0$.
So, the right hand side of \eqref{Th9_1} would be zero under Assumption~\ref{assmp_degen}.
That is, $\nabla_{{\bf x}_i}\Psi=0,~\forall i$. This, however, violates the contradictory assumption.

Now let Assumption~\ref{assmp_degen} do not hold, so that for server $i$ with $\nabla_{{\bf x}_i}\Psi\neq 0$ there exists some resource $r\in\mathcal{R}_i^h$ with $\lambda_{i,r}=0$ and $\varsigma_{i,r}>0$ (c.f. \eqref{Th9_01}). Let $\bar{\mathcal{R}}_i^h:=\{r\in\mathcal{R}_i^h\mid \lambda_{i,r}=0\text{ and }\varsigma_{i,r}>0\}$.
Then, \eqref{Th9_1} implies that $\partial\Psi/\partial x_{n,i}<0$, for every user $n$ with $d_{n,r}>0,~r\in\bar{\mathcal{R}}_i^h$. On the other hand, $\partial\Psi/\partial x_{n,i}=0$ for every user $n\notin\mathcal{N}_i$ (c.f.~\eqref{partial_1}). Hence, \eqref{Th9_1} may not be established if there exists some user $n\notin\mathcal{N}_i$ with $d_{n,r}>0,~r\in\bar{\mathcal{R}}_i^h$, which is the case under Assumption~\ref{assmp_simple}. In other words, \eqref{Th9_1} may not be established under Assumption~\ref{assmp_simple}.
\end{proof}
.

\end{document}